\documentclass[unnumsec,webpdf,modern,medium]{template}

\onecolumn
\usepackage{fancyhdr,amsmath,amssymb, amsthm,mathtools,dsfont}
\usepackage{amsfonts}
\usepackage{bbm,bm}
\usepackage{graphicx}
\usepackage{algorithm}
\usepackage[export]{adjustbox}
\usepackage[british]{babel}

\theoremstyle{thmstyleone}
\newtheorem{theorem}{Theorem}
\newtheorem{proposition}[theorem]{Proposition}
\theoremstyle{thmstyletwo}

\newtheorem{remark}{Remark}
\theoremstyle{thmstylethree}

\newcommand{\D}{{\rm d}}
\newcommand{\iid}{\stackrel{\mbox{\scriptsize iid}}{\sim}}
\newcommand{\ind}{\stackrel{\mbox{\scriptsize ind}}{\sim}}
\newcommand{\myequal}[1]{\mathrel{\overset{\makebox[0pt]{\mbox{\normalfont\tiny\sffamily #1}}}{=}}}
\newcommand{\indic}{\mathbb{I}}

\renewcommand{\P}{\mathbb{P}} 

\newcommand{\E}{E}

\renewcommand{\P}{\mathbb{P}} 
\newcommand{\prob}[1]{\P\left(#1\right)}

\newcommand{\mean}[1]{\E\left[\,#1\right]}

\newcommand{\pityor}{\operatorname{PYP}}
\newcommand{\Mmax}{\operatorname{M}_{\operatorname{max}}}
\newcommand{\Mr}{\operatorname{M}_{r} }
\newcommand{\Mrr}{\operatorname{M}_{1} }

\newcommand{\Fcr}{\mathscr{F}}
\newcommand{\Lcr}{\mathscr{L}}
\newcommand{\Pcr}{\mathscr{P}}

\newcommand{\muprime}{\mu^\prime}

\usepackage{xcolor}

\definecolor{RevA}{HTML}{B00020} 

\newif\ifshowrevs
\showrevstrue

\begin{document}
\pagestyle{plain}
\copyrightyear{}
\pubyear{}
\appnotes{$_.$}

\firstpage{1}

\setcounter{secnumdepth}{4}

\fontsize{9.9}{13.3}\selectfont

\title[Bayesian nonparametric inference for \\ modal missing species and features]{Bayesian nonparametric inference for  \\ modal missing species and features}

\author[1]{Alessandro Colombi}
\author[2]{Mario Beraha}
\author[1]{Daniele Durante}
\author[3]{Stefano Favaro}

\authormark{Colombi, Beraha, Durante and Favaro}

\address[1]{\orgdiv{Department of Decision Sciences and Bocconi Institute for Data Science and Analytics}, \orgname{Bocconi University},  \country{IT}}
\address[2]{\orgdiv{Department of Economics, Management and Statistics},  \orgname{University of Milano-Bicocca},  \country{IT}}
\address[3]{\orgdiv{Department of Economics and Statistics}, \orgname{University of Torino and
Collegio Carlo Alberto},  \country{IT}}

\corresp[]{Correspondence: Alessandro Colombi, Department of Decision Sciences and Bocconi Institute for Data Science and Analytics, Bocconi University, IT. Email: \href{email: alessandro.colombi@unibocconi.it}{alessandro.colombi@unibocconi.it}
\vspace{-10pt}
}

\abstract{
Species and feature sampling problems arise naturally whenever each observed unit is associated with one~or~more labels from a countable alphabet, and inference focuses on the unobserved portion of the distribution over such an alphabet. Within this framework, recent contributions have shifted attention from inference on~the~total~probability mass over unseen labels to distribution-free confidence intervals for the largest unobserved label probability (i.e., the modal missing probability), thereby providing more refined information on whether~the~missing mass is concentrated on a few high-prevalence unseen labels or distributed across many negligible ones. Besides lacking~a unified framework for species and features, these approaches employ a worst-case perspective that causes substantial information loss and produces overly-conservative intervals. We address~these~limitations~through~a unified model-based framework for inference on modal missing probabilities in both species and feature settings, which leverages a flexible Bayesian nonparametric formulation to localize uncertainty around models compatible with the observed data. This leads to sharper closed-form credible intervals that effectively exploit prior information and observed data, while preserving the theoretical frequentist properties and robustness of distribution-free intervals. Simulation studies confirm these improvements, while an organized crime application illustrates how our contribution has the potential to reshape law-enforcement decision-making in investigations.
}
\keywords{Bayesian nonparametrics, credible interval, criminology, modal missing probability}

\maketitle

\fontsize{9.9}{17.2}\selectfont

\section{\large 1. Introduction}\label{sec:intro}

\vspace{8pt}
\subsection{\large 1.1 Background and motivation}\label{sec_1.1i}
Inferring the unseen portion of a distribution from~a finite sample $X^n=(X_1,\ldots,X_n)$~over~a~countable alphabet $\Xi=\{\xi_1,\xi_2,\ldots\}$  is a classical, yet central, statistical problem in several scientific domains. Within this framework, considerable attention has been devoted to the total probability assigned to the unobserved labels, the so-called \textit{missing mass} \citep[][]{goodturing53,GoodToulmin}. Such an overarching focus is motivated by the direct predictive interpretation~of~the missing mass as the probability that the $(n+1)$th draw from the alphabet yields a label~that~has~not been observed in the current sample $X^n$. This central predictive interpretation has stimulated not only substantial methodological advancements \citep[e.g.,][]{bunfi93,Lij07,favaro2012,Orlitsky2015,oha19,aye21,BalocchiFavaro24} and successful applications across several important  fields \citep[e.g.,][]{efron76,Moritz2008,Mao2004,Eckburg2005,Bub13,Dol23,ane17,Noorani2026}, but also recent extensions aimed at studying alternative functionals of the unseen part of the distribution, which provide a finer understanding of what remains unobserved \citep{Chandra2024,naulet24,Painsky24,ColombiFreq}. In fact,~although the missing mass remains of central interest, such a quantity summarizes the unseen portion~of~the~distribution solely through its total probability, without revealing whether such a probability is concentrated~on~a single relevant unobserved label or distributed across several individually negligible ones. Distinguishing between these two scenarios is central to large-alphabet inference and to the design of sequential discovery procedures and stopping rules, where the goal is to determine whether the unobserved~portion of the distribution  contains a high prevalence unseen label. As illustrated~by~the~motivating~criminology application in Section~\ref{sec:appl}, this perspective can reveal, for example, whether~the yet-unobserved portion of a criminal organization contains highly-populated, and therefore operationally significant, unseen operative branches or only negligible ones \citep[see, e.g.,][]{diviak2022key}. The former case would justify further investigative efforts, while the latter  suggests that additional investigations are unlikely to uncover unobserved branches of practical relevance. As such, in the specific criminology setting,~this perspective has the potential~to fundamentally reshape law-enforcement decision-making by providing a principled framework for assessing expected gains~of~investigative~efforts.

Addressing the above objectives requires shifting the focus from the total missing mass~to~the largest probability assigned to an unseen label, which we term \emph{modal missing probability}. Despite its relevance in various scientific domains, this quantity has been formalized and studied~only recently in a limited number of contributions  developing one-sided confidence~intervals in both species \citep{Painsky24} and features \citep{ColombiFreq} sampling problems~(see Section~\ref{s12}~for details on these two settings). While these contributions represent an important step beyond~the classical missing mass problem, the resulting intervals are derived under a worst-case scenario analysis over the entire class of distributions on the alphabet. This yields solutions~that~are~either data-independent or rely only on very coarse summaries of the sample, thereby~leading~to~overly-conservative intervals and a substantial loss of information. Furthermore, the modal missing probability problem is addressed separately for the species and features settings, with no attempt to provide a unified~framework. As we clarify within Section~\ref{s12}, such a framework is not only conceptually natural, but would also offer key methodological and theoretical advantages.

Motivated by the above discussion, we develop in Sections~\ref{sec_2_mod}--\ref{sec_3_conf} a unified framework for inference on the~modal missing probability in both species and features settings, replacing the worst-case scenario approach with a model-based solution that quantifies uncertainty under a structured,~yet flexible, characterization of the underlying distribution over the alphabet  $\Xi$. This is achieved by eliciting~suitable Bayesian nonparametric priors \citep[see, e.g.,][]{hjort2010bayesian} over such a distribution, which  can incorporate structural information on the underlying probabilities, while yielding posteriors that  localize uncertainty around models compatible with the observed~data.~As~a~consequence, such a model-based perspective results in more informative one-sided credible intervals for the modal missing probability that effectively exploit the observed data, while preserving the valid frequentist coverage and overall robustness of those obtained under worst-case scenario~analyses. Further details on our unified treatment of modal missing probabilities under a model-based Bayesian nonparametric perspective can be found within Sections~\ref{s12}~and \ref{s13}, respectively. Section~\ref{s14} introduces, instead, the motivating organized crime application. Besides demonstrating the practical advantages of the proposed approach over state-of-the-art alternatives, this~application~should also be regarded as a stand-alone contribution to an area in which the modal missing probability problem has largely been overlooked, despite its potential to support law enforcement agencies in improving investigative strategies \citep[][]{campana2022studying,diviak2022key}. Proofs can be found in the Supplementary Material.

\vspace{11pt}
\subsection{\large 1.2 A unified treatment of the modal missing probability in species and features settings}\label{s12}
In developing a unified model-based treatment of the modal missing probability problem for both species and features settings, we first recall that designing an appropriate model for~the~unseen portion of the population depends fundamentally on the type of information that the sample $X^n$ carries about the alphabet $\Xi$. In the \emph{species} setting, each observation corresponds to exactly one label in $\Xi$. Thus, if $X_i\in\Xi$, the quantities $p_j := \mathbb P(X_i=\xi_j)$, $j\geq 1$, define a probability distribution on $\Xi$. Such a setting covers, for instance, sampling individual animals by species, recording surnames in a population, or observing the word appearing at a given position~in~a~text.~By~contrast,~in~the \emph{features} setting, each observation may  simultaneously exhibit several labels from $\Xi$. In~this case, $X_i$ is naturally identified with a finite subset of $\Xi$. Hence, $p_j := \mathbb P(\xi_j\in X_i)$, $j\geq 1$, denote inclusion probabilities rather than the masses of a probability distribution on $\Xi$.~This~latter setting is appropriate for presence--absence data, such as taxa observed in environmental samples, document--term matrices in natural language processing, or actor--event networks in which each single event may involve several participants.

Although the structural differences between the two aforementioned settings necessitate distinct statistical models, the modal missing probability problem in both the species and features scenarios can be cast as the same inferential question within a unified framework. More specifically, let $n_j$ denote the number of times label $\xi_j$ is observed in the sample $X^n$, so that $n_j>0$ if $\xi_j$ has been observed, and $n_j=0$ otherwise. The modal missing probability is defined as
\begin{equation}\label{eqn:Mmax_def}
\Mmax\,\coloneqq\, \Mmax(X^n;p_j, j \geq 1)=\max\nolimits_{j:n_j=0} p_j,
\end{equation}
which corresponds to the largest probability assigned to a label that has not  yet appeared~in the observed sample. Definition \eqref{eqn:Mmax_def} clarifies that, although the species and feature  settings differ at the modeling level, the inferential target is shared among the two. It is therefore natural to seek statistical methods that can address the modal missing probability problem under these~two settings within a unified framework. This perspective is also useful in applications, where the same dataset may often be interpreted through either lens, depending on the scientific question of interest. For instance, recalling the motivating organized crime application~in~Section~\ref{sec:appl}, law enforcement is often interested in both previously unseen, yet highly populated, operative~branches, and also in missing criminals who actively attend summits of the organization \citep[][]{campana2022studying,diviak2022key}. These objectives can be recasted under a species and features setting, respectively, but the common focus is on quantifying whether~the unobserved part of the organization may~still contain branches and criminals of substantive importance, rather than negligible ones.

\vspace{20pt}
\subsection{\large 1.3 Model-based Bayesian nonparametric inference on the modal missing probability}\label{s13}
As anticipated in Section~\ref{sec_1.1i}, we address the modal missing probability problem under~both~species and features settings through a model-based Bayesian nonparametric perspective, which~incorporates meaningful, yet flexible, structure for providing more informative inferences relative~to those obtained by recent worst-case scenario analyses \citep{Painsky24, ColombiFreq}.

The above objective is achieved by assuming that the sample $X^n$,  $n\geq1$, is conditionally i.i.d. given an almost surely discrete random measure  $\mu=\sum_{j\geq 1}p_j\delta_{\xi_j}$ supported on $\Xi$, whose prior~is inherited from Bayesian nonparametrics. In particular, when the probabilities are expected to display geometric or power-law decay, we take $\mu$ to be a Pitman--Yor process in the species~setting \citep{PitmanYor97Annals}, and a three-parameter Beta process under the feature setting \citep{Teh09}. Conversely, when the distribution is believed to have a finite support, possibly with an unknown total number of labels, we adopt mixtures of Dirichlet distributions under the species setting \citep{deblasi2015,argiento2022annals} and mixtures of Beta distributions in the feature setting \citep{Ghilotti2026}. Besides providing popular constructions within the Bayesian nonparametric literature, these prior choices yield posteriors that achieve~a~flexible characterization of the underlying data-generating mechanism, while facilitating tractable derivation of a valid one-sided credible interval for $\Mmax$, namely  a credible upper bound such~that,~for~fixed level $\alpha\in(0,1)$, the posterior probability that $\Mmax$ exceeds this bound, given $X^n$, is at most $\alpha$. As~outlined in Section~\ref{sec_3_conf}, this is obtained by first  approximating the modal missing probability via suitable $\ell_r$-norms of the vector of unseen probabilities, referred to as $r$th-order missing~mass \citep{Chandra2024,naulet24}, and then deriving analytical expressions for the posterior means associated with these $r$th-order missing masses to ultimately obtain closed-form valid  credible intervals for $\Mmax$.

Technically, the above  $\ell_r$-norms framework is obtained by transferring analytical strategies~employed in worst-case scenario analyses \citep{Painsky24} to our model-based Bayesian nonparametrics solution of the modal missing probability problem. As clarified in Sections~\ref{sec_3_conf} and \ref{sec:comp_and_infer},~such a transfer of techniques is, however, not direct, but rather requires conceptual and methodological innovations. In fact, we do not compute  expectations under a worst-case data-generating~mechanism, but rather  average with respect to  a posterior distribution localizing uncertainty around probability structures which are compatible with the observed data. This yields conceptually and technically different credible intervals for $\Mmax$ replacing distribution-free guarantees of existing frequentist procedures with Bayesian adaptive model-based uncertainty quantification~that~results in sharper, and hence, more informative intervals. Crucially, as proved in Section~\ref{sec:freq_properties_all}~these intervals have, under mild assumptions, asymptotically valid frequentist coverage, and hence, can be interpreted also as asymptotic one-sided confidence intervals. To the best of our knowledge, this is the first Bayesian model-based approach to the construction of valid credible intervals~for~$\Mmax$.

\vspace{15pt}
\subsection{\large 1.4 Application to organized-crime data}\label{s14}
Besides providing methodological and theoretical advancements in inference on modal missing probabilities, our contribution also aims to clarify the substantial gains that result from applying this framework to law enforcement investigations of hard-to-untangle criminal organizations. As discussed in e.g.,  \citet{campana2022studying} and \citet{diviak2022key}, the covert nature inherent to organized crime implies substantial data incompleteness problems in law-enforcement investigations, which lead to a partial view of the whole illicit organization, both in terms of its operative branches and affiliates. Addressing this boundary specification problem is of paramount importance in criminology, since its solution could facilitate the design of stopping rules reconciling the trade-off between the attempt to achieve a comprehensive view of the criminal organization and the resources required to attain it. In Section~\ref{sec:appl}, we show that, although largely overlooked in these contexts, model-based inference on modal missing probabilities provides a natural, principled and effective solution to the boundary specification problem, and hence, to the design of stopping rules for assessing when the probability of discovering relevant portions of the organization has become too negligible to ultimately justify further investigative efforts.

The practical and conceptual advantages of the above perspective are illustrated in Section~\ref{sec:appl} with a focus on a Mafia-type organization (\textit{'Ndrangheta}) operating in the Milan area, Italy, whose activities were documented during a major law-enforcement~investigation~named~\textit{Operazione Infinito} \citep{Calderoni2017}. The judicial documents resulting from this operation contain,~among others, information on the \textit{'Ndrangheta} affiliates identified at different occasions during  investigations, together with the corresponding membership to the discovered branches of the organization, known as \textit{locali} (i.e., operative territorial subgroups). As~noted~in the judicial documents, despite the substantial efforts of \textit{Operazione Infinito}, an unseen portion of the illicit organization~is~still unavoidable, both in terms of relevant missing \textit{locali} and single affiliates. We address these two cases within the modal missing probability framework in the species and features settings, respectively. The former assesses whether, given current evidence, an unobserved, yet highly populated, \textit{locale} may still be discovered in future investigations,~while~the~latter quantifies  presence of previously unseen active affiliates who could appear on multiple occasions~if the investigation were to continue. As shown in Section~\ref{sec:appl}, the sharper intervals derived under the proposed model-based Bayesian nonparametric procedure provide a more informative quantification of the boundaries of the monitored organization, while enabling the design of effective stopping rules that substantially improve upon the overly conservative ones of worst-case analyses.

\vspace{25pt}
\section{\large 2. Bayesian Nonparametric Modeling of Species and Features}\label{sec_2_mod}

\vspace{9pt}
\subsection{\large 2.1 The species sampling setting}\label{sec:species}
Let $\Xi=\{\xi_1,\xi_2,\ldots\}$ denote a countable alphabet of labels, and let $X^n=(X_1,\ldots,X_n)$~be~a~sample from a discrete distribution $\mu=\sum_{j\geq 1}p_j\delta_{\xi_j}$ supported on $\Xi$, where $p_j=\prob{X_i=\xi_j}\geq 0$, and $\sum_{j\geq 1}p_j=1$. Since the sampling distribution is discrete, ties are expected to occur in the sample. We denote by \(K_n\) the number of distinct labels in $X^n$, and by $n_j$ the number of occurrences of label $\xi_j$, so that $\sum_{j\geq 1}n_j=n$. As anticipated in Section~\ref{sec:intro}, under these settings we employ~a~Bayesian model-based  perspective assuming
\begin{equation}\label{eqn:Definetti_model}
\begin{split}
(X_1,\ldots,X_n \mid \mu) &\iid \mu, \\
\mu &\sim \Pcr,
\end{split}
\end{equation}
where $\Pcr$ is a flexible prior distribution on $\mu$, chosen within a general class of Bayesian nonparametrics constructions. Formulation \eqref{eqn:Definetti_model} implies that $(X_i)_{i\geq 1}$ is an exchangeable sequence with~de Finetti measure $\Pcr$ \citep{deFinetti1937}. Notice that specifying $\Pcr$ is equivalent to eliciting a joint prior distribution for the labels $(\xi_j)_{j\geq1}$, the corresponding probabilities $(p_j)_{j\geq1}$, and the alphabet size $M$, which is typically unknown and may be unbounded. In the following we assume that the labels $\xi_1,\ldots,\xi_M$~are~independent of the probabilities, and i.i.d. distributed from a diffuse prior $P_0$, which is otherwise left unspecified. This choice is  supported by the fact that,~in~line~with~standard formulations of both species and feature sampling problems,~the~specific~identities~of~the~labels are irrelevant also in our analysis, serving merely as identifiers. Regarding the prior on $(p_j)_{j\geq1}$~and~$M$, we consider instead three main elicitations covering the different support regimes for $\mu$  induced by $\Xi$.  In the finite-alphabet case, such a support consists of $M<+\infty$ labels, with $M$ either known or unknown, while in the unbounded-alphabet setting, the support of $\mu$ is countably infinite.

Focusing first on  finite alphabet cases, we assign to $\bar{\mu}=((p_j)_{j\geq1},M)$ the prior on label~probabilities and alphabet size resulting from the  finite Dirichlet process, i.e.,~$\bar{\mu} \sim \operatorname{FDP}(q_M,\gamma)$ \citep[][]{argiento2022annals}. This is obtained by letting $M\sim q_M$,~where~$q_M$~is~a probability mass function defined on $\{1,2,3,\ldots\}$, and then assuming that $((p_1,\ldots,p_M)\mid M=m)\sim \operatorname{FD}_m(\gamma,\ldots,\gamma)$, where $\operatorname{FD}_m(\gamma,\ldots,\gamma)$ is 
 the standard Dirichlet distribution on the $(m-1)$-dimensional simplex.
If the true alphabet size $M$ is known, we set $q_M(m)=\delta_{M}(m)$, recovering the standard finite Dirichlet prior. Otherwise, $M$ is learned from the data. In this latter setting a sensible choice for  $q_M$ is the one-shifted~Poisson~distribution, yielding the prior $\bar{\mu}\sim \operatorname{FDP}(\Lambda,\gamma)$ \citep{argiento2022annals,colombi2025species}. Here,  \(\Lambda\) controls the prior expected alphabet size, with~higher values placing more mass on larger alphabets. The parameter $\gamma$ controls, instead, the heterogeneity of the probabilities $p_{j}$'s. Larger values of $\gamma$ produce distributions closer to the uniform on the alphabet, i.e., $p_j=1/M$, corresponding~to maximal diversity, for instance as measured by Simpson's diversity index \citep{Simpson49}. Conversely, as $\gamma$ decreases, the diversity diminishes and~the~distribution approaches a degenerate configuration  concentrating all mass on one label.

For unbounded alphabets, we consider as prior for  $\bar{\mu}$ the one induced by a Pitman--Yor process, with the discount  parameter $\sigma\in[0,1)$ and strength $\theta>0$  \citep[][]{PitmanYor97Annals}, namely,~$\bar{\mu}\sim \pityor(\sigma,\theta)$. Leveraging the stick-breaking construction \citep{pitman95,PitmanYor97Annals}, the resulting prior on the label probabilities is defined by the sequential representation $p_1=W_1$,  $p_j=W_j\prod_{l=1}^{j-1}(1-W_l)$,~$j\geq2$, where $W_j\ind \operatorname{Beta}(1-\sigma,\theta+j\sigma)$, $j\geq1$. Under this choice,~the discount parameter~$\sigma$~controls the tail behaviour of the distribution and, consequently, the richness~of~the unseen part of the alphabet along with the rate at which new labels continue to appear~as~$n$~grows. For $\sigma>0$, the ranked probabilities exhibit a power-law behaviour with exponent $\sigma^{-1}$ as $j\to+\infty$ \citep{PitmanYor97Annals}. Thus, larger values of $\sigma$ correspond to heavier tails, greater heterogeneity, and a higher proportion of rare labels. When $\sigma=0$, the prior reduces to the Dirichlet process and no longer exhibits power-law behaviour. The strength parameter $\theta$ regulates the overall~richness of the sample without changing the asymptotic tail behaviour. Specifically, for fixed $\sigma$, larger~values of $\theta$ increase the prior tendency to generate new species.

As stated in Theorem~\ref{thm:species_post}, specializing  the general Bayesian model in \eqref{eqn:Definetti_model} to the aforementioned priors yields tractable posterior distributions under the different support regimes. These posterior characterizations form the basis to obtain the improved credible intervals for $\Mmax$ developed in Section~\ref{sec_3_conf}.

\begin{theorem}[Posterior distributions.  \cite{argiento2022annals}, \cite{pitman1996}]\label{thm:species_post}
    Let $X^n$~be~a sample from the Bayesian model in  \eqref{eqn:Definetti_model} made of $K_n$ different labels $(\xi^\star_1,\ldots,\xi^\star_{K_n})$~with~corresponding frequencies $n_j\ge1$, $j=1,\ldots,K_n$. Then the posterior of $\mu$ is defined as
    \begin{equation}
        \label{eqn:species_posterior}
        (\mu\mid X^n) \,\myequal{d} \,
        \sum\nolimits_{j=1}^{K_n}\,p^*_j\delta_{\xi^\star_j} \,+\, 
        p^*\muprime\,,
    \end{equation}
   with the following characterization, depending on the choice of the prior for $\mu$:
   \begin{enumerate}
\item if $\Pcr=\operatorname{FDP}(q_M,\gamma) \times P_0$, then
     \begin{equation}
        \label{eqn:FDP_post}
        \begin{split}
            (&p^*_1,\ldots,p^*_{K_n},p^* \mid M^\star)\sim 
            \operatorname{FD}_{K_n+1}(n_1+\gamma,\ldots,n_{K_n}+\gamma,M^\star \gamma)\,,\\
           & M^\star \sim\ q_{M^\star,} \quad q_{M^\star}(m) \propto [(m+K_n)!/m!] \, q_M(m+K_n) \,  [\left( \gamma(K_n + m) \right)_{n}]^{-1}, \\
           & \muprime\,\sim\,
            \operatorname{FDP}(q_{M^\star},\gamma) \times P_0.
        \end{split}
    \end{equation}
  where $M^\star=M-K_n$ denotes the number of unobserved labels. Under the finite Dirichlet prior, $q_{M^\star} = \delta_{M-K_n}$, whereas  for the $\operatorname{FDP}(\Lambda,\gamma)$ prior,~the~quantity $q_M$ denotes the probability mass function of a one-shifted Poisson distribution.
   
       \vspace{7pt}

\item if $\Pcr=\pityor(\sigma,\theta) \times P_0$, then
 \begin{equation}
            \label{eqn:PYP_post}
            \begin{split}
            &( p^*_1,\ldots,p^*_{K_n},p^* ) \,\sim \,
            \operatorname{FD}_{K_n+1}
            \left(n_1-\sigma,\ldots,n_{K_n}-\sigma,\theta + K_n\sigma\right), \\
            &\muprime \,\sim \, \pityor(\sigma, \theta + K_n\sigma) \times P_0,
            \end{split}
        \end{equation}
\end{enumerate}
\end{theorem}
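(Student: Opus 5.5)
The two cases are handled separately. In each, the posterior is computed directly from Bayes' rule, and the $K_n$ observed atoms are separated from the unobserved remainder. Because the labels are i.i.d.\ from a diffuse $P_0$ and independent of the weights, the likelihood of $X^n$ given $\mu$ depends only on the partition of $[n]$ induced by ties and on the weights attached to the observed labels. The label identities $\xi^\star_j$ enter only through $P_0$ factors, which cancel. It is therefore enough to derive the posterior of the weight vector, together with $M$ in the first case.

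\textbf{FDP case.} First condition on $M=m\geq K_n$. The observed labels occupy $K_n$ of the $m$ atoms. Given $M=m$ and an assignment of the observed labels to atoms, the likelihood is $\prod_{j=1}^{K_n} p_{j}^{n_j}$. Summing over the $m!/(m-K_n)!$ injective assignments and integrating against the symmetric $\operatorname{FD}_m(\gamma,\ldots,\gamma)$ density gives the marginal likelihood
\[
\frac{m!}{(m-K_n)!}\,\frac{\Gamma(\gamma m)}{\Gamma(\gamma m+n)}\prod_{j=1}^{K_n}\frac{\Gamma(\gamma+n_j)}{\Gamma(\gamma)}.
\]
Multiplying by $q_M(m)$ and reindexing with $m^\star=m-K_n$ yields the stated $q_{M^\star}$, since $\Gamma(\gamma m)/\Gamma(\gamma m+n)=1/(\gamma m)_n$.

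Conditional on $M^\star$, conjugacy gives the posterior of the full weight vector as $\operatorname{FD}_{M}(n_1+\gamma,\ldots,n_{K_n}+\gamma,\gamma,\ldots,\gamma)$. The aggregation property of the Dirichlet then shows that $(p^*_1,\ldots,p^*_{K_n},p^*)$, with $p^*$ the total unseen mass, is $\operatorname{FD}_{K_n+1}(\ldots,M^\star\gamma)$. The normalized unseen weights are independent of this vector and $\operatorname{FD}_{M^\star}(\gamma,\ldots,\gamma)$ distributed, and they carry fresh i.i.d.\ $P_0$ labels. This is exactly $\muprime\sim\operatorname{FDP}(\delta_{M^\star},\gamma)\times P_0$ conditionally on $M^\star$. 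Mixing over $q_{M^\star}$ gives the stated representation, where the $\operatorname{FDP}(q_{M^\star},\gamma)$ notation for $\muprime$ should be read jointly with the $p^*$ that uses the same $M^\star$. Known $M$ corresponds to the degenerate choice $q_M=\delta_M$.

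\textbf{PYP case.} This is Pitman's (1996) posterior characterization. I would prove it by induction on $n$, using the stick-breaking representation and size-biased permutation. The key fact is the following. If $\mu\sim\pityor(\sigma,\theta)$ and $X_1\mid\mu\sim\mu$, then the atom of $X_1$ is a size-biased pick. By invariance of the GEM$(\sigma,\theta)$ law under size-biased permutation, its weight is $W_1\sim\operatorname{Beta}(1-\sigma,\theta+\sigma)$. The renormalized remainder is then $\pityor(\sigma,\theta+\sigma)$, independent of $W_1$ and with i.i.d.\ $P_0$ labels. So $\mu\mid X_1 \myequal{d} W_1\delta_{X_1}+(1-W_1)\muprime$.

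For the inductive step, assume the representation with parameters $\operatorname{FD}_{K_n+1}(n_j-\sigma,\ldots,\theta+K_n\sigma)$ and $\muprime\sim\pityor(\sigma,\theta+K_n\sigma)$. Draw $X_{n+1}$ from this posterior. If $X_{n+1}$ hits an observed atom $j$, Dirichlet conjugacy raises $n_j-\sigma$ by one. If it falls in the $p^*$ part, conjugacy raises the last parameter by one. The one-step fact applied to $\muprime$ then splits the unseen mass $p^*$ as $p^*W$ and $p^*(1-W)$, with $W\sim\operatorname{Beta}(1-\sigma,\theta+K_n\sigma+\sigma)$. The Dirichlet--Beta splitting property of the updated vector gives the new Dirichlet with parameters $(\ldots,1-\sigma,\theta+(K_n+1)\sigma)$, and the remainder is $\pityor(\sigma,\theta+(K_n+1)\sigma)$.

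The main obstacle is this splitting step. One must check that the updated Dirichlet parameter $\theta+K_n\sigma+1$ splits exactly into $(1-\sigma)+(\theta+(K_n+1)\sigma)$. One must also establish the independence of the new unseen remainder from the weights, which is where the size-biased invariance of GEM is essential.
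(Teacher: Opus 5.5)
Your proof is correct. The paper itself does not prove this theorem: it states that the proofs can be found in \cite{argiento2022annals} and \cite{pitman1996} and omits them. Your write-up is therefore a self-contained derivation to be compared with those sources.

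\textbf{PYP case.} Your argument is in the spirit of Pitman's. It rests on two facts: each newly discovered atom is a size-biased pick from the unseen part, and $\mathrm{GEM}(\sigma,\theta)$ is invariant under size-biased permutation. You organize this as a one-observation-at-a-time induction in which the only computation is the Dirichlet--Beta splitting $\theta+K_n\sigma+1=(1-\sigma)+(\theta+(K_n+1)\sigma)$. Strengthening the inductive hypothesis to include independence of $\muprime$ from the Dirichlet vector is necessary, since the statement leaves it implicit. It propagates for two reasons: the likelihood of a new label factorizes as $p^*\cdot\muprime(\D x)$, and that of an observed label $\xi^\star_j$ is just $p^*_j$, because $\muprime$ a.s.\ has no atom at $\xi^\star_j$.

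\textbf{FDP case.} Here your route genuinely differs from the cited one. Argiento and De Iorio derive the posterior as a special case of a general result for normalized independent finite point processes. They work with unnormalized jumps and an auxiliary latent variable that is integrated out at the end, which buys validity for arbitrary jump distributions. You instead use that, conditionally on $M$, the weights are exactly symmetric Dirichlet. The marginal likelihood, the law $q_{M^\star}$ and the representation then all follow from finite-dimensional conjugacy together with aggregation and neutrality, with no latent variable. This is shorter and more transparent, but tied to the Dirichlet (gamma-jump) case.

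Your intermediate unaggregated representation $\operatorname{FD}_{K_n+M^\star}(n_1+\gamma,\ldots,n_{K_n}+\gamma,\gamma,\ldots,\gamma)$ is exactly the form the paper later uses in the proof of Theorem~\ref{thm:Species_UB_FDP_PD}. Your remark that $\muprime\sim\operatorname{FDP}(q_{M^\star},\gamma)$ is coupled to $p^*$ through the same $M^\star$, rather than independent of it, is a real clarification of the statement. The same conditional reading also handles the degenerate case $M^\star=0$, where the last Dirichlet parameter vanishes and $p^*=0$.
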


Section~\ref{sec:features} derives a related Bayesian model-based construction under the features setting.
\vspace{20pt}

\subsection{\large 2.2 The features sampling setting}\label{sec:features}
Let $\Xi=\{\xi_1,\xi_2,\ldots\}$ denote again a countable alphabet of labels. Unlike the species framework, in the features setting each statistical unit $i=1, \ldots, n$ can simultaneously exhibit multiple labels. This information can be naturally modeled through the Bernoulli variables $Z_{ij}$, $j \geq 1$, each indicating presence or absence of the generic feature $j$ in unit $i$. Letting $p_{j}=\prob{Z_{ij}=1}\geq 0$ for $j \geq 1$, and assuming independence among these binary indicators, such a construction~can~be~equivalently expressed via $X_i=\sum_{j\geq1} Z_{ij}\delta_{\xi_j}$, with each $X_i$, $i=1, \ldots, n$, being distributed according to a Bernoulli process $\operatorname{BeP}(\mu)$ with directing discrete measure $\mu=\sum_{j\geq 1}p_j\delta_{\xi_j}$ supported on $\Xi$  \citep[e.g.,][]{Ghilotti2026}. Notice that, in contrast to the species setting of Section \ref{sec:species},  the  parameters $(p_j)_{j\geq1}$, denote inclusion probabilities and therefore need not sum to one. Instead, such quantities are only required to satisfy $\sum_{j\geq1}p_j<\infty$, which ensures that each $X_{i}$ contains only finitely many labels almost surely. As such,  \(\mu\) is not a probability measure. \smash{Nonetheless, it is still discrete}, meaning that the same labels occur repeatedly across the sample, and hence the data can be again summarized by the $K_n$ distinct labels and the corresponding~frequencies $n_j=\sum_{i=1}^n Z_{ij}$. Thus, under these settings, the features sampling~analog~of~the~Bayesian~species~model~in~\eqref{eqn:Definetti_model}~is
\begin{equation}
\label{eqn:fea_bnp_model_def}
    \begin{split}
    (X_1,\ldots,X_n \mid \mu \,) &\sim \, \operatorname{BeP}(\mu) \\
    \mu \, &\sim \, \Fcr\,,
    \end{split}
\end{equation}
where $\Fcr$ is a suitably-selected Bayesian nonparametric prior  for $\mu$. As for $\Pcr$  in Section \ref{sec:species},~also in this case the elicitation of $\Fcr$  is equivalent to specifying a joint prior  for the labels $(\xi_j)_{j\geq1}$, the corresponding inclusion probabilities $(p_j)_{j\geq1}$, and the alphabet size $M$, which is often unknown and possibly unbounded. Following the same reasoning and arguments  as in Section \ref{sec:species},~the~labels $(\xi_j)_{j\geq1}$, are again assumed to be i.i.d.\ from a diffuse $P_0$, while the prior for $\bar{\mu}=((p_j)_{j\geq1},M)$~depends on the alphabet, which may either be countably infinite or finite, with known or unknown size. In eliciting such a prior it is also important to ensure that the information contained in any single observation is almost surely finite, i.e., $\prob{X_i(\Xi)<\infty}=1$.  

In finite alphabets, we employ for $\bar{\mu}$ the prior induced by a mixed \smash{Binomial~process~$\operatorname{MBP}(q_M,a,b)$} \citep[][]{beraha2025bayesian}, which corresponds to assuming $((p_1,\ldots,p_M)\mid M) \iid \operatorname{Beta}(a,b)$,~for~\smash{$a,b>0$}, where $M\sim q_M$, with   $q_M$ denoting a probability mass function on $\{0,1,2,\ldots\}$.~A~relevant~practical example for such a probability mass function is the one induced by the negative binomial  $\operatorname{NegBin}(m,q)$ with size  $m>0$ and failure probability $q\in(0,1)$, so that $\mathbb E[M]=mq/(1-q)$. In this case, we write $\bar{\mu}\sim \operatorname{MBP}(m,q,a,b)$. Conversely, when the alphabet size $M$  is known,~it~suffices~to set $q_M(m)=\delta_{M}(m)$, recovering the classical finite Beta prior.  

When the alphabet is unbounded, we consider instead a three-parameter \smash{Beta process $\operatorname{IBP}(\gamma,\sigma,c)$} prior for $\bar{\mu}$ \citep{Teh09}, with parameters $\gamma>0$, $\sigma\in[0,1)$, and $c>-\sigma$. The inclusion probabilities $(p_j)_{j\geq1}$ are therefore viewed as the points of a Poisson process on $(0,1)$, having Lévy intensity measure $\rho(s)\,\D s$, where $\rho(s)=\gamma s^{-\sigma-1}(1-s)^{c+\sigma-1}$. In particular, the special case $\sigma=0$ recovers the classical Beta process \citep{Hjort90}, whose induced marginal distribution~on~the binary feature-allocation matrix is the Indian buffet process \citep{GriffGharha05}.

Including the above priors within the general Bayesian formulation in~\eqref{eqn:fea_bnp_model_def} yields the posterior distributions stated in  Theorem~\ref{thm:features_post}, which provide the feature-setting analog of those derived~for~the species framework in Theorem~\ref{thm:species_post}.

\begin{theorem}[Posterior distributions. \cite{Ghilotti2026} and \cite{beraha2025bayesian}]\label{thm:features_post}
    Let $X^n$~be~a sample from the model in \eqref{eqn:fea_bnp_model_def} made of $K_n$ different labels $(\xi^\star_1,\ldots,\xi^\star_{K_n})$~with~the~corresponding~frequencies $n_j\ge1$, $j=1,\ldots,K_n$. Then the posterior of $\mu$ is defined as
    \begin{equation}
        \label{eqn:features_posterior}
        (\mu\mid X^n) \,\myequal{d} \,
        \sum\nolimits_{j=1}^{K_n}\,p^*_j\delta_{\xi^\star_j} \,+\, 
        \muprime\,,
    \end{equation}
   with the following characterization, depending on the choice of the prior for $\mu$:
    \begin{enumerate}
        \item if $\Fcr=\operatorname{MBP}(q_M,a,b) \times P_0$, then
            \begin{equation}
        \label{eqn:MBP_post}
        \begin{split}
            p^*_j \,& \, {\normalfont \ind} \, \operatorname{Beta}\left(n_j+a,\,n-n_j+b\right),\\
            M^\star \,&\sim\ q_{M^\star}\,, \quad q_{M^\star}(m) \propto [(\kappa_n)^m\,(m+K_n)!/m!\,]q_M(m+K_n),\\
            \muprime\,&\sim\,
            \operatorname{MBP}(q_{M^\star},\,a,\,b+n) \times P_0\,,
        \end{split}
    \end{equation}
    where $\kappa_n = (b)_n/(a+b)_n$, while $M^\star=M-K_n$ denotes the number of unobserved labels.~Under the  finite Beta prior, $q_{M^\star} = \delta_{M-K_n}$, while for the $\operatorname{MBP}(m,q,a,b)$ prior, $q_{M^\star}$ is the probability mass function of a $\operatorname{NegBin}\left(m+K_n,\, q\,\kappa_n\right)$.
    
        \vspace{7pt}

    \item if $\Fcr=\operatorname{IBP}(\gamma,\sigma,c) \times P_0$, then
    \begin{equation}
            \label{eqn:IBP_post}
            \begin{split}
            p^*_j \,& \, {\normalfont \ind} \, \operatorname{Beta}\left(n_j-\sigma,\,n-n_j+\sigma+c\right),\\
            \muprime \,&\sim \, \operatorname{IBP}(\gamma,\sigma, c + n) \times P_0\,.
            \end{split}
        \end{equation}
    \end{enumerate}
\end{theorem}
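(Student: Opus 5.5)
The plan is to handle the two priors separately, using the same device each time. Conditionally on $\mu$, the $n$ binary columns $(Z_{1j},\ldots,Z_{nj})$ are independent across $j\geq1$. The likelihood therefore factorizes over labels:
\[
\prod_{j\geq1} p_j^{n_j}(1-p_j)^{n-n_j}.
\]
Unobserved labels contribute only $(1-p_j)^n$. Every observed label $\xi^\star_j$ must coincide, almost surely (since $P_0$ is diffuse), with exactly one atom of $\mu$. The posterior therefore splits into three pieces: the weights of the $K_n$ atoms located at the observed labels, the weights of the remaining atoms, and the number of those remaining atoms. I will compute the joint law of these three pieces by Bayes' rule and check that it factorizes as in \eqref{eqn:features_posterior}.

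\emph{Finite case, $\operatorname{MBP}(q_M,a,b)$.} Condition on $M=m\geq K_n$. The labels are i.i.d.\ from $P_0$, and there are $m!/(m-K_n)!$ ways to assign the $K_n$ observed labels to atoms. The joint density of the weights times the likelihood is
\[
\prod_{j=1}^{K_n}\frac{p_j^{n_j+a-1}(1-p_j)^{n-n_j+b-1}}{B(a,b)}\;\prod_{j=K_n+1}^{m}\frac{p_j^{a-1}(1-p_j)^{n+b-1}}{B(a,b)}.
\]
First, this shows that the observed weights are independent $\operatorname{Beta}(n_j+a,n-n_j+b)$. Second, the $m-K_n$ unobserved weights are i.i.d.\ $\operatorname{Beta}(a,b+n)$. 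Integrating out the unobserved weights gives $\kappa_n^{\,m-K_n}$ with $\kappa_n=B(a,b+n)/B(a,b)=(b)_n/(a+b)_n$. Collecting terms, with $m^\star=m-K_n$,
\[
q_{M^\star}(m^\star)\propto \kappa_n^{m^\star}\,\frac{(m^\star+K_n)!}{m^\star!}\,q_M(m^\star+K_n).
\]
The observed-weight factor does not depend on $m$, which yields independence and the stated $\muprime\sim\operatorname{MBP}(q_{M^\star},a,b+n)$. It remains to specialize $q_M$. A point mass at $M$ gives $\delta_{M-K_n}$. For $q_M=\operatorname{NegBin}(m,q)$, I multiply $\binom{m^\star+K_n+m-1}{m^\star+K_n}q^{m^\star+K_n}(1-q)^m$ by the factor $(m^\star+K_n)!/m^\star!$. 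This gives a quantity proportional to $\Gamma(m^\star+K_n+m)/m^\star!\,(q\kappa_n)^{m^\star}$, which is the $\operatorname{NegBin}(m+K_n,q\kappa_n)$ mass function.

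\emph{Infinite case, $\operatorname{IBP}(\gamma,\sigma,c)$.} The weights form a Poisson process with intensity $\rho$, and the labels are independent marks. I will use the standard Poisson-process disintegration, i.e.\ the Palm/Mecke formula, as in the completely random measure posterior theory of James and Broderick et al.

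\emph{Step 1: unobserved atoms.} Each atom independently survives unobserved with probability $(1-s)^n$. By the marking (thinning) theorem, the unobserved atoms form a Poisson process with intensity
\[
\rho(s)(1-s)^n=\gamma s^{-\sigma-1}(1-s)^{c+n+\sigma-1}.
\]
This is $\operatorname{IBP}(\gamma,\sigma,c+n)$, independent of the observed atoms.

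\emph{Step 2: observed atoms.} Applying the multivariate Mecke formula to the $K_n$ distinct observed atoms, each has density proportional to $\rho(s)s^{n_j}(1-s)^{n-n_j}$, which is $\operatorname{Beta}(n_j-\sigma,n-n_j+c+\sigma)$. This density is proper because $n_j\geq1>\sigma$ and $c+\sigma>0$. The atoms are mutually independent and independent of $\muprime$.

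The main obstacle I expect is making Step 2 rigorous. Conditioning on a null-probability event (observed labels drawn from a diffuse $P_0$) requires the Mecke/Campbell disintegration rather than naive Bayes. I would handle it by computing the joint Laplace functional of $(\mu,X^n)$ and reading off the factorization, following the completely random measure posterior results cited in the theorem.
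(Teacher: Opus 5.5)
Your proof is correct. The paper gives no proof of its own for this theorem: the result is imported from \cite{Ghilotti2026} and \cite{beraha2025bayesian}, and the supplement explicitly omits the argument. Your proposal therefore supplies a self-contained derivation of what the paper takes from the literature, along the standard lines of those posterior characterizations.

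For the mixed binomial process, your computation is exactly right. You condition on $M=m$, count the $m!/(m-K_n)!$ injective assignments of observed labels to atoms, and use $B(a,b+n)/B(a,b)=\kappa_n$. This yields $q_{M^\star}$, the independent Beta posteriors for the observed weights, and the $\operatorname{NegBin}(m+K_n,q\kappa_n)$ specialization. For the three-parameter Beta process, marking each atom with its observation column gives the $\operatorname{IBP}(\gamma,\sigma,c+n)$ law of the unseen part. This works because $X^n$ is a function of the atoms with nonzero columns only.

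The step you flag as the main obstacle is simpler than the Mecke or Laplace-functional route suggests. Because $\sigma<1$ and $c+\sigma>0$, you have $\int_0^1\rho(s)\{1-(1-s)^n\}\,ds<\infty$. Hence the atoms with at least one observation form an almost surely finite Poisson process. Conditionally on their number, the triples (weight, label, column) are i.i.d.\ with law proportional to $\rho(s)\,s^{\sum_i z_i}(1-s)^{n-\sum_i z_i}\,ds\,P_0(d\xi)$, and the label is independent of weight and column. Ordinary Bayes' rule then gives independent $\operatorname{Beta}(n_j-\sigma,\,n-n_j+c+\sigma)$ weights. No conditioning on null events is needed beyond the almost sure distinctness of labels under a diffuse $P_0$.
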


Section~\ref{sec_3_conf} derives valid model-based credible intervals for $\Mmax$ for both species and features settings under the posterior distributions stated in Theorems~\ref{thm:species_post} and \ref{thm:features_post}, respectively.

\vspace{20pt}
\section{\large 3. Valid Model-Based Credible Intervals for Modal Missing Probabilities}\label{sec_3_conf}

\vspace{5pt}
\subsection{\large 3.1 Problem formulation and computational challenges}\label{sec:direct_inference}
Our  goal is to construct a valid one-sided credible interval for~$\Mmax$ in both species and feature settings, under the Bayesian models developed in Sections~\ref{sec:species} and \ref{sec:features}.    Formally, for a prescribed credibility level $1-\alpha$, with $\alpha\in(0,1)$, this amounts to identifying a statistic $U(X^n)$ that satisfies
\begin{equation}\label{eqn:UB_max_general}
\P\left( \Mmax \geq U(X^n)\mid X^n\right) \leq \alpha.
\end{equation}
Condition \eqref{eqn:UB_max_general} ensures that the posterior probability of $\Mmax$ exceeding $U(X^n)$, given $X^n$, is at most $\alpha$. Consequently, $U(X^n)$ defines a credible upper bound yielding the desired valid one-sided credible interval $[0,U(X^n)]$ for $\Mmax$ at level $1-\alpha$. 

Focusing on the species settings (related arguments apply under the features framework), and recalling Theorem~\ref{thm:species_post}, the posterior distribution of $\Mmax$ coincides with the law of $p^*{\cdot} p'_{(1)}$, where $p'_{(1)}$ denotes the largest probability in $\muprime$. Therefore, in principle, one could take $U(X^n)$ to be the $(1-\alpha)$-quantile of $p^*{\cdot} p'_{(1)}$. However, the law of $p'_{(1)}$ is generally intractable. Under the PYP~prior for $\bar \mu$, integral representations of both the law and moments of $p'_{(1)}$~are available \citep{PitmanYor97Annals}, but these
expressions are not practically useful for our purposes. Under the FDP prior for $\bar \mu$, with $q_M=\delta_{M}$, $p'_{(1)}$ is the largest order statistic of a finite Dirichlet random vector, for which no closed-form distribution is available. Finally, for a general prior $q_M$,~the~law~of~$p'_{(1)}$~is~even~more complex. Thus, closed-form inference on $\Mmax$ is generally infeasible.

A natural solution to address the above issues  is to produce suitable Monte Carlo samples~from the posterior of $p^*{\cdot}p'_{(1)}$ under the models in Section~\ref{sec_2_mod}, and then leverage these samples to obtain empirical estimates for the $(1-\alpha)$ posterior quantile of $p^*{\cdot}p'_{(1)}$. In practice, however, this approach may be unsuitable for our purposes. The first issue is computational. Exact simulation of $p'_{(1)}$ is not equally convenient under the different specifications considered in Section~\ref{sec_2_mod}. For the FDP, this can be done by simulating $\mu^\prime$, whose number of support points $M^\star\sim q_{M^\star}$ is almost surely finite, and taking its largest component. Under the PYP prior, one may in principle sample $p'_{(1)}$ using the exact algorithm of \cite{Dassios2021}, but this requires solving a collection of nontrivial optimization problems and is therefore considerably less convenient in practice. More importantly, even if an i.i.d. Monte Carlo sample of size $B$ from the posterior distribution of $\Mmax$ were available, estimation of its
$(1-\alpha)$-quantile could still be numerically unstable. Standard large-sample theory for empirical quantiles implies that the corresponding Monte Carlo error is governed by the posterior density evaluated at the target quantile. In particular, if $\hat U(X^n)$~denotes~a~Monte Carlo estimate of
the true quantile \(U(X^n)\), \smash{then the relative error $E(U(X^n)-\hat U(X^n))^2/U(X^n)^2$ is} approximately proportional to \smash{$1/(B[U(X^n)f(U(X^n))]^2)$}, where $f$ is  the posterior density  of $\Mmax$. Consequently, a large sample size $B$ is needed whenever \smash{$U(X^n)f(U(X^n))$} is small. This typically occurs in regimes of greatest practical interest, namely when the posterior distribution of $\Mmax$ is highly concentrated near zero. For instance, under the FDP prior this may arise for small values of $\gamma$ or for large $M^\star$, while under the $\pityor$ prior it may occur when $\theta+K_n\sigma$ is large.

In Section~\ref{section:species_results}, we address the above challenges by taking a different perspective based~on~$r$th-order missing masses \citep[see, e.g.,][]{Chandra2024,naulet24}.

\vspace{12pt}

\subsection{\large 3.2 Valid one-sided credible intervals for $\Mmax$ in species and features settings}\label{section:species_results}

Because directly targeting the $(1-\alpha)$ posterior quantile of $\Mmax$ is analytically and computationally challenging, we seek a tractable posterior approximation leveraging $r$th-order~missing~masses \citep[][]{Chandra2024,naulet24}. Such a technical construction~has~been successfully employed in worst-case scenario analyses  \citep{Painsky24,ColombiFreq},~and, as clarified in the following, can be effectively exploited as a building-block in deriving the desired credible intervals for $\Mmax$ also under our model-based Bayesian nonparametric framework. To this end, for any integer $r\geq 1$, define the \textit{\(r\)th-order missing mass} as
\begin{equation}\label{eqn:rnorm_def}
\Mr \,\coloneqq\, \Mr(X^n;p_j, j \geq 1)=\sum\nolimits_{j\geq 1} p_j^r \indic(n_j=0),
\end{equation}
and notice that $\Mrr$ yields the classical missing mass, while, \smash{more generally, $\Mr^{1/r}$ is the $\ell_r$-norm} of the vector of unseen probabilities, with $\Mmax$ corresponding to the $\ell_\infty$-norm. As such, by  norm monotonicity, any valid credible upper bound for $\Mr^{1/r}$ also controls $\Mmax$. More specifically,
\begin{equation}\label{eqn:UB_Mr_approx}
\P\left( \Mmax \geq U(X^n)\mid X^n\right)\leq\P( \Mr^{1/r} \geq U(X^n)\mid X^n) \leq \alpha,\qquad \mbox{for any } r\geq 1 .
\end{equation}
The key advantage of working with the quantity $\Mr^{1/r}$ rather than directly with $\Mmax$~is~that~the posterior mean $\mean{\Mr\mid X^n}$ can often be
computed analytically. Since $\Mr$ is nonnegative, by the Markov's inequality we have
\begin{displaymath}
\P( \Mr^{1/r} \geq U(X^n)\mid X^n)=\P\left( \Mr \geq U(X^n)^r\mid X^n\right)\leq \mean{\Mr\mid X^n}/U(X^n)^r.
\end{displaymath}
Therefore, setting
\begin{equation} \label{eqn:species_UB_general}
U(X^n)=(\mean{\Mr\mid X^n}/\alpha)^{1/r},
\end{equation}
yields a credible upper bound for $\Mmax$ at level $1-\alpha$, and hence a corresponding valid one-sided credible interval $[0,U(X^n)]$ at level $1-\alpha$. Notice that, although \eqref{eqn:species_UB_general} holds for every $r \geq 1$, the optimal order is the one that  minimizes  $U(X^n)$ in \eqref{eqn:species_UB_general}, thus yielding the shortest, and hence~more informative, valid credible interval $[0,U(X^n)]$; see Section~\ref{sec:comp_and_infer} for details.

Specializing \eqref{eqn:species_UB_general} to the Bayesian models designed in Sections~\ref{sec:species} and \ref{sec:features}, yields the closed-form expressions for $U(X^n)$ stated in Theorems~\ref{thm:Species_UB_FDP_PD} and \ref{thm:Features_upper_bounds}, respectively.

\begin{theorem}[{Credible upper bounds in the species setting}]
    \label{thm:Species_UB_FDP_PD}
    Let $X^n$ be a sample~from~the~model~\eqref{eqn:Definetti_model}, and let $U(X^n)$ be a credible upper bound for $\Mmax$ as defined in \eqref{eqn:UB_max_general}. Then,
    \begin{enumerate}
        \item if $\Pcr = \operatorname{FDP}(q_M,\gamma) \times P_0$, we have that $U(X^n)=U^{\operatorname{FDP}}$ with
        \begin{equation}
            \label{eqn:UB_FDP}
            U^{\operatorname{FDP}} :=  \min\nolimits_{r\geq1}   U_r^{\operatorname{FDP}}(X^n)=
            \min\nolimits_{r\geq1}
          \{ \left(
            {\E_{M^\star}}[M^\star (\gamma)_r/(n + \gamma(K_n+M^\star))_r]/\alpha
            \right)^{1/r}
           \},
        \end{equation}
        where $M^\star$ is distributed according to $q_{M^\star}$ as in \eqref{eqn:FDP_post}. When the alphabet size is known (i.e.,~$\Pcr = \operatorname{FDP}(\delta_{M},\gamma) \times P_0$), \eqref{eqn:UB_FDP} reduces to $U^{\operatorname{FD}}:=\min_{r\geq1}\{\left([(M-K_n)(\gamma)_r/(n+\gamma M)_r]/\alpha\right)^{1/r}\}$.

        \vspace{7pt}

        \item if $\Pcr = \pityor(\sigma,\theta) \times P_0$, we have that  $U(X^n)=U^{\pityor}$ with
        \begin{equation}
            \label{eqn:PD_UB_species}
            U^{\pityor} \,: = \,    \min\nolimits_{r\geq1}   U_r^{\pityor}(X^n) =
            \min\nolimits_{r\geq1}
      \{ \left(
          [(\theta + K_n\sigma)\,
            (1-\sigma)_{r-1}/
            (n + \theta)_{r}]/\alpha
            \right)^{1/r}
     \}. 
        \end{equation}
    \end{enumerate}
\end{theorem}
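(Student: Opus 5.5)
By \eqref{eqn:UB_Mr_approx}--\eqref{eqn:species_UB_general}, every $U_r(X^n)=(\mean{\Mr\mid X^n}/\alpha)^{1/r}$ is a valid credible upper bound for $\Mmax$. Hence so is their minimum over $r\geq1$: the event $\{\Mmax\geq\min_r U_r\}$ coincides with $\{\Mmax\geq U_{r^\star}\}$ for the minimizing order $r^\star$, which is determined by $X^n$ alone. If the infimum is not attained, we argue through a decreasing sequence $U_{r_k}$ and continuity from above of the conditional probability. The proof therefore reduces to computing $\mean{\Mr\mid X^n}$ in closed form under each prior, using the posterior representation \eqref{eqn:species_posterior} of Theorem~\ref{thm:species_post}. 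Under that representation the unseen probabilities are $p^* p'_j$, where $p'_j$ are the weights of $\muprime$. Hence
\[
\Mr=(p^*)^r\sum\nolimits_{j} (p'_j)^r ,
\]
and since $p^*$ is independent of $\muprime$ (given $M^\star$ in the FDP case), the expectation factorizes into $\mean{(p^*)^r}\cdot\mean{\sum_j (p'_j)^r}$.

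For the FDP case I will condition on $M^\star=m$. The marginal law of $p^*$ is $\operatorname{Beta}(m\gamma,\,n+K_n\gamma)$, so
\[
\mean{(p^*)^r\mid m}=(m\gamma)_r/(n+\gamma(K_n+m))_r .
\]
The weights of $\muprime$ given $m$ are $\operatorname{FD}_m(\gamma,\ldots,\gamma)$, so each $p'_j\sim\operatorname{Beta}(\gamma,(m-1)\gamma)$, with $\mean{(p'_j)^r}=(\gamma)_r/(m\gamma)_r$. Summing over the $m$ components gives $m(\gamma)_r/(m\gamma)_r$. The factors $(m\gamma)_r$ cancel, leaving $m(\gamma)_r/(n+\gamma(K_n+m))_r$. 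Taking expectation over $M^\star\sim q_{M^\star}$ yields \eqref{eqn:UB_FDP}. The known-$M$ case follows by setting $q_{M^\star}=\delta_{M-K_n}$, so that $K_n+m=M$. A cleaner route is to observe that, given $M^\star=m$, each unseen probability is marginally $\operatorname{Beta}(\gamma,\,n+\gamma(K_n+m)-\gamma)$ by aggregation of the posterior Dirichlet; this gives the same expression directly.

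For the PYP case, $p^*\sim\operatorname{Beta}(\theta+K_n\sigma,\,n-K_n\sigma)$, so
\[
\mean{(p^*)^r}=(\theta+K_n\sigma)_r/(n+\theta)_r .
\]
Writing $\theta'=\theta+K_n\sigma$, it remains to compute $\mean{\sum_j (p'_j)^r}$ for $\muprime\sim\pityor(\sigma,\theta')$. I will use the known identity $\mean{\sum_j p_j^r}=\mean{\tilde p_1^{\,r-1}}$, where $\tilde p_1$ is the size-biased first pick, distributed as $W_1\sim\operatorname{Beta}(1-\sigma,\theta'+\sigma)$. Equivalently, this expectation is the probability that $r$ draws from $\muprime$ all coincide, which by the Pitman--Yor EPPF equals $(1-\sigma)_{r-1}/(\theta'+1)_{r-1}$. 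Multiplying the two factors and using $(\theta')_r=\theta'(\theta'+1)_{r-1}$ gives
\[
\theta'(1-\sigma)_{r-1}/(n+\theta)_r ,
\]
which is \eqref{eqn:PD_UB_species}.

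The main obstacle is justifying the moment $\mean{\sum_j (p'_j)^r}$ under the PYP. I will handle it through the size-biased/stick-breaking argument: $\sum_j p_j\cdot p_j^{r-1}$ is the expectation of $\tilde p^{\,r-1}$ for a size-biased pick, and by invariance under size-biased permutation this pick has the law of $W_1$. A secondary point is the exchange of sums and expectations, which is immediate by Tonelli's theorem since all terms are nonnegative.
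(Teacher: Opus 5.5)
Your proposal is correct and follows the paper's proof. It reduces everything to $\mean{\Mr\mid X^n}$ via the Markov bound. For the FDP case it uses the $\operatorname{Beta}(\gamma,\,n+\gamma(K_n+M^\star-1))$ marginal of each unseen weight given $M^\star$: your ``cleaner route'' is exactly the paper's argument, and your factorized version is equivalent for $m\ge1$, while $m=0$ trivially contributes zero. For the PYP case it uses the independence of $p^*$ and $\muprime$ together with the size-biased identity, the first size-biased weight being $\operatorname{Beta}(1-\sigma,\theta+(K_n+1)\sigma)$.

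Your explicit check that minimizing over a data-determined $r^\star$ preserves validity is a useful addition that the paper leaves implicit. The non-attainment hedge is unnecessary, since $U_r<1$ eventually and $U_r\to1$ (Proposition~\ref{prop:uniquness_minimizer}), so the minimum is attained. As written, that hedge would in any case only control $\P(\Mmax>\inf_r U_r\mid X^n)$, not the non-strict event.
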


\begin{theorem}[{Credible upper bounds in the features setting}]
    \label{thm:Features_upper_bounds}
    Let $X^n$~be~a~sample~from~the~model~\eqref{eqn:fea_bnp_model_def}, and let $U(X^n)$ be a credible upper bound for $\Mmax$ as defined in \eqref{eqn:UB_max_general}. Then,
    \begin{enumerate}
        \item if $\Fcr = \operatorname{MBP}(q_M,a,b) \times P_0$, we have that $U(X^n)=U^{\operatorname{MBP}}$ with
        \begin{equation}
        \label{eqn:UB_GeneralMBP}
            U^{\operatorname{MBP}} \, := \, \min\nolimits_{r\geq1} U_r^{\operatorname{MBP}}(X^n)=
            \min\nolimits_{r\geq1}
         \{ \left( \,
           [ E_{M^{\star}}[M^{\star}]  (a)_r/(a+b+n)_r]/\alpha
            \right)^{1/r}
\},
        \end{equation} 
        where $M^\star$ is distributed according to $q_{M^\star}$ as in \eqref{eqn:MBP_post}. Under the probability mass function of~the $\operatorname{NegBin}(m,q)$, it holds   $E_{M^{\star}}[M^{\star}] = q\kappa_n(m+K_n)/(1-q\kappa_n)$ and the resulting~upper~bound $U(X^n)$ is denoted by $U^{\operatorname{N-MBP}}$. Conversely, when the  size of the alphabet is known (i.e., $\Fcr = \operatorname{MBP}(\delta_{M},a,b) {\times} P_0)$, \eqref{eqn:UB_GeneralMBP} reduces to  $U^{\operatorname{FB}} := 
            \min\nolimits_{r\geq1}
         \{ \left( \,
           [ (M-K_n)  (a)_r/(a+b+n)_r]/\alpha
            \right)^{1/r}\}$.
    \vspace{7pt}
        \item if $\Fcr = \operatorname{IBP}(\gamma,\sigma,c) \times P_0$,  we have that $U(X^n)=U^{\operatorname{IBP}}$ with
        \begin{equation}
            \label{eqn:UB_IBP} 
            U^{\operatorname{IBP}} \, = \,  \min\nolimits_{r\geq1} U_r^{\operatorname{IBP}}(X^n)=
            \min\nolimits_{r\geq1}
        \{ \left(
            B(r-\sigma,\, c+\sigma+n) \gamma/\alpha
            \right)^{1/r}
    \}, 
        \end{equation}
        where $B(\cdot,\cdot)$ is the Beta function.
    \end{enumerate}
\end{theorem}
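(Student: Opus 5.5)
By the Markov argument leading to \eqref{eqn:species_UB_general}, every order $r\geq1$ gives a valid bound $U_r=(\mean{\Mr\mid X^n}/\alpha)^{1/r}$. Taking the minimum over $r$ keeps validity: the events $\{\Mmax\geq U_r\}$ are all controlled, and the bound at the minimizing $r$ is just one of them. Proving Theorem~\ref{thm:Features_upper_bounds} therefore reduces to computing $\mean{\Mr\mid X^n}$ in closed form under each posterior of Theorem~\ref{thm:features_post}.

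Under either prior, Theorem~\ref{thm:features_post} says the posterior is \eqref{eqn:features_posterior}. The observed atoms carry $n_j\geq1$, so the unseen labels are exactly the atoms of $\muprime$, and $\Mr=\sum_{j}(p'_j)^r$ summed over the atoms of $\muprime$.

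\emph{MBP case.} Under \eqref{eqn:MBP_post}, $\muprime\sim\operatorname{MBP}(q_{M^\star},a,b+n)$. Conditionally on $M^\star$, the jumps are i.i.d.\ $\operatorname{Beta}(a,b+n)$, so
\[
\mean{\Mr\mid X^n,M^\star}=M^\star\,\frac{B(a+r,b+n)}{B(a,b+n)}=M^\star\,\frac{(a)_r}{(a+b+n)_r}.
\]
Averaging over $q_{M^\star}$ gives \eqref{eqn:UB_GeneralMBP}.

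When $q_M=\delta_M$, the posterior is $q_{M^\star}=\delta_{M-K_n}$, and the formula reduces to $U^{\operatorname{FB}}$.

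For $q_M=\operatorname{NegBin}(m,q)$, Theorem~\ref{thm:features_post} gives $M^\star\sim\operatorname{NegBin}(m+K_n,q\kappa_n)$. I take its mean to be $q\kappa_n(m+K_n)/(1-q\kappa_n)$, using the parametrization stated in Section~\ref{sec:features}.

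\emph{IBP case.} Under \eqref{eqn:IBP_post}, $\muprime$ is a Poisson process with intensity
\[
\rho_n(s)=\gamma s^{-\sigma-1}(1-s)^{c+n+\sigma-1}.
\]
By Campbell's theorem,
\[
\mean{\Mr\mid X^n}=\int_0^1 s^r\rho_n(s)\,\D s=\gamma B(r-\sigma,\,c+\sigma+n).
\]
This is finite because $r-\sigma>0$ for every $r\geq1$ and $c+\sigma+n>0$. Substituting into the Markov bound gives \eqref{eqn:UB_IBP}.

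There is no real obstacle. The step needing the most care is the IBP case: I must identify the unseen part exactly with $\muprime$ and justify Campbell's formula for a nonnegative functional, which holds with possibly infinite values and here turns out to be finite. I also need to confirm that the negative binomial parametrization matches the stated mean.
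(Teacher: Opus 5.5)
Your proposal is correct and follows the paper's proof essentially step by step. You reduce via the Markov bound \eqref{eqn:species_UB_general} to computing $\mean{\Mr\mid X^n}$. In the MBP case you condition on $M^\star$ and use the $r$th moment $(a)_r/(a+b+n)_r$ of the $\operatorname{Beta}(a,b+n)$ jumps, replacing $M^\star$ by $M-K_n$ for FB. In the IBP case you apply Campbell's theorem to the posterior intensity $\gamma s^{-\sigma-1}(1-s)^{c+n+\sigma-1}$. Your extra remarks are sound refinements that the paper leaves implicit: the data-dependent minimizing $r$ is harmless because the guarantee is conditional on $X^n$, and the negative binomial mean checks out.
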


{
Theorems \ref{thm:Species_UB_FDP_PD}--\ref{thm:Features_upper_bounds} provide bounds whose  calculation requires \smash{solving a minimization problem over} the discrete orders $r \geq 1$. As clarified in Section~\ref{sec:comp_and_infer}, such a minimization~admits a~tractable~implementation and yields a unique minimum. Before turning to these computational~aspects~it~is~however instructive to first examine the behaviour of the quantities being minimized,~as~this~provides insight into the resulting bounds; see Section~\ref{app:UB_limits} in the Supplementary Material for details.

With this goal in mind, we start by considering $U^{\operatorname{FD}}$ within Theorem \ref{thm:Species_UB_FDP_PD} as it makes explicit~the effect of the hyperparameter $\gamma$. For any fixed $r\geq1$, the quantity being minimized tends~to~zero as $\gamma\to0$. This is intuitive. As $\gamma\to0$, Simpson's diversity index approaches zero and the distribution concentrates on a single label. If $K_n>0$, such a label has necessarily been observed, and hence $\Mmax=0$. Conversely, as $\gamma\to+\infty$, which corresponds to Simpson's diversity index approaching its maximum value, the quantity entering the minimization stabilizes, for fixed $r \geq 1$, around the horizontal asymptote
\begin{displaymath}
\lim_{\gamma\to+\infty}\left([(M-K_n)(\gamma)_r/(n+\gamma M)_r]/\alpha\right)^{1/r}={(M-K_n)^{1/r}}/({M\,\alpha^{1/r}}),
\qquad r\geq1 .
\end{displaymath}
Thus, for large $\gamma$, the bound approaches the behaviour associated with a nearly uniform distribution on the alphabet. Analogous results can be derived  for $U^{\pityor}$ in Theorem~\ref{thm:Species_UB_FDP_PD}. In particular, for any fixed $r > 1$, the quantity entering the minimization underlying $U^{\pityor}$ converges to zero~as $\sigma \to 1$, regardless of $\theta$.  Intuitively, when $\sigma$ approaches one, each label can be observed only once, akin to the case of continuous distributions, which makes the notion of $\Mmax$ ill-posed.~Conversely,  fixing $\sigma < 1$ and letting $\theta \to 0$, yields
\begin{displaymath}
\lim_{\theta\to 0}\  \left(
          [(\theta + K_n\sigma)\,
            (1-\sigma)_{r-1}/
            (n + \theta)_{r}]/\alpha
            \right)^{1/r} = \,  \left(K_n \, \sigma\,(1-\sigma)_{r-1}/(\alpha \, (n)_r) \right)^{1/r}\,,\quad r>1\,.
\end{displaymath}
Hence, if $\sigma = 0$, which corresponds to the Dirichlet process, then $U^{\pityor} \to 0$ as $\theta \to 0$, which is to be expected since the Dirichlet process degenerates to a measure supported on a single point. If instead $\sigma \in (0, 1)$ and $\theta \to 0$, then $U^{\pityor}$ remains strictly positive.
Finally, for $\theta \to \infty$, the quantity minimized in $U^{\pityor}$ vanishes polynomially, with order $\theta^{-(1-1/r)}$. In particular
\begin{displaymath}
\lim_{\theta\to+\infty}  \left(
          [(\theta + K_n\sigma)\,
            (1-\sigma)_{r-1}/
            (n + \theta)_{r}]/\alpha
            \right)^{1/r} = \,
\lim_{\theta\to+\infty}
(  (1-\sigma)_{r-1}/\alpha )^{1/r}\theta^{-(1-1/r)} \, = \,
0\,.
\end{displaymath}
for any  $r > 1$. Figure \ref{fig:UB_shape_FDP_PD} shows $U^{\operatorname{FD}}$, $U^{\operatorname{FDP}}$,  and $U^{\pityor}$ for different choices of $M$~and~$K_n$~as~a~function of~the hyperparameters $\gamma$ and $\theta$. For $\sigma=0$, $U^{\pityor}$ is not affected by $K_n$ that is a well-known property of the posterior under the Dirichlet process. In general, we observe that as $\gamma$ and $\theta$ start to increase from zero, the credible upper bounds increase as well to reach a global maximum for relatively small values of $\gamma$ and $\theta$, and then decrease. The choices of $q_M$ and $\sigma$ do not change this overall behavior, but rather impact the length of the induced credible intervals. 

\begin{figure}[b!]
\vspace{5pt}
    \centering
    \includegraphics[width=0.9\linewidth]{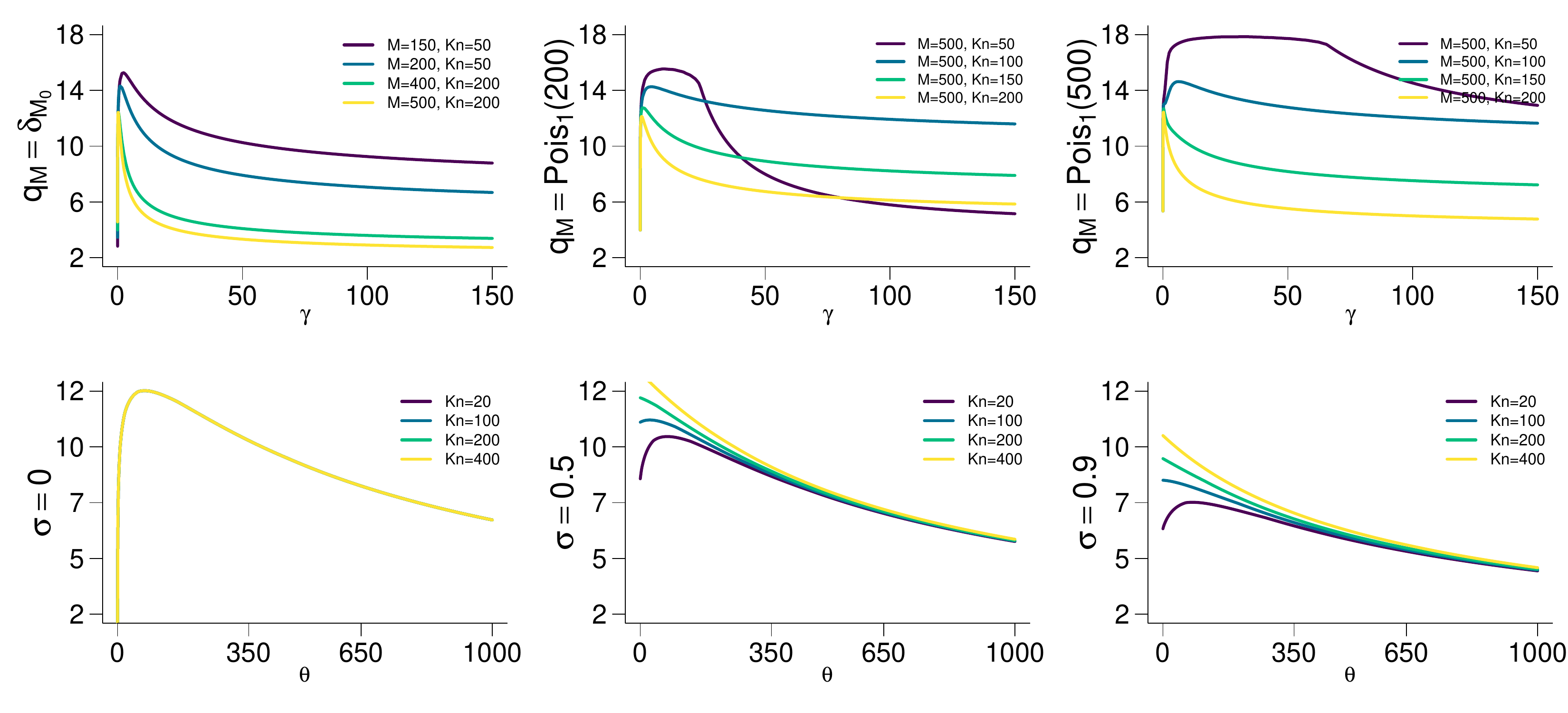}
    \vspace{-10pt}
    \caption{Bayesian credible upper bounds under the FD/$\operatorname{FDP}$ (top panels), and $\pityor$ (bottom panels) priors. 
    The vertical axis has been rescaled by $10^3$ for readability.
    }
    \label{fig:UB_shape_FDP_PD}
    \vspace{-7pt}
\end{figure}

Let us now turn our attention to the upper bounds for the features setting in Theorem~\ref{thm:Features_upper_bounds}. To this end,~we~first focus on $U^{\operatorname{MBP}}$ in \eqref{eqn:UB_GeneralMBP} viewed as a function of the hyperparameters $a,b>0$, while keeping the multiplicative factor $\mathbb E_{M^\star}[M^\star]$ and $r\geq1$ fixed. We also reparametrize \eqref{eqn:UB_GeneralMBP} in terms of expected value $\omega:=a/(a+b)\in(0,1)$ of the Beta prior, and the parameter $\eta:=a+b+1>1$, so that the variance is $\omega(1-\omega)/\eta$. Under these settings, the quantity minimized in \eqref{eqn:UB_GeneralMBP}  can be re-expressed as
$[E_{M^\star}[M^\star]\left(\omega(\eta-1)\right)_r/(\alpha\left(\eta-1+n\right)_r)]^{1/r}$, which clarifies that, for any fixed~$\eta>1$, $U^{\operatorname{MBP}}$ is strictly increasing in $\omega$ and converges to zero as $\omega\to0$. Similarly, it is strictly increasing in $\eta$, at least for values of $r$ smaller than $1+n\omega/(1-\omega)$; for larger values of $r$, the monotonicity is less transparent. Finally, considering  $U^{\operatorname{IBP}}$ in \eqref{eqn:UB_IBP}, notice that its expression does not involve the number $K_n$ of distinct labels. Hence, its dependence  on the data arises only indirectly, through the estimation of the hyperparameters. Moreover, $c$ and $n$ enter $U^{\operatorname{IBP}}$ only through a sum, and hence, the relative effects cannot be disentangled. As a consequence, the impact of $c$ becomes~negligible for large sample sizes. The credible upper bound $U^{\operatorname{IBP}}$ is strictly decreasing in $c+n$, consistently with the fact that this is the only term through which the sample size enters the expression. In addition, it is increasing in $\sigma$ whenever $\sigma>(r-c-n)/2$, a condition expected to hold in practice since $n$ is typically much larger than the selected value of $r$. Finally, $U^{\operatorname{IBP}}$ is increasing in $\gamma$,~which appears as a multiplicative factor. 
Figure \ref{fig:UB_shape_MBP} displays the credible upper bounds \eqref{eqn:UB_GeneralMBP} and \eqref{eqn:UB_IBP} as functions of the parameters $\omega$ and $c$, for selected values of the remaining hyperparameters, and confirms the analytical findings discussed above.

\begin{figure}[t!]
    \centering
    \includegraphics[width=0.9\linewidth]{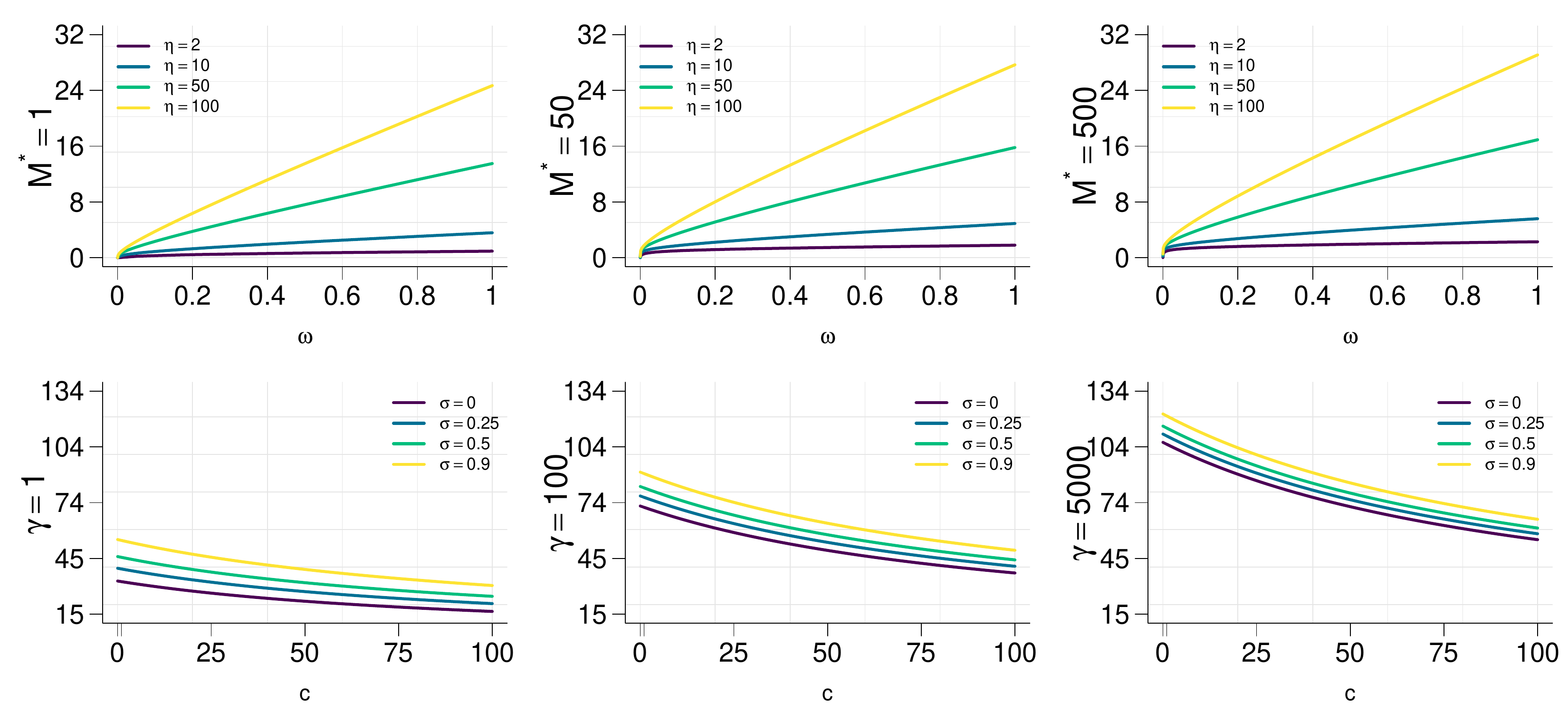}
    \vspace{-10pt}
    \caption{Bayesian credible upper bounds under the MBP (top panels) and IBP (bottom panels) priors. The vertical axis has been rescaled by $10^3$ for readability.}
    \label{fig:UB_shape_MBP}
\end{figure}

\vspace{10pt}
\subsubsection{3.2.1 Comparison with worst-case scenario confidence intervals}\label{section:c_worst}
As anticipated in Section~\ref{sec:intro}, the available upper bounds for $\Mmax$ in  \citet{Painsky24}~and~\citet{ColombiFreq} are derived by computing expectations with respect to a worst-case \smash{data-generating} distribution, thereby yielding the most conservative interval among those induced by~all~the~admissible choices for such a distribution. Conversely, recalling the result in \eqref{eqn:species_UB_general}, the  upper  bounds stated in Theorems \ref{thm:Species_UB_FDP_PD} and \ref{thm:Features_upper_bounds} are obtained by averaging with respect to a posterior distribution~that effectively combines flexible Bayesian nonparametric priors and empirical evidence from the observed sample to  localize the one-sided interval for $\Mmax$ around the data-generative mechanisms coherent with $X^n$. In fact, unlike for the upper bounds arising from worst-case scenario analyses, those we derive exploit both structural information through the prior and empirical evidence from the data.  As clarified in Section~\ref{sec:comp_and_infer}, the latter enter not only the posterior expectation in \eqref{eqn:species_UB_general}, but also estimation of the prior hyperparameters, resulting in more informative intervals than those available from worst-case analyses, which largely discard data-driven information.

To clarify the above discussion, let us consider the more challenging unbounded-alphabet~case. 
In the species setting, the worst-case frequentist confidence intervals depend only on the sample size $n$, and not on the composition of the observed sample $X^n$. By contrast, even for fixed~hyperparameters, our credible intervals already adapt to the number $K_n$ of discovered species. Similar gains can be highlighted for the  features setting by rewriting the $r$-th power of the quantity~minimized in \eqref{eqn:UB_IBP} as  $(E(\sum_{j\ge 1}p_j)/\alpha)\,(B(r-\sigma,c+\sigma+n)/B(1-\sigma,c+\sigma))$, and comparing it with~the worst-case  oracle counterpart $(S_0/\alpha)\,\left((r-1)/(n+r-1)\right)^{r-1}\left(n/(n+r-1)\right)^n\,$  in \citet[Lemma 3.2]{ColombiFreq}, which assumes knowledge of the true total mass $S_0=\sum_{j\ge 1}p_{0,j}$. As~such~$S_0$ and $E(\sum_{j\ge 1}p_j)$ effectively represent the same object. This makes it possible to isolate the effect of worst-case calibration versus the proposed model-based one by comparing the remaining factors. Such a comparison is shown in Figure \ref{fig:Freq_vs_Bayes}, for $n=500$ and $r=\log(n)$, while varying~the~IBP hyperparameters. It emerges that, regardless of the chosen hyperparameters, the factor corresponding to the IBP case is always tighter, and hence, preferable to its frequentist counterpart.

\begin{figure}[t!]
    \centering
    \includegraphics[width=0.7\linewidth]{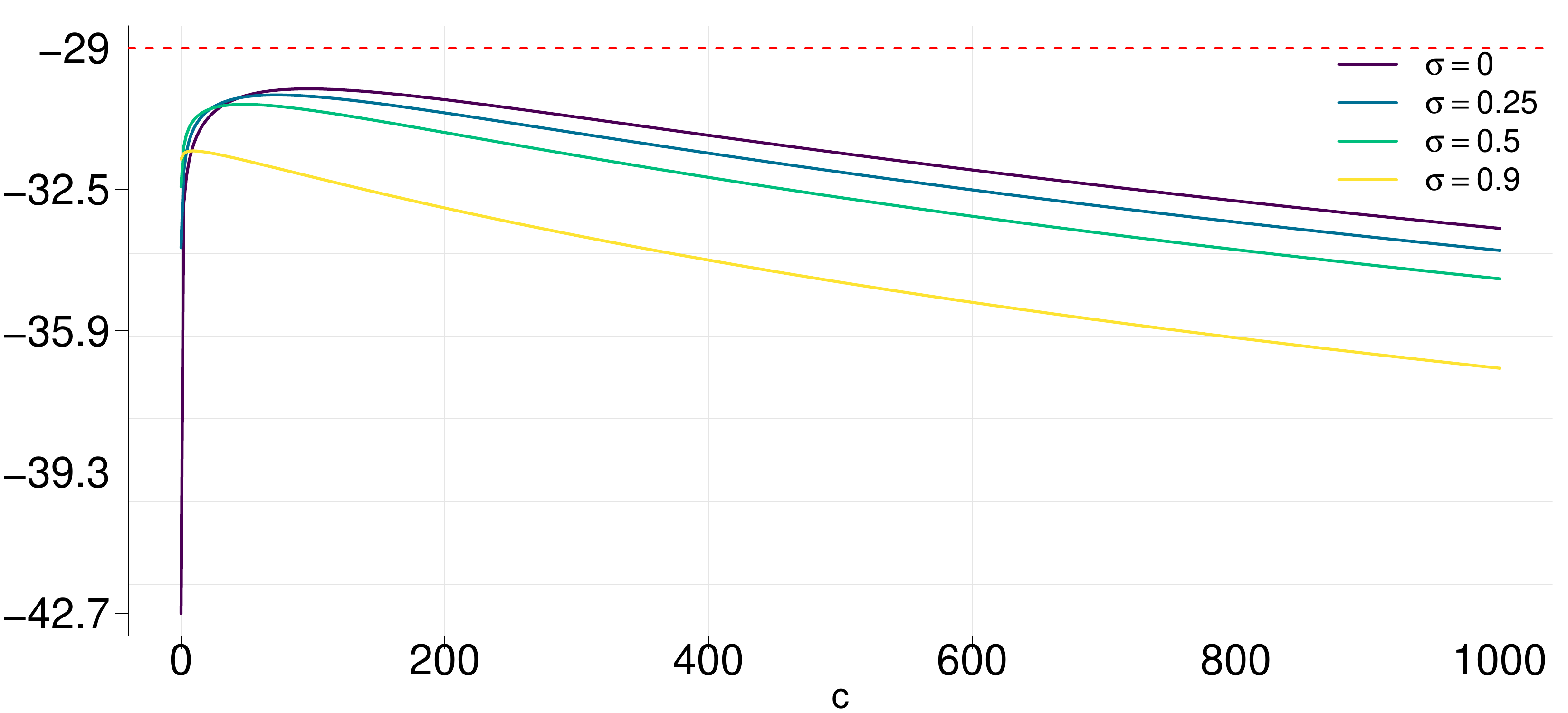}
    \caption{The dashed line shows the worst-case frequentist scaling factor. The solid ones represent~the factor resulting from our  Bayesian nonparametric solution under the IBP, as a function of the hyperparameters.}
    \label{fig:Freq_vs_Bayes}
    \vspace{-8pt}
\end{figure}

The above differences reduce as the sample size $n$ grows. In this regime, Lemma~\ref{lem:ibp-asymptotics-theorem6}~in~the~Supplementary Material proves that \eqref{eqn:UB_IBP} is minimized at $r_n = \sigma \log n - \left(\sigma+0.5\right)\log\log n+O(1)$,~and the resulting upper bound becomes  $U^{\operatorname{IBP}}=n^{-1}\left(\sigma\log n-(\sigma+\log\log n)\right)(1+O(1))\,.$
The latter matches the optimal rate established in \citet[]{ColombiFreq}, up to lower-order terms.
Similarly, in the bounded-alphabet case with known  size $M$, the optimal bound is $n^{-1}\log((M-K_n)/\alpha)$ \citep{ColombiFreq}, which is matched (asymptotically) by assuming a finite Beta prior. Indeed, simple algebra leads to $U^{\operatorname{FB}} = n^{-1}\log\left((M-K_n)/\alpha\right) + o\left(1/n\right).$}

\section{\large 4. Theory on Frequentist Coverage of Credible Intervals}\label{sec:freq_properties_all}
Expanding the discussion in Section~\ref{section:c_worst}, another key difference between our model-based perspective and existing worst-case analyses is that the latter provide one-sided confidence intervals under a frequentist framework, whereas those we derive in Theorems~\ref{thm:Species_UB_FDP_PD}--\ref{thm:Features_upper_bounds}~correspond~to Bayesian credible intervals. More specifically, current frequentist solutions \citep{Painsky24,ColombiFreq} assume the sample $X^n$ to be i.i.d. according to a true distribution $\mu_0$, and  then define~$U(X^n)$ as the upper bound satisfying, for any $\mu_0$,  $\P_{\mu_0}\!\left(\Mmax \ge U(X^n)\right) \le \alpha,$ where $\P_{\mu_0}$ is the probability under the true model.  Conversely, our intervals  in Theorems~\ref{thm:Species_UB_FDP_PD} and \ref{thm:Features_upper_bounds} are designed to satisfy the inequality in \eqref{eqn:UB_max_general}, which relies on posterior probabilities. As such, it is important to understand whether these intervals admit also a frequentist support.

Theorems \ref{thm:freq-validity-species} and \ref{thm:freq-validity-features} below, provide such a support by stating that the Bayesian credible intervals we derive in Section~\ref{section:species_results} have also valid asymptotic frequentist coverage.  That is, under suitable assumptions on $\mu_0$, the one-sided credible intervals in Theorems~\ref{thm:Species_UB_FDP_PD} and \ref{thm:Features_upper_bounds} are asymptotically valid confidence intervals. Prior to stating the results, let us recall that $\mu_0=\sum_{j\geq1}p_{0,j}\delta_{\xi_{j}}$ belongs~to~the class of regularly varying distributions if
\begin{equation}\label{eq:reg_vary}
    \bar\nu_0(x) := \# \{ j \ge 1 : p_{0,j} > x \}, \sim c_0 x^{-\sigma_0} L(1/x),
    \qquad x \downarrow 0,
\end{equation}
for constants $\sigma_0$, $c_0$ and a positive slowly-varying function $L$, and furthermore
\begin{equation}\label{eq:smoothness-rv}
    0 \le \bar\nu_0(x) - \bar\nu_0((1+t)x) \le C_I t \bar\nu_0(x),
    \qquad 0 < x < x_0,\quad 0 \le t \le 1;
\end{equation}
for a positive constant $C_I$. Condition \eqref{eq:smoothness-rv} is a one-sided local smoothness condition on $ \bar\nu_0(x) := \# \{ j \ge 1 : p_{0,j} > x \}$; such a condition is implied, for instance, by normalized regular variation of $\bar\nu_0$ \citep[Chapter 1]{bingham1989regular}. Under these settings, Theorems \ref{thm:freq-validity-species} and \ref{thm:freq-validity-features} provide frequentist support to the one-sided credible intervals in Theorems~\ref{thm:Species_UB_FDP_PD} and \ref{thm:Features_upper_bounds}. 

\begin{theorem}[{Frequentist validity in the species setting}]\label{thm:freq-validity-species}
Fix $\alpha \in (0,1)$, and let $X^n$ be a sample of size $n$ from model \eqref{eqn:Definetti_model} with fixed $\mu_0 = \sum_{j \ge 1} p_{0,j} \delta_{\xi_j}$. Let $U_n^{\operatorname{FDP}}$ and $U_n^{\pityor}$ be the credible upper bounds defined in \eqref{eqn:UB_FDP} and \eqref{eqn:PD_UB_species}, respectively, with the subscript $n$ emphasizing the dependence~on the sample size $n$. Then, as $n \to \infty$,
\begin{enumerate}
    \item if $\mu_0$ has support on a finite number of points, we have that
    \[
    \P_{\mu_0} ( \Mmax(X_n;\mu_0) \le U_n^{\sharp} ) \to 1, \quad U_n^\sharp \in \{U_n^{\operatorname{FDP}}, U_n^{\operatorname{PYP}}\}.
    \]
    
    \item if instead $\mu_0$ belongs to the class of regularly varying distributions, 
   with $0  < \sigma < \sigma_0 + 0.5$, 
       \[
        \P_{\mu_0} ( \Mmax(X_n;\mu_0) \le U_n^{\pityor} ) \to 1.
    \]
\end{enumerate}
\end{theorem}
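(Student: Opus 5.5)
For part 1, the plan is to show that the event $\{\Mmax(X^n;\mu_0)>0\}$ has vanishing probability. The bound $U_n^\sharp$ is nonnegative by construction in Theorem~\ref{thm:Species_UB_FDP_PD}. Let $\mu_0$ be supported on $\{\xi_1,\ldots,\xi_{M_0}\}$ with all $p_{0,j}>0$. Every label outside the support has $p_{0,j}=0$, so $\Mmax>0$ only if some support point is unobserved. By a union bound,
\[
\P_{\mu_0}(\Mmax>0)\le \sum\nolimits_{j=1}^{M_0}(1-p_{0,j})^n\to 0,
\]
and hence $\P_{\mu_0}(\Mmax\le U_n^\sharp)\ge 1-\P_{\mu_0}(\Mmax>0)\to 1$ for both $U_n^{\operatorname{FDP}}$ and $U_n^{\pityor}$. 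No property of the bounds beyond nonnegativity is needed here.

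For part 2, I would sandwich the two quantities with a deterministic sequence $x_n$:
\[
\P_{\mu_0}(\Mmax\le x_n)\to1 \quad\text{and}\quad \P_{\mu_0}(U_n^{\pityor}\ge x_n)\to1 .
\]

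\emph{Step (a): upper bound on $\Mmax$.} By the union bound, $\P_{\mu_0}(\Mmax>x)\le\sum_{j:p_{0,j}>x}(1-p_{0,j})^n\le\sum_{j:p_{0,j}>x}e^{-np_{0,j}}$. I would split $(x,\infty)$ into geometric shells $(x(1+t)^k,x(1+t)^{k+1}]$ with fixed small $t$. The smoothness condition \eqref{eq:smoothness-rv} bounds the number of atoms in each shell by $C_I t\,\bar\nu_0(x(1+t)^k)$. Regular variation \eqref{eq:reg_vary} then gives
\[
\sum\nolimits_{j:p_{0,j}>x}e^{-np_{0,j}}\lesssim \bar\nu_0(x)e^{-nx}/(nx),
\]
since the shell sum is geometric-like once $nx\to\infty$. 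Plugging in $\bar\nu_0(x)\sim c_0x^{-\sigma_0}L(1/x)$, this tends to zero for
\[
x_n=n^{-1}\big(\sigma_0\log n-(\sigma_0+1)\log\log n+\log L(n)+\omega_n\big),\qquad \omega_n\to\infty \text{ slowly}.
\]

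\emph{Step (b): lower bound on $U_n^{\pityor}$.} Here I would use Karlin's law of large numbers for $K_n$ under regular variation: $K_n\sim\Gamma(1-\sigma_0)c_0n^{\sigma_0}L(n)$ almost surely. Then, uniformly over $r$ in the relevant range $r\ll\sqrt n$ (where $(n+\theta)_r=n^r(1+o(1))$),
\[
\log(nU_r^{\pityor})=\frac{1}{r}\big[\sigma_0\log n+\log L(n)+\log\Gamma(r-\sigma)+O_p(1)\big].
\]
For $r\gtrsim\sqrt n$ the term $(1-\sigma)_{r-1}$ dominates and $U_r$ is large, so those orders are irrelevant to the minimum. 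Using Stirling, $\log\Gamma(r-\sigma)=(r-\sigma-\tfrac12)\log r-r+O(1)$. Minimizing over $r$, in the spirit of Lemma~\ref{lem:ibp-asymptotics-theorem6}, gives $r_n\approx\sigma_0\log n$ and
\[
nU_n^{\pityor}\ge \sigma_0\log n-(\sigma+\tfrac12)\log\log n+\log L(n)-O_p(1).
\]

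\emph{Step (c): comparison.} This lower bound exceeds $nx_n$ as soon as the $\log\log n$ coefficients satisfy the required ordering with margin, possibly after absorbing $\omega_n$. The hypothesis $\sigma<\sigma_0+0.5$ is where I expect this ordering to be secured: it is used to compare second-order terms and to keep $K_n\sigma$ from being dominated, with $\sigma>0$ ensuring $\theta+K_n\sigma\asymp n^{\sigma_0}$.

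The main obstacle is making step (b) uniform over the discrete minimization in $r$ and in probability. The $O_p(1)$ errors coming from $K_n/\mathbb E K_n$, from the slowly varying $L$, and from replacing $(n+\theta)_r$ by $n^r$ must all be controlled uniformly over $1\le r\le n^{1/3}$. The second-order constants in steps (a) and (b) must also be tracked carefully enough that the hypothesis on $\sigma$ visibly closes the gap. My first attempt would be to bound $\min_r$ from below by restricting to a window $|r-\sigma_0\log n|\le\log n/2$ and handling the complement by crude monotonicity of $\log\Gamma(r-\sigma)/r$.
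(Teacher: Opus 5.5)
\textbf{The plan matches the paper's proof, but step (a) as described does not give the bound you need.} Your part 1 is the paper's argument, and so is the structure of part 2: a deterministic threshold $x_n$ with $\P_{\mu_0}(\Mmax>x_n)\to0$ and $\P_{\mu_0}(U_n^{\pityor}\ge x_n)\to1$. There the hypothesis $\sigma<\sigma_0+\tfrac12$ says exactly that the $\log\log n$ coefficient $-(\sigma+\tfrac12)$ in $nU_n^{\pityor}$ exceeds the coefficient $-(\sigma_0+1)$ in $nx_n$. The trouble is the claimed bound $\bar\nu_0(x)e^{-nx}/(nx)$, on which everything hinges.

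With a \emph{fixed} ratio $t$, the first shell $(x,x(1+t)]$ alone can hold $C_I t\,\bar\nu_0(x)$ atoms, and each can only be bounded by $e^{-nx}$. So the shell decomposition gives $\Psi_n(x)\lesssim t\,\bar\nu_0(x)e^{-nx}$, and the factor $1/(nx)$ is lost. That factor is worth a full $\log\log n$. Without it, $nx_n$ could carry at most $-\sigma_0\log\log n$, and step (c) would require $\sigma<\sigma_0-\tfrac12$.

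To recover the factor you must resolve atoms at scale $1/n$, i.e.\ take $t=t_n=1/(nx_n)\to0$. Then:
\begin{itemize}
\item each shell holds at most $C_I\bar\nu_0(x)/(nx)$ atoms;
\item $e^{-nx(1+t_n)^k}\le e^{-nx-k}$, so the shell sum is geometric;
\item atoms above $x_0$ contribute an exponentially negligible amount.
\end{itemize}
This is why \eqref{eq:smoothness-rv} is imposed uniformly down to $t=0$; for $t$ bounded away from zero it is essentially implied by regular variation. The paper implements the same idea through the identity $\Psi_n(x)=e^{-nx}\int_0^\infty e^{-y}[\bar\nu_0(x)-\bar\nu_0(x+y/n)]\,dy$, applying the smoothness condition with $t=y/(nx)$ (Lemma~\ref{lem:psi-expansion}).

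\textbf{A second, smaller gap concerns $\omega_n$.} With your $x_n$, the bound becomes of order $e^{-\omega_n}L(1/x_n)/L(n)$. Since $1/x_n\asymp n/\log n$, the ratio $L(1/x_n)/L(n)$ is only $(\log n)^{o(1)}$ and may diverge, so a merely ``slowly'' diverging $\omega_n$ need not kill it. Take $\omega_n=\delta\log\log n$ with $0<\delta<\sigma_0+\tfrac12-\sigma$ and invoke Potter's bounds. This is the paper's fixed $c\in(\sigma+\tfrac12,\sigma_0+1)$, and it preserves the margin needed in (c).

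With these two repairs your argument goes through. In (b) you only need a lower bound on $nU_n^{\pityor}$, so you can skip the uniqueness and exact location of the minimizer established in Lemma~\ref{lem:r-minimizer}. The uniformity you flag as the main obstacle is mild:
\begin{itemize}
\item the randomness enters only through the $r$-independent term $\log(\theta+\sigma K_n)$;
\item a lower bound over real $r\ge1$ automatically covers the minimum over integers.
\end{itemize}
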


\begin{theorem}[{Frequentist validity in the features setting}]\label{thm:freq-validity-features}
Fix $\alpha \in (0,1)$. Let $X^n$ be a sample of size $n$ from  model~\eqref{eqn:fea_bnp_model_def} with fixed $\mu_0 = \sum_{j \ge 1} p_{0,j} \delta_{\xi_j}$. Let $U_n^{\operatorname{MBP}}$ and $U_n^{\operatorname{IBP}}$ denote the credible~upper bounds defined in \eqref{eqn:UB_GeneralMBP} and \eqref{eqn:UB_IBP}, respectively, with the subscript $n$ emphasizing the dependence~on the sample size $n$. Then, as $n \to \infty$,
    \vspace{-5pt}
\begin{enumerate}
    \item if $\mu_0$ has support on a finite number of points, we have that
        \vspace{-6pt}
    \[
    \P_{\mu_0} ( \Mmax(X_n;\mu_0) \le U_n^{\sharp} ) \to 1, \quad U_n^\sharp \in \{U_n^{\operatorname{MBP}}, U_n^{\operatorname{IBP}}\}.
    \]
            \vspace{-20pt}
    \item if $\mu_0$ instead belongs to the class of regularly varying distributions, the following results hold:
    \vspace{-20pt}
    \begin{itemize}
        \item[2.1] for fixed hyperparameter values $(\sigma,\gamma,c)$, $U_n^{\operatorname{IBP}}$ yields an asymptotically valid frequentist confidence interval when $\sigma > \sigma_0$.
        \item[2.2] assume that the maximum marginal likelihood estimator of $(\sigma,c)$ is constrained to a compact parameter set $\Theta \subset \{(\sigma,c):0<\sigma<1,\ c>-\sigma\}$. More explicitly, there exist constants $\underline{\sigma},\overline{\sigma}$ and $\eta>0$ such that, for every $(\sigma,c)\in\Theta$, $0<\underline{\sigma}\leq \sigma\leq \overline{\sigma}<1$, $c+\sigma\geq \eta$. Let $(\hat\sigma_n,\hat\gamma_n,\hat c_n)$ be selected by maximum marginal likelihood over $\Theta \times \mathbb R_+$. Then,         if $\limsup_{n\to\infty}\hat\sigma_n<\sigma_0+1/2$,  $U_n^{\operatorname{IBP}}$ yields an asymptotically valid frequentist confidence interval.
    \end{itemize}
\end{enumerate}
\end{theorem}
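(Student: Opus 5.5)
The approach is to compare the random quantity $\Mmax(X^n;\mu_0)$, under the true model, with the deterministic or slowly varying size of $U_n^{\operatorname{IBP}}$ and $U_n^{\operatorname{MBP}}$. Under the Bernoulli process sampling in \eqref{eqn:fea_bnp_model_def}, feature $j$ remains unseen after $n$ draws with probability $(1-p_{0,j})^n$, independently across $j$. A union bound therefore gives, for any deterministic $u_n$,
\[
\P_{\mu_0}(\Mmax \ge u_n)\le \sum\nolimits_{j:p_{0,j}\ge u_n}(1-p_{0,j})^n\le \bar\nu_0(u_n^-)\,e^{-nu_n}.
\]
The plan is to show that $U_n^\sharp$ is, with probability tending to one, bounded below by a deterministic $u_n$ for which the right-hand side vanishes.

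\emph{Part 1 (finite support).} Let $p_{\min}=\min_j p_{0,j}>0$. Then $\P_{\mu_0}(K_n<M_0)\le M_0(1-p_{\min})^n\to0$, where $M_0$ is the support size, so $\Mmax=0$ with probability tending to one. Moreover $U_n^\sharp>0$ surely: for the IBP bound this holds because $B(r-\sigma,c+\sigma+n)>0$, and for the MBP bound it holds when $E[M^\star]>0$, for instance whenever $q_M$ charges values above $K_n$. The event $\{\Mmax\le U_n^\sharp\}$ therefore contains $\{K_n=M_0\}$. This needs no rate, and it also covers estimated hyperparameters.

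\emph{Part 2.1 (fixed hyperparameters, IBP).} By Lemma~\ref{lem:ibp-asymptotics-theorem6} in the Supplementary Material, $U_n^{\operatorname{IBP}}=n^{-1}(\sigma\log n-(\sigma+\tfrac12)\log\log n+O(1))$, up to the exact form of the lower-order terms. Set $u_n:=U_n^{\operatorname{IBP}}$, which is deterministic here. Regular variation \eqref{eq:reg_vary} gives $\bar\nu_0(u_n)\asymp (n/\log n)^{\sigma_0}L(n/\log n)$. Combining this with $e^{-nu_n}\asymp n^{-\sigma}(\log n)^{\sigma+1/2}$, the bound becomes $n^{\sigma_0-\sigma}$ times powers of $\log n$ and slowly varying factors, which tends to zero because $\sigma>\sigma_0$. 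Here the crude union bound is enough.

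\emph{Part 2.2 (estimated hyperparameters).} This is the main obstacle, and it explains the weaker requirement $\hat\sigma_n<\sigma_0+1/2$. Two ingredients are needed.

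First, I would establish uniform asymptotics of $U_n^{\operatorname{IBP}}$ over the compact set $\Theta$. Write $\log B(r-\sigma,c+\sigma+n)$ via Stirling uniformly in $(\sigma,c)$, and note that $\hat\gamma_n$ has a closed form given $(\sigma,c)$, namely $K_n/\sum_{i<n}B(1-\sigma,c+\sigma+i)$. Plugging this in, and using $\E K_n\asymp n^{\sigma_0}L(n)$ with concentration of $K_n$ (Poisson-binomial, variance at most mean), gives
\[
\hat\gamma_n\approx K_n n^{-\hat\sigma_n}.
\]
So $\log(\hat\gamma_n/\alpha)$ contributes $(\sigma_0-\hat\sigma_n)\log n$, and the optimised bound becomes $nU_n\approx \sigma_0\log n-(\text{const})\log\log n$, with the $\hat\sigma_n$-dependence entering through lower-order terms of the form $(\hat\sigma_n+\tfrac12)\log\log n$.

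Second, the union bound must be sharpened, because at threshold $u\approx\sigma_0\log n/n$ the count $\bar\nu_0(u)e^{-nu}$ is only of polylogarithmic order. I would instead compute $\E_{\mu_0}\#\{j:p_{0,j}\ge u,\ n_j=0\}$ by summing over dyadic-type shells $[x,(1+t)x)$. The smoothness condition \eqref{eq:smoothness-rv} controls the mass in each shell, which yields $\sum_j \indic(p_{0,j}\ge u)e^{-np_{0,j}}\lesssim \bar\nu_0(u)e^{-nu}/(nu)$. This extra factor $1/(nu)\asymp 1/\log n$ is where a power of $\log n$ is recovered. Matching the exponents of $\log n$ then shows that the expectation tends to zero exactly when the $\log\log n$ coefficient in $nU_n$ exceeds the one produced by $\bar\nu_0$, which I expect to reduce to $\hat\sigma_n<\sigma_0+1/2$ eventually. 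This bookkeeping is the delicate step. I would carry it out on the high-probability event where $K_n/\E K_n\in[1/2,2]$ and $\hat\sigma_n\le\sigma_0+1/2-\epsilon$, and use uniformity over $\Theta$ to replace the random $U_n$ by a deterministic lower envelope $u_n$.
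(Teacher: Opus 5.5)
Your proposal is correct. Parts 1 and 2.2 follow the paper's proof closely:
\begin{itemize}
\item the finite-support case is handled through $\{K_n=M_0\}$ and a union bound;
\item the profiled estimator $\hat\gamma_n=K_n/\varphi_n$ satisfies $\log\varphi_n=\hat\sigma_n\log n+O(1)$ uniformly on $\Theta$, which gives $n\widehat U_n^{\operatorname{IBP}}=\sigma_0\log n+\log L(n)-(\hat\sigma_n+1/2)\log\log n+O_{\P_{\mu_0}}(1)$;
\item the smoothness condition yields the sharpened bound $\sum_{j:p_{0,j}>x}e^{-np_{0,j}}=O(e^{-nx}\bar\nu_0(x)/(nx))$;
\item this is compared with the deterministic threshold $x_n=(\sigma_0\log n+\log L(n)-a\log\log n)/n$, where $\hat\sigma_n+1/2<a<\sigma_0+1$.
\end{itemize}
The paper proves the sharpened bound by an Abel-summation identity and splits the integral at $y=nx_n$. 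Your shells are equivalent provided the first shell has width of order $1/n$, i.e.\ $t=1/(nx)$; genuinely dyadic shells would lose the $1/(nx)$ gain. Your normalisation of $\varphi_n$ differs from the paper's by a factor bounded above and below on $\Theta$, which is harmless.

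You genuinely depart from the paper in 2.1. There the paper reuses the 2.2 machinery: an intermediate threshold $x_n$ plus the smoothness-based bound. You instead note that with fixed hyperparameters $U_n^{\operatorname{IBP}}$ is deterministic, and you apply the crude bound $\bar\nu_0(u_n^-)e^{-nu_n}$ directly at $u_n=U_n^{\operatorname{IBP}}$. The factor $n^{\sigma_0-\sigma}$ then dominates every polylogarithmic and slowly varying term. This route is shorter and needs only \eqref{eq:reg_vary}, not \eqref{eq:smoothness-rv}. It also needs only $nU_n^{\operatorname{IBP}}\ge\sigma\log n-O(\log\log n)$, not the exact $\log\log n$ coefficient.

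The sharpening is therefore essential only in 2.2, where the polynomial factors cancel and the margin is a power of $\log n$. The extra factor $1/(nx_n)\asymp1/\log n$ is what allows $\hat\sigma_n$ up to $\sigma_0+1/2$; the crude bound would give only $\sigma_0-1/2$.

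One minor point: in Part 1 your remark that $U_n^\sharp>0$, with the caveat $E[M^\star]>0$ for the MBP, is unnecessary. On $\{K_n=M_0\}$ you have $\Mmax=0\le U_n^\sharp$ trivially.
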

\vspace{-5pt}
Besides providing frequentist support, Theorems \ref{thm:freq-validity-species}--\ref{thm:freq-validity-features} crucially clarify the role of the hyperparameters in regulating the large-sample properties of the intervals  in Theorems~\ref{thm:Species_UB_FDP_PD}--\ref{thm:Features_upper_bounds}.~In~the~species setting, these hyperparameters have only a limited effect on the validity of the  frequentist results established within Theorem \ref{thm:freq-validity-species}. If $\mu_0$ has finite support, then $U^{\operatorname{FDP}}$ has valid asymptotic frequentist coverage regardless of the choice of $q_M$ and $\gamma$. If, instead, $\mu_0$ has infinite support and belongs to the class of regularly varying distributions, then $U^{\pityor}$ has valid asymptotic frequentist coverage for a broad range of values of $\sigma$, and for any $\theta$. Therefore, although  the hyperparameters remain crucial for the design of informative credible intervals and for the corresponding finite-sample coverage, these terms do not affect the limiting frequentist properties of the intervals.  

According to Theorem~\ref{thm:freq-validity-features}, the above results apply also to the features setting when  $\mu_0$~has~finite support. Conversely, in infinte-support regimes, asymptotic frequentist validity under  fixed hyperparameters holds when $\sigma > \sigma_0$,  a condition that is difficult to verify in practice.~As discussed in Section \ref{sec:comp_and_infer}, this issue can be addressed by selecting the hyperparameters via maximum marginal likelihood. In this case, the desired frequentist validity follows naturally~from~the resulting data-dependent calibration.   As a matter of fact, the assumptions on the estimator  in Theorem~\ref{thm:freq-validity-features}~are mild. We do not require consistency of the full maximum marginal likelihood estimator, nor do we require $\hat\sigma_n$ to converge to the true variation index $\sigma_0$. The proof~only~uses~the~fact that the optimization over $(\sigma,c)$ is performed on a compact subset of the positive-discount parameter space, bounded away from the boundary, and that $\hat\sigma_n$ is eventually smaller than $\sigma_0+0.5$~in~probability. Recalling \citet[][]{ascolani2026asymptotic}, $\hat \sigma_n \to \sigma_0$. Hence, $0 < \sigma < \sigma_0 + 0.5$ is a mild assumption.

\section{\large 5. Computational Aspects}\label{sec:comp_and_infer}
{
The credible intervals for $\Mmax$ derived in Theorems \ref{thm:Species_UB_FDP_PD}--\ref{thm:Features_upper_bounds} are available in closed form once both the minimizing order $r^\star$ and the prior hyperparameters have been specified. These two quantities play conceptually different roles and should therefore be treated separately. Recalling Section~\ref{section:species_results}, the order only controls the length of the interval, and hence, should be selected to minimize such a length, thereby yielding the most informative one-sided interval among those that satisfy  \eqref{eqn:UB_max_general}. As clarified in Proposition~\ref{prop:uniquness_minimizer}, this minimization problem admits a well-defined solution.

\begin{figure}[b!]
    \centering
    \includegraphics[width=0.32\linewidth]{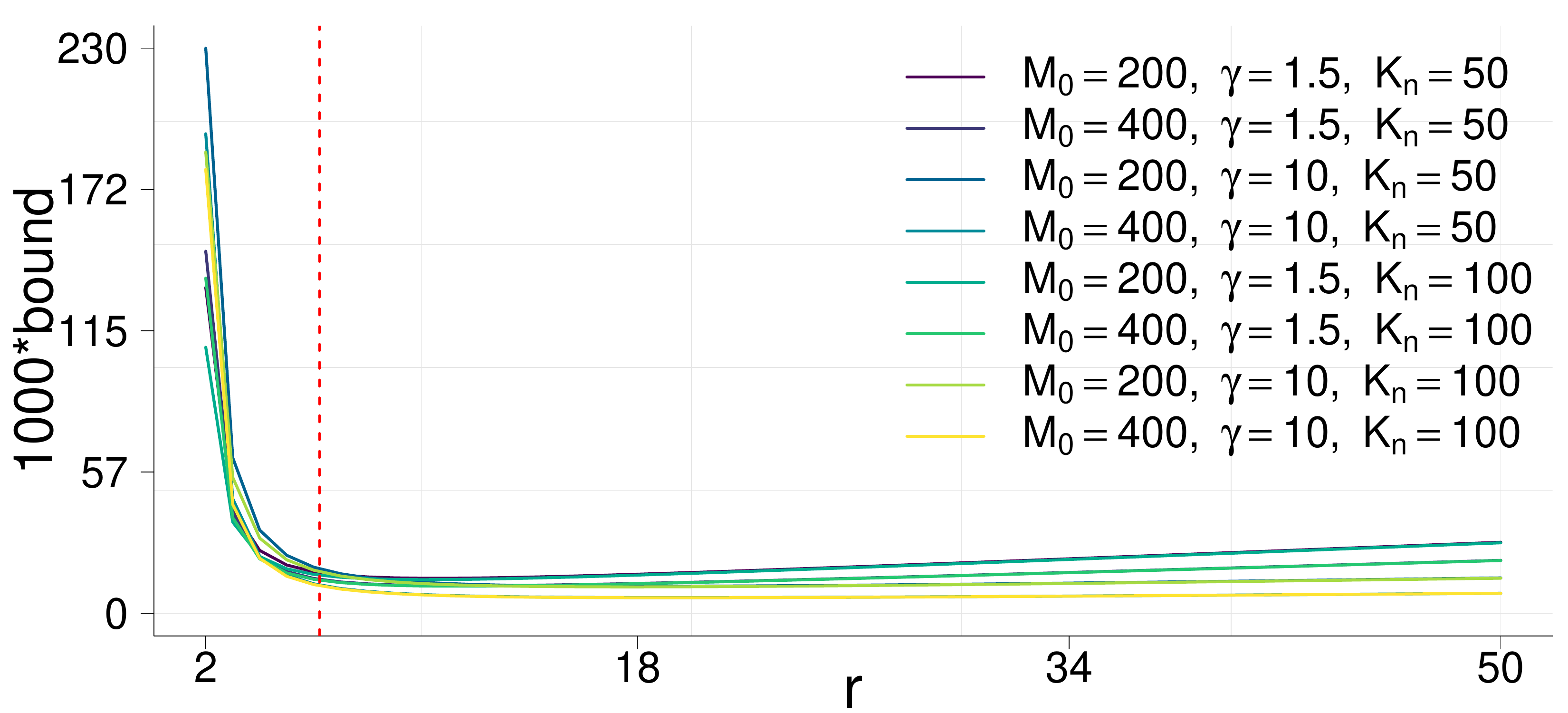}
    \hfill
    \includegraphics[width=0.32\linewidth]{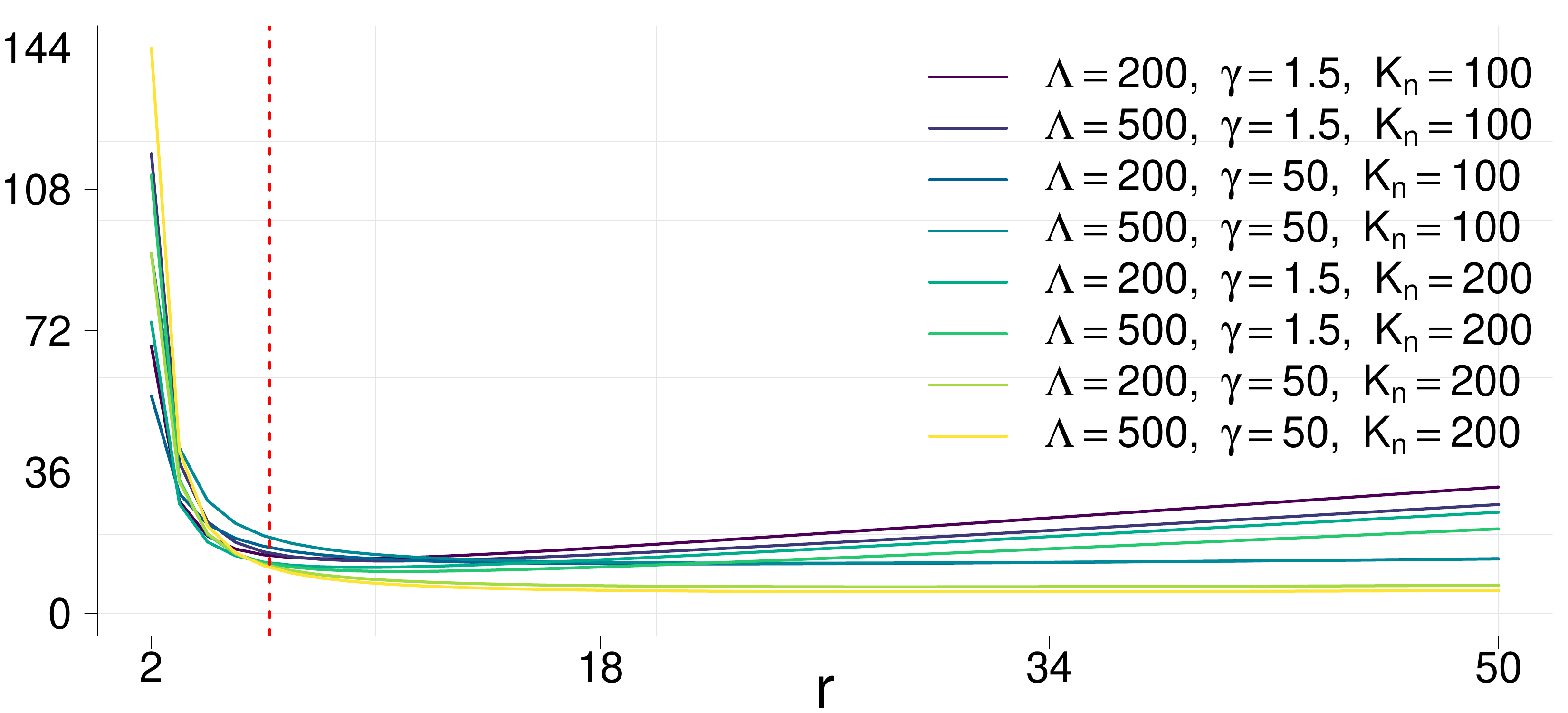}
    \hfill
    \includegraphics[width=0.32\linewidth]{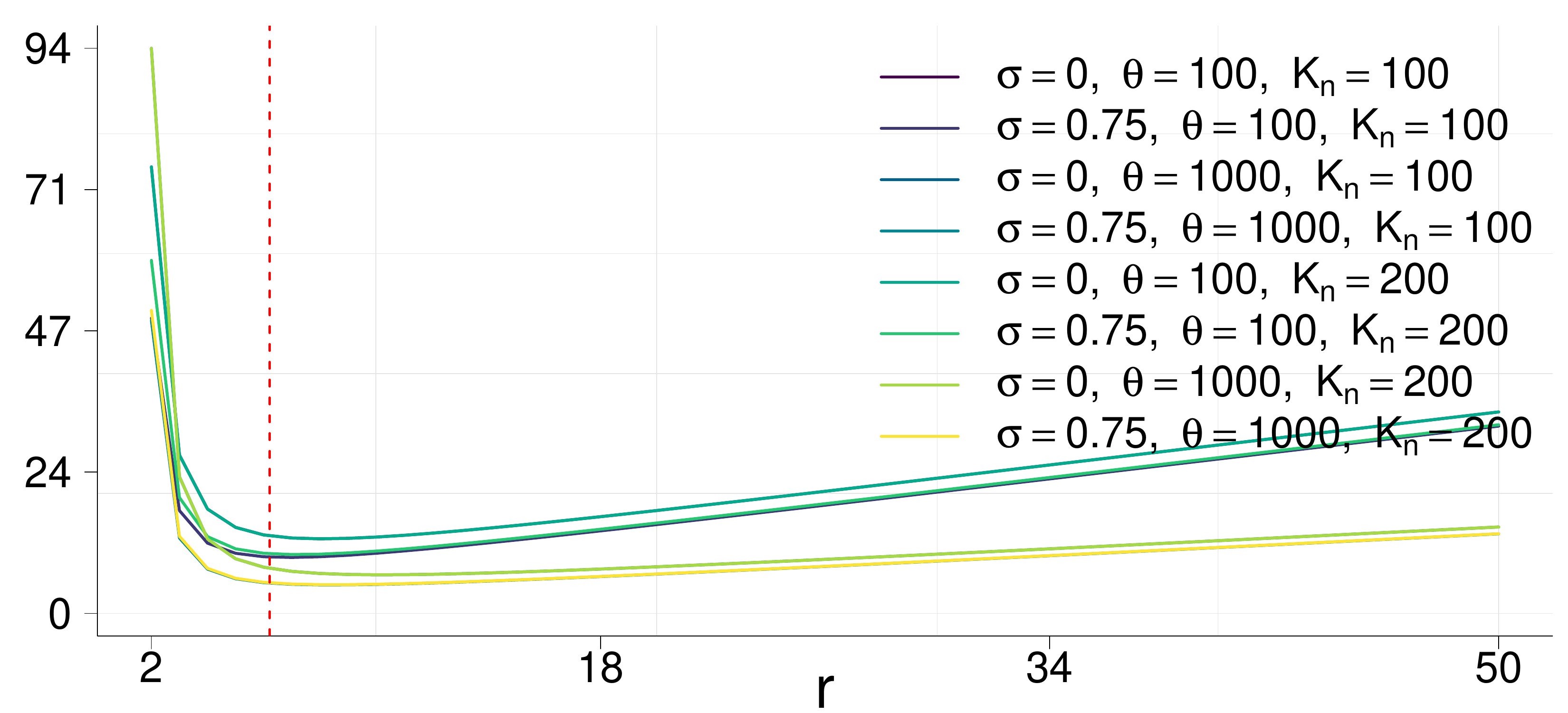}\\
    \includegraphics[width=0.32\linewidth]{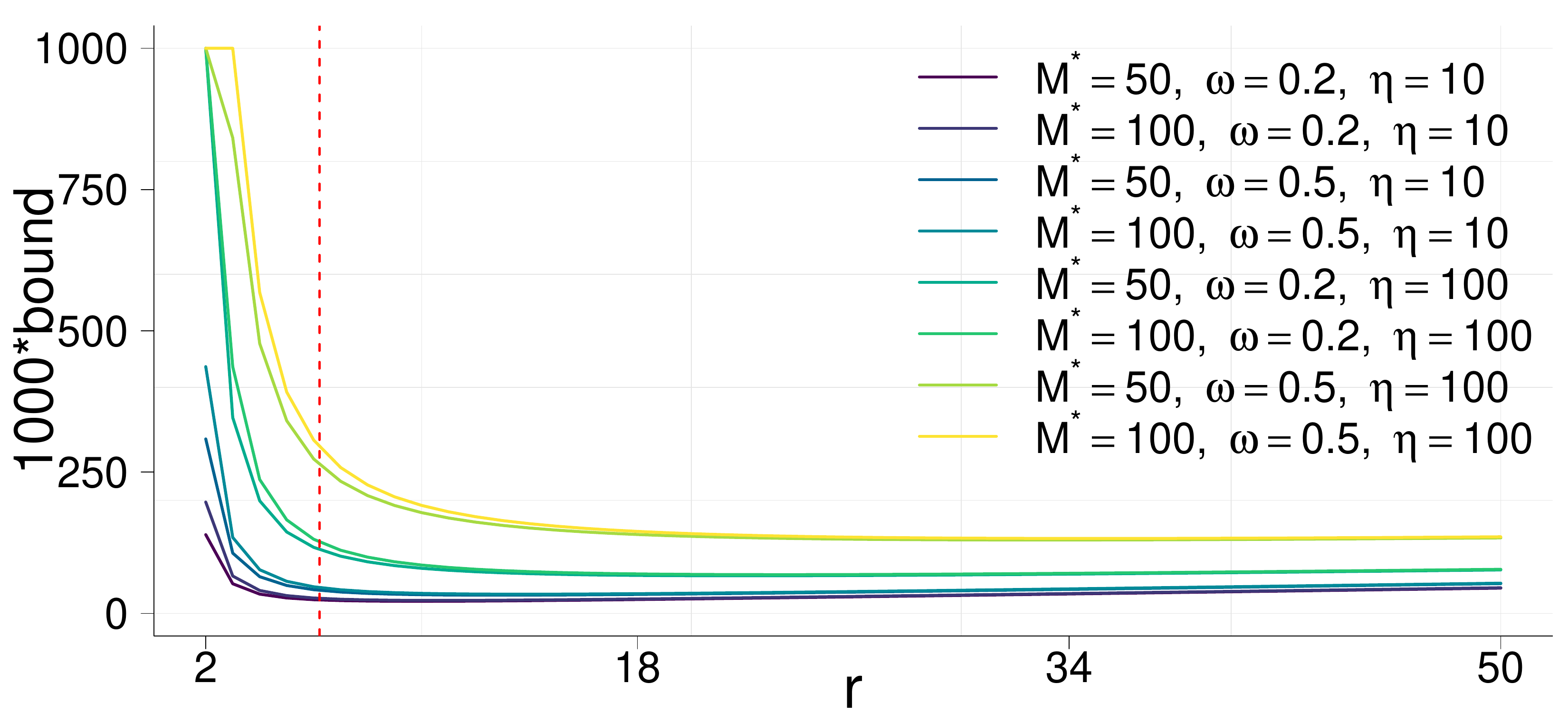}
    \hfill
    \includegraphics[width=0.32\linewidth]{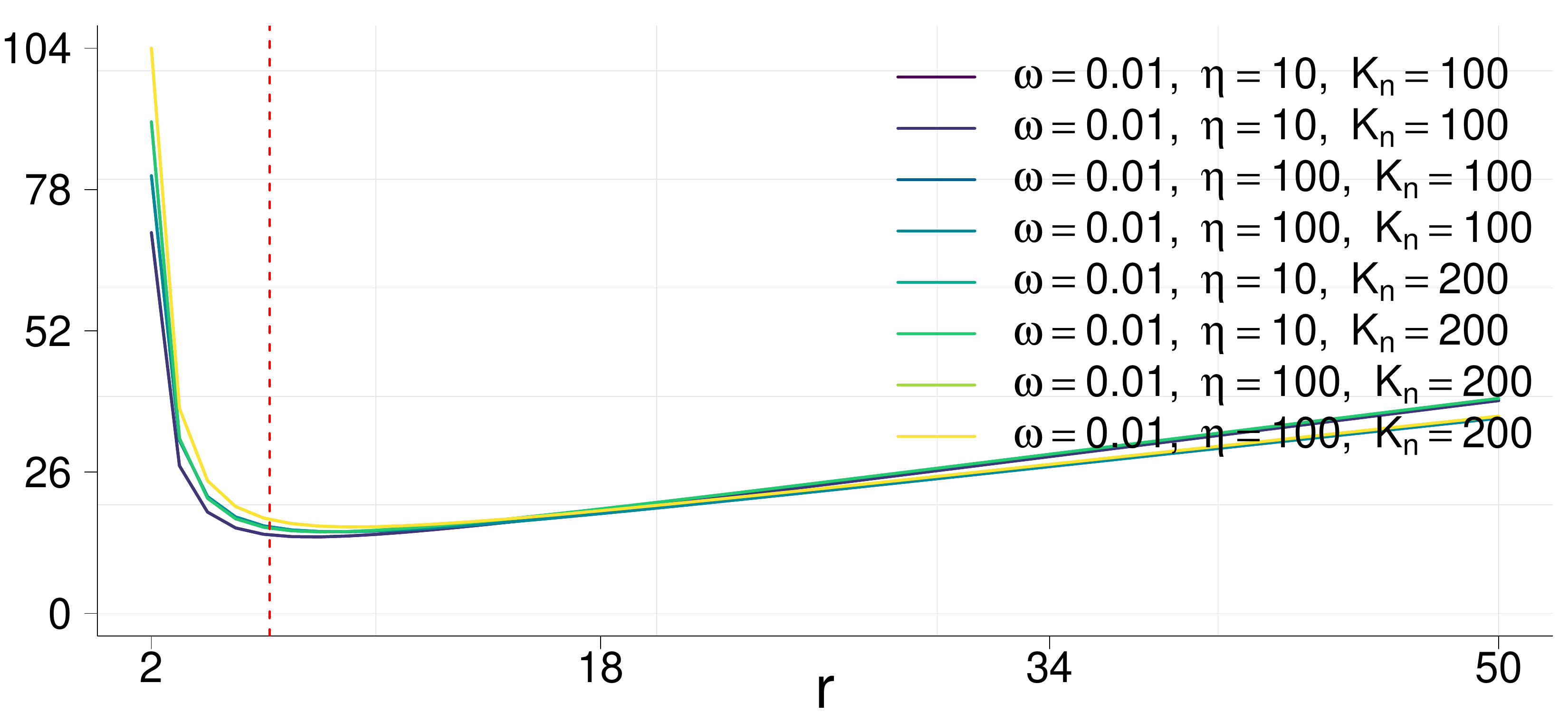}
    \hfill
    \includegraphics[width=0.32\linewidth]{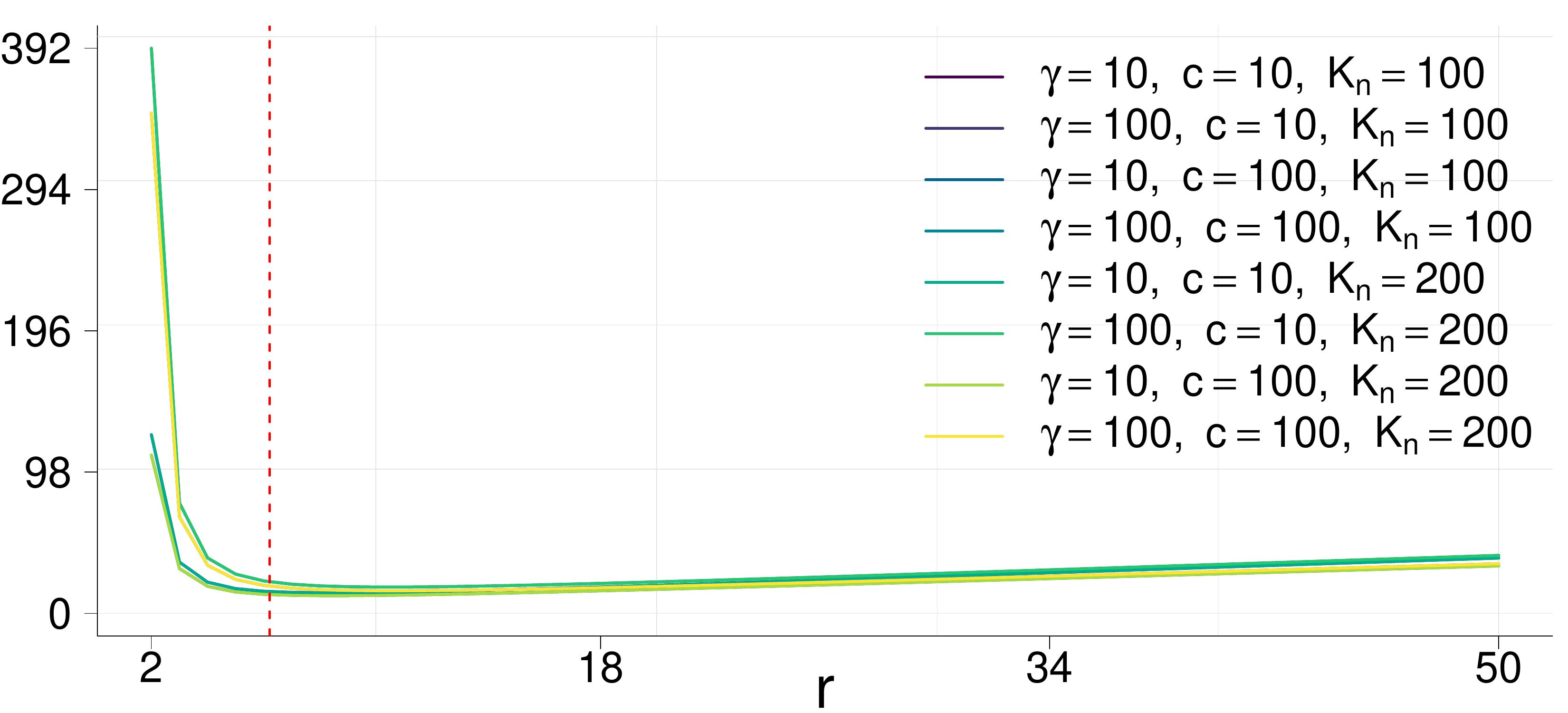}
    \vspace{-5pt}
    \caption{Credible upper bounds $U_r^{\operatorname{FD}}$ (top-left), $U_r^{\operatorname{FDP}}$ (top-middle), $U_r^{\operatorname{PYP}}$ (top-right),
    $U_r^{\operatorname{FB}}$ \smash{(bottom-left)}, $U_r^{\operatorname{N-MBP}}$ (bottom-middle) and $U_r^{\operatorname{IBP}}$ (bottom-right) displayed as a function of $r \geq 1$, for different hyperparameters. The dotted vertical line represents the starting point $\lceil \log n \rceil$ of the optimization routine. The vertical axis is rescaled by a factor of $10^3$ for readability. }
    \label{fig:U_minimiz_r}
        \vspace{-10pt}
\end{figure}

 \begin{proposition}
    \label{prop:uniquness_minimizer}
    Let $U_r(X^n)$ be any of the credible upper bounds minimized in Theorems~\ref{thm:Species_UB_FDP_PD}--\ref{thm:Features_upper_bounds}~as~a function of $r$.   Then,~the map $r\mapsto U_r(X^n)$ is convex, and $U_r(X^n)$ admits~a~unique~global~minimum in $\mathbb{R}_{\geq 1}$, which is either attained by a single integer $r^\star$ or two consecutive integers.
\end{proposition}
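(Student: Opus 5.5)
The plan is to extend each bound in Theorems~\ref{thm:Species_UB_FDP_PD}--\ref{thm:Features_upper_bounds} from integer $r$ to real $r\ge 1$. We then work with $\log U_r(X^n)$ and show that its derivative changes sign at most once. This gives a unique real minimizer. The statement about one or two consecutive integer minimizers then follows from a discrete unimodality argument.

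\textbf{Step 1 (real extension and log-convexity).} In every case write $U_r(X^n)^r=A_r/\alpha$ with $A_r=\mean{\Mr\mid X^n}$. Replacing Pochhammer symbols by Gamma ratios, $(\gamma)_r=\Gamma(\gamma+r)/\Gamma(\gamma)$, extends $A_r$ smoothly to $r\in\mathbb R_{\ge1}$. The Beta function in \eqref{eqn:UB_IBP} is already defined for real arguments. The key structural fact is that each $A_r$ is a moment of a posterior mean measure of the unseen jumps:
\[
A_r=\int_{(0,1)} x^r\,\nu_n(\D x),
\]
where $\nu_n$ is a positive measure. For the FD and FDP cases, $\nu_n$ is a mixture over $M^\star$ of $M^\star$ copies of the law of $p^*\,p'_j$, a Beta variable. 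For the PYP, it is $(\theta+K_n\sigma)\,x^{-\sigma-1}(1-x)^{n+\theta+\sigma-1}\D x$ up to constants. For the MBP, it is $E[M^\star]$ times a $\operatorname{Beta}(a,b+n)$ law. For the IBP, it is $\gamma x^{-\sigma-1}(1-x)^{c+\sigma+n-1}\D x$. One checks directly from Theorems~\ref{thm:species_post}--\ref{thm:features_post} that these reproduce the closed forms. Since $r\mapsto\int e^{r\log x}\nu_n(\D x)$ is a Laplace transform, Hölder's inequality shows that $g(r):=\log A_r$ is convex. It is also strictly decreasing, because $x<1$ on the support.

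\textbf{Step 2 (shape of $h(r)=\log U_r$).} Set $c=-\log\alpha>0$, so that $h(r)=(g(r)+c)/r$. Then
\[
h'(r)=\frac{N(r)}{r^2},\qquad N(r):=r g'(r)-g(r)-c,\qquad N'(r)=r g''(r)\ge 0.
\]
Hence $N$ is nondecreasing, so $h'$ changes sign at most once and $h$, and therefore $U_r$, is quasi-convex. Strictness of the convexity of $g$, from the non-degeneracy of $\nu_n$, makes $N$ strictly increasing. This yields a unique minimizer on $\mathbb R_{\ge1}$, including the boundary case $N(1)\ge0$, where the minimizer is $r=1$.

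\textbf{Step 3 (existence and the integer claim).} We need $N(r)>0$ for large $r$, so that the minimizer is finite.
\begin{itemize}
\item In the PYP and IBP cases, Stirling's formula gives $g(r)=-(n+\theta+\sigma)\log r+O(1)$, or the analogue with $c+\sigma+n$. Thus $rg'(r)-g(r)\to+\infty$, which dominates $c$.
\item In the FD, FDP and MBP cases the same holds, with the exponent coming from the Beta tail near $x=1$.
\end{itemize}
With a unique real minimizer $r_0$, $U_r$ is decreasing on $[1,r_0]$ and increasing beyond it. The integer minimum is therefore attained at $\lfloor r_0\rfloor$ or $\lceil r_0\rceil$, and at both only when they give equal values.

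\textbf{Main obstacle.} The literal claim that $r\mapsto U_r=e^{h(r)}$ is convex, rather than just quasi-convex, is the step I expect to be hardest. We have
\[
h''(r)=\frac{r^2g''(r)-2N(r)}{r^3},
\]
which need not be signed. I would first try to prove convexity of $e^{h}$ directly, that is, $h''+(h')^2\ge0$, using the explicit polygamma expressions for $g''$. If that fails, I would restrict the convexity claim to the relevant range of $r$ and rely on quasi-convexity, which is all that uniqueness requires.
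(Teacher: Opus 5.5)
Your argument is correct and establishes everything the paper's proof establishes, but by a somewhat different route.

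\textbf{Convexity.} The paper writes the FD, PYP, MBP and IBP bounds in the common form $U_r=[\widetilde C\,\Gamma(\widetilde a+r)/\Gamma(\widetilde b+r)]^{1/r}$. It then gets strict convexity of $F(r)=r\log U_r$ from $F''(r)=\psi_1(\widetilde a+r)-\psi_1(\widetilde b+r)>0$, and treats the FDP separately by H\"older on the series $\sum_m c_m e^{f_m(r)}$. You instead write $\alpha U_r^r=E[M_r\mid X^n]$ as the $r$th moment of the posterior mean measure of the unseen jumps. This handles all five priors at once (the paper's FDP step is a special case) and explains why the convexity holds.

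\textbf{Uniqueness and existence.} The paper argues without derivatives: two minimizers $x<y$ with common value $m$ would give $F(z)<mz$ strictly between them. Existence comes from $G(r)\to0^-$ plus compactness. Your identity $N'(r)=rg''(r)$ gives strict monotonicity on each side of $r_0$, so the integer claim follows directly (the paper only remarks on it). The price is that $g$ must be twice differentiable, which for the FDP series needs a dominated-convergence justification.

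Your existence step is only sketched for the FDP. The paper shows via dominated convergence that the smallest-$M^\star$ term dominates. You can skip these asymptotics: once $h(r)\to0^-$, $h$ cannot be nonincreasing on $[1,\infty)$, so the nondecreasing $N$ must eventually be positive. Also, strictness needs $\nu_n\neq0$. When $M=K_n$ or $E[M^\star]=0$, $U_r\equiv0$ and every $r$ is a minimizer, as the paper's remark notes.

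\textbf{On your main obstacle.} Do not try to prove $h''+(h')^2\ge0$; it is false. Since $U_r<1$ for large $r$ and $U_r\to1$, $U_r$ is strictly increasing and bounded on $[r_0,\infty)$. No convex function can behave this way: a positive right derivative at some $r_1$ would force $U_r\ge U_{r_1}+U'_+(r_1)(r-r_1)\to\infty$.

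The paper's proof never shows convexity of $U_r$ itself. The convexity it actually establishes is that of $F(r)=r\log U_r=\log(E[M_r\mid X^n]/\alpha)$, which is exactly your $g+c$. So your Step 1 already delivers the convexity the paper proves. Quasi-convexity (unimodality) of $U_r$ is all that the uniqueness and integer statements require, so you should not restrict or patch the convexity claim any further.
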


Notice that, although the order $r$ is restricted to the positive integers, Proposition~\ref{prop:uniquness_minimizer} is stated and proved by treating $U_r(X^n)$ as a function of a generic $r \in \mathbb{R}_{\geq 1}$. This choice is mainly motivated by technical considerations. More importantly, the possible existence of two global minima poses no difficulty within our setting, since we are only interested in the minimum value, rather than in the minimizer itself. Notably, Proposition \ref{prop:uniquness_minimizer} also holds under a FDP prior without any additional assumption on \(q_{M^\star}\) beyond being a probability mass function. Figure \ref{fig:U_minimiz_r} provides graphical support to  Proposition \ref{prop:uniquness_minimizer}  by displaying the credible upper bounds $U_r(X^n)$ as functions of $r$ for $n=500$ and different hyperparameters. In practice, we solve the problem by exploiting the unimodality of the sequence \(r \mapsto U_r(X^n)\), for $r \geq 1$,  with initial guess $\lceil \log n \rceil$, which is a first-order approximation of the minimizers of \eqref{eqn:PD_UB_species} and \eqref{eqn:UB_IBP} in the asymptotic regime $n \to \infty$; see Lemmas \ref{lem:r-minimizer} and \ref{lem:ibp-asymptotics-theorem6}.~The effectiveness of this initial value is also illustrated by the vertical dotted line in Figure~\ref{fig:U_minimiz_r}.~We~then evaluate the bound at the neighboring integers and move in the direction where $U_r(X^n)$~decreases. The search proceeds one step at a time until the first local increase is observed. By unimodality, that point is a global minimizer, up to a possible tie with the adjacent integer.

As  for the prior hyperparameters, these quantities  encode the global characteristics~of~the~data-generative mechanism, such as support size, evenness, and tail behavior. Therefore,~the~corresponding values should be elicited in order to provide an accurate characterization of the observed sample. This is essential for ensuring reliable inference for $\Mmax$~based~on~credible~intervals~obtained under a Bayesian model that is coherent with the data under analysis.~To~this~end,~we~rely on a plug-in empirical Bayes strategy that estimates hyperparameters~from~the~observed~sample. Specifically, let $\bm{\phi}$ be the vector of hyperparameters underlying the assumed Bayesian model, we set  $\bm{\phi}$ at the maximum a posteriori defined as
\begin{equation}\label{eqn:species_EB}
\hat{\bm{\phi}}=\operatorname*{argmax}\nolimits_{\bm{\phi}\in\Phi}\left\{\log \Pi(n_1,\ldots,n_{K_n}\mid \bm{\phi})+\log \Lcr(\bm{\phi};v)\right\},
\end{equation}
where $\Pi(n_1,\ldots,n_{K_n}\mid \bm{\phi})$ is the marginal distribution of the frequency spectrum induced by \eqref{eqn:Definetti_model} or \eqref{eqn:fea_bnp_model_def} under $\mu$, while  $\Lcr(\bm{\phi};v)$ is a suitably selected hyperprior for $\bm{\phi}$, viewed~here~as~a~regularizer. This choice facilitates further localization of posterior inference around distributions~that~are~both plausible under the prior and supported by the data, thereby downweighting~regions~of~the~model space that predict the observed occupancy profile $n_1,\ldots,n_{K_n}$ poorly. Such a localization helps in~further~sharpening the credible upper bound for $\Mmax$, by incorporating data information also via the estimated hyperparameters.

We highlight that the penalty term $\Lcr(\bm\phi;v)$ is especially needed when the marginal likelihood is nearly flat. This may occur, for instance, in the species setting under the FD and FDP priors when the observed sample is compatible with a nearly uniform distribution over the alphabet. In that regime, large changes in $\gamma$ may produce negligible improvements in the fit while substantially decreasing the credible upper bound. The regularization in \eqref{eqn:species_EB} stabilizes the fit and prevents~such overly optimistic choices of $\gamma$.  We did not observe analogous instability phenomena in the feature setting, and therefore no additional penalization was introduced there.
{This is also in line with our assumptions in Theorem \ref{thm:freq-validity-features}.} Further discussion on the choice of $\Lcr(\bm{\phi};v)$ is provided in Section~\ref{app:hy_elicitation} of the Supplementary Material. Finally, let us emphasize that while a fully Bayesian treatment~is, in principle, possible, it would introduce an additional layer of integration within the model, thus sacrificing the analytical simplicity of the credible intervals for $\Mmax$ in 
Theorems \ref{thm:Species_UB_FDP_PD}--\ref{thm:Features_upper_bounds}.

Before assessing the empirical performance of the methods developed in Sections~\ref{sec_2_mod} and \ref{sec_3_conf}, let us conclude by providing details on the use of such methods for addressing a fundamental problem that arises in the sequential sampling design inherent to both species and features settings. This problem consists in defining principled {\em stopping rules}, which determine when collecting further samples no longer yields relevant additional information about the underlying distribution over the alphabet beyond that already contained in the observed data. Despite its practical relevance and impact, the design of stopping rules has been theoretically studied and formalized only recently in \citet{ColombiFreq} through the use of the modal missing probability as a more  robust and informative summary than the classical missing mass. While this contribution has the merit of  providing the methodological foundations for addressing such a problem, it relies on a worst-case scenario analysis with focus on the features setting only. Leveraging our unified framework,  these developments can be broadened to encompass both the species and feature settings, while improving the effectiveness  of the resulting stopping rules through the more refined Bayesian model-based approach to inference on modal missing probabilities  we propose in Sections~\ref{sec_2_mod}--\ref{sec_3_conf}. More specifically, extending \citet{ColombiFreq} to our more general framework, yields stopping rules that prescribe terminating  collection of further data after $n^\star$ samples, with $n^*$ defined as
\begin{eqnarray}
    \label{eqn:Nstop_definition}
    n^\star := \min\left\{\,n\geq 1 \,:\, U(X^n) < \varepsilon\right\},
\end{eqnarray}
where, $U(X^n) $ is any of the upper bounds for $\Mmax$ derived in Theorems \ref{thm:Species_UB_FDP_PD}--\ref{thm:Features_upper_bounds}, while $\varepsilon$ represents an interpretable user-specified threshold  denoting the  largest modal missing probability~that~one is willing to tolerate. As such, stopping at $n^*$ implies that, with posterior probability at least $1-\alpha$, all the unseen labels have a population frequency at most $\varepsilon$. This provides a  rigorous quantitative criterion for declaring that the prevalence of any yet-unobserved label is practically negligible, with~the~notion~of~negligibility~determined by the user through the choice of $\varepsilon$.

\vspace{25pt}
\section{\large 6. Illustrative Simulation}\label{section:sim_study}
In the following experiments we assess the performance of the model-based credible intervals~for $\Mmax$ proposed in Sections~\ref{sec_2_mod}--\ref{sec_3_conf}, and compare the results with those of existing state-of-the-art solutions arising from the worst-case analyses in \citet{Painsky24} and \citet{ColombiFreq}. To this~end, we simulate $n$ i.i.d.\ realizations  from either model \eqref{eqn:Definetti_model} (species) or \eqref{eqn:fea_bnp_model_def} (features), under different specifications of the true underlying distribution $\mu_0$ at varying alphabet size $M_0$.

In the species setting, we set $n=500$ and let $\mu_0$ correspond to four benchmark data-generating mechanisms employed in e.g., \citet[][]{elena2025comprehensive,Painsky24}.~Specifically,~we~consider (i) a Zipf distribution, with  $p_{0,j}\propto j^{-2}$, (ii)  a geometric law yielding $p_{0,j}\propto 0.9^j$,~(iii)~a~negative binomial, with $p_{0,j}\propto [\Gamma(j+5)/(\Gamma(5)j!)] 0.003^5(1-0.003)^j$,~and,~finally,~(iv)~a~uniform, where $p_{0,j}=1/M_0$. Similarly, in the features setting we let  $n=2000$ and assess performance~under (i) a Zipf specification with  $p_{0,j}= (j+1)^{-2}$, (ii) a geometric one where $p_{0,j}=0.75^j$,~and~(iii)~a~uniform formulation setting $p_{0,j}=10^{-3}$. Recall that, as discussed within Section~\ref{sec:features}, in the feature~case the label probabilities need not sum to one over the alphabet. 

\begin{table}[t!]
\centering
\renewcommand{\arraystretch}{1.2}
\fontsize{7.5}{9.5}\selectfont
\setlength{\tabcolsep}{2pt}
\begin{tabular}{r @{\hspace{9pt}}
@{}c@{\hspace{5pt}}c@{\hspace{5pt}}c@{\hspace{5pt}}c@{\hspace{10pt}}
@{}c@{\hspace{5pt}}c@{\hspace{5pt}}c@{\hspace{5pt}}c@{\hspace{10pt}}}
\hline
$M_0$ & \multicolumn{4}{c}{Zipfs} & \multicolumn{4}{c}{Geometric} \\
\hline
 & PYP & FDP & FD & Freq & PYP & FDP & FD & Freq  \\
\hline
100  & {\bf 1.5 (0.99)} & 1.6 (0.99) & 1.6 (0.99) & 2.4 (1.00) & {\bf 1.5 (0.99)} & {\bf 1.5 (0.99)} & {\bf 1.5 (0.99)} & 2.0 (1.00)  \\
300  & {\bf 1.5 (0.99)} & 1.6 (1.00) & 1.6 (1.00) & 2.4 (1.00) &  {\bf 1.5 (0.99)} &  {\bf 1.5 (0.99)} &  {\bf 1.5 (0.99}) & 2.0 (1.00)  \\
500  & {\bf 1.2 (0.98)} & 1.3 (1.00) & 1.3 (1.00) & 1.9 (1.00) &  {\bf 1.3 (0.98)} & 1.4 (0.98) &  {\bf 1.3 (0.98)} & 1.8 (1.00)  \\
800  & {\bf 1.5 (0.99)} & 1.6 (1.00) & 1.6 (1.00) & 2.4 (1.00) &  {\bf 1.5 (0.99)} &  {\bf 1.5 (0.99)} &  {\bf 1.5 (0.99)} & 2.0 (1.00)  \\
1000 & {\bf 1.8 (0.99)} & 1.9 (1.00) & 2.0 (1.00) & 2.8 (1.00) &  {\bf 1.5 (0.99)} &  {\bf 1.5 (0.99)} & {\bf  1.5 (0.99)} & 2.0 (1.00)  \\
\hline
$M_0$ & \multicolumn{4}{c}{Uniform} & \multicolumn{4}{c}{Negative Binomial} \\
\hline
 & PYP & FDP & FD & Freq & PYP & FDP & FD & Freq  \\
\hline
100  &{\bf 1.2 (1.00)} & 1.3 (1.00) & 1.3 (1.00) & 1.4 (1.00) & {\bf 1.4 (1.00)} & 1.5 (1.00) & {\bf 1.4 (1.00)} & 1.7 (1.00)\\
300  &  3.5 (1.00) & {\bf 2.1 (1.00)} & {\bf 2.1 (1.00)} & 4.3 (1.00) & {\bf 1.1 (0.99)} & {\bf 1.1 (0.99}) & {\bf 1.1 (0.99)} & 1.3 (1.00) \\
500  & 5.0 (1.00) & {\bf 2.4 (1.00)} & 2.6 (1.00) & 7.2 (1.00) & 1.6 (1.00) & {\bf 1.3 (1.00)} & 1.6 (1.00) & 1.9 (1.00) \\
800  &  6.4 (1.00) & 3.0 (1.00) & {\bf 2.9 (1.00)} & 11.5 (1.00) & 2.7 (1.00) & {\bf 1.4 (1.00)} & 2.4 (1.00) & 3.7 (1.00) \\
1000 &  7.0 (1.00) & 3.4 (1.00) & {\bf 3.1 (1.00)} & 14.4 (1.00) & 3.5 (1.00) & {\bf 1.7 (1.00)} & 2.8 (1.00) & 5.3 (1.00)\\
\hline
\end{tabular}
\vspace{10pt}
\captionof{table}{Performance assessment in the species sampling setting. Comparison among the upper bounds of the $95\%$ one-sided intervals for $\Mmax$ obtained under the proposed model-based perspective (PYP, FDP, FD) and existing worst-case analyses (Freq), across different data-generating mechanisms and varying alphabet size $M_0$. Each entry reports the ratio between the derived upper bound and the oracle one obtained under the true data-generating mechanism (values in parentheses denote empirical coverage). Smaller ratios indicate sharper bounds. The bold values highlight the best-performing method among those achieving at least the nominal coverage level. Results are averaged over 500 independent replicated experiments.}
\label{tab:SS_species_table_unified}
\end{table}
\begin{table}[h!]
\centering
\renewcommand{\arraystretch}{1.2}
\fontsize{7.5}{9.5}\selectfont
\begin{tabular}{r @{\hspace{8pt}}
@{}c@{\hspace{2pt}}c@{\hspace{2pt}}c@{\hspace{2pt}}c@{\hspace{2pt}}c@{\hspace{10pt}}
@{}c@{\hspace{2pt}}c@{\hspace{2pt}}c@{\hspace{2pt}}c@{\hspace{2pt}}c@{\hspace{10pt}}}
\hline
$M_0$ & \multicolumn{5}{c}{Zipfs} & \multicolumn{5}{c}{Geometric}  \\
\hline
 & IBP & N-MBP & FB & Bdd & Ubd & IBP & N-MBP & FB & Bdd & Ubd \\
\hline
100  &  {\bf 1.4 (0.99)} & 1.5 (1.00) & {\bf 1.4 (0.99)} & 2.0 (1.00) & 2.2 (1.00) & 1.7 (0.99) & 1.9 (0.99) & {\bf 1.6 (0.99)} & 2.8 (1.00) & 3.6 (1.00) \\
1100 & {\bf 1.3 (0.99)} & 1.4 (1.00) & 1.4 (1.00) & 2.4 (1.00) & 2.0 (1.00) & {\bf 1.7 (0.99)} & 1.9 (1.00) & {\bf 1.7 (0.99)} & 3.8 (1.00) & 3.6 (1.00)  \\
3100 & {\bf 1.5 (0.99)} & 1.7 (1.00) & 1.6 (0.99) & 3.2 (1.00) & 2.4 (1.00) &{\bf  1.2 (0.99)} & 1.4 (1.00) & 1.3 (0.99) & 3.1 (1.00) & 2.7 (1.00)  \\
7600 & {\bf 1.4 (0.99)} & 1.6 (1.00) & 1.5 (1.00) & 3.2 (1.00) & 2.2 (1.00) & {\bf 1.2 (0.99)} & 1.4 (1.00) & 1.4 (1.00) & 3.4 (1.00) & 2.7 (1.00)  \\
9600 & {\bf 1.3 (0.99)} & 1.4 (1.00) & {\bf 1.3 (0.99)} & 3.0 (1.00) & 2.0 (1.00) & {\bf 1.7 (0.99)} & 1.9 (1.00) & 2.0 (1.00) & 4.6 (1.00) & 3.6 (1.00)  \\
\hline
$M_0$  & \multicolumn{5}{c}{Uniform} & \multicolumn{5}{c}{} \\
\hline
 & IBP & N-MBP & FB & Bdd & Ubd & &  &  &  &  \\
\hline
100  &  2.7 (1.00) & 1.3 (1.00) & {\bf 1.2 (0.96)} & 3.8 (1.00) & 3.5 (1.00) & &  &  &  & \\
1100 &   3.6 (1.00) & 1.7 (1.00) & {\bf 1.4 (1.00)} & 5.0 (1.00) & 4.4 (1.00) & &  &  &  & \\
3100 &  4.0 (1.00) & 2.1 (1.00) & {\bf 1.5 (1.00)} & 5.5 (1.00) & 4.8 (1.00)  & &  &  &  &\\
7600 &  4.3 (1.00) & 2.3 (1.00) & {\bf 1.6 (1.00)} & 6.0 (1.00) & 5.2 (1.00)  & &  &  &  &\\
9600 &  4.4 (1.00) & 2.5 (1.00) & {\bf 1.6 (1.00)} & 6.1 (1.00) & 5.3 (1.00)  & &  &  &  &\\
\hline
\end{tabular}
\vspace{10pt}
\captionof{table}{Performance assessment in the features sampling setting. Comparison among the upper bounds of the $95\%$ one-sided intervals for $\Mmax$ obtained under the proposed model-based perspective (IBP, N-MBP, FB) and existing worst-case analyses (Bdd, Ubd), across different data-generating mechanisms and varying alphabet size $M_0$. Each entry reports the ratio between~the derived upper bound and the oracle one obtained under the true data-generating mechanism (values in parentheses denote empirical coverage). Smaller ratios indicate sharper bounds. Bold values highlight the best-performing method among those achieving at least the nominal coverage level. Results are averaged over 500 independent replicated experiments.}
\label{tab:SS_features_table_unified}
\end{table}

Tables~\ref{tab:SS_species_table_unified}--\ref{tab:SS_features_table_unified} compare  the upper bounds of the $95\%$ one-sided intervals for $\Mmax$ obtained~under the proposed model-based perspective  and the existing worst-case analyses in \citet{Painsky24}~and \citet{ColombiFreq}, across the aforementioned data-generating mechanisms in 500 independent replicated experiments. In the species setting, we study, in particular, the credible upper  bounds $U^{\operatorname{FD}}$, $U^{\operatorname{FDP}}$ and $U^{\pityor}$ derived in Theorem \ref{thm:Species_UB_FDP_PD}, together with the worst-case confidence~bound~$U^{\operatorname{Freq}}$ of \cite{Painsky24}. In the feature case we focus, instead, on $U^{\operatorname{FB}}$, $U^{\operatorname{N-MBP}}$, and $U^{\operatorname{IBP}}$ from Theorem~\ref{thm:Features_upper_bounds} and two the frequentist counterparts $U^{\operatorname{Bdd}}$ and $U^{\operatorname{Ubd}}$ derived~in~\citet[]{ColombiFreq}~under bounded and unbounded alphabets, respectively. In Tables~\ref{tab:SS_species_table_unified}--\ref{tab:SS_features_table_unified}, performance is summarized by the ratio between each upper bound under analysis and the corresponding oracle counterpart computed under the true data-generating distribution $\mu_0$. Our  bounds in Theorems~\ref{thm:Species_UB_FDP_PD}--\ref{thm:Features_upper_bounds} are estimated as  in Section~\ref{sec:comp_and_infer} with the hyperpriors detailed in Section \ref{app:hy_elicitation} of the Supplementary Material.

To ensure a fair comparison, Tables~\ref{tab:SS_species_table_unified} and \ref{tab:SS_features_table_unified} also report the empirical coverage of the one-sided interval associated with each of the aforementioned upper bounds. This is computed~as~the~proportion of simulated datasets for which the interval contains the true value of $\Mmax$, and it should not fall below the nominal level of $95\%$.  Provided that this requirement is satisfied, the preferred one-sided interval is the one having the smallest, and hence more informative, upper bound.~Building on this criterion, the results  in Tables~\ref{tab:SS_species_table_unified}--\ref{tab:SS_features_table_unified} show that the proposed model-based procedures consistently produce  sharper upper bounds than the corresponding worst-case approaches, without compromising empirical coverage. These findings confirm that flexible model-based solutions localizing inference around data-generating mechanisms coherent with the observed data can produce more informative one-sided intervals for $\Mmax$ than worst-case procedures, while maintaining the same empirical coverage. The improved tightness results from the inclusion of both structural and data-driven information through the prior and model likelihood, whereas coverage preservation is facilitated by the Bayesian nonparametric formulation which retains sufficient flexibility to adapt to a broad class of data-generating mechanisms. This combination allows the proposed procedures to achieve the robustness of distribution-free worst-case solutions while delivering substantially more informative one-sided intervals. These results and conclusions were confirmed across multiple additional experiments under different settings of the true data-generating parameters. 

\renewcommand{\arraystretch}{1}
\vspace{20pt}
\section{\large 7. Organized Crime Application}\label{sec:appl}

\vspace{9pt}
\subsection{\large 7.1 Species setting}\label{sec:application_specie}
As anticipated in Section~\ref{s14}, the modal missing probability framework can play a central~role in law-enforcement decision-making by providing a principled assessment of the expected gains of investigative efforts. To illustrate the substantial benefits of this yet-unexplored perspective, we consider data extracted from the law-enforcement operation named {\em Operazione Infinito} \citep{Calderoni2017}, whose goal was to monitor and then disrupt the branch of the {\em 'Ndrangheta} Mafia \citep[see, e.g.,][]{sergi2016ndrangheta} operating in the area of Milan, Italy. Recalling Section~\ref{s14}, the judicial documents resulting from the investigations provide information on  $n=118$ criminals along with the corresponding unique membership to  $K_n=15$ observed territorial subgroups~of~the organization, known as {\em locali}. These coordinated modules constitute the core organizational units of {\em 'Ndrangheta}, motivating substantial investigative efforts to identify the complete set of {\em locali}, with a specific focus on highly populated, and hence structurally~relevant,~ones. 

The above endeavor can be naturally formalized within a species sampling framework, where criminals are the observation units and the alphabet is given by the set of all \textit{locali} operating~in Milan area. A central question in this context is whether the investigations of {\em Operazione Infinito} uncovered all \textit{locali}, and, if not, the extent to which the unseen portion comprises highly populated relevant ones. This question is further motivated by the fact that, in  wiretapped conversations,~a member of the organization refers to the presence of $20$ \textit{locali}. Such a discrepancy suggests the possible existence of an unseen portion of the organization. To this end, the left panel of Figure~\ref{fig:Mod1_Acc_Nj} reports the number $n_j$ of detected affiliates  for the $j$th observed \textit{locale}, with $j=1, \ldots, K_n$. The right panel displays instead the corresponding sample accumulation curve, a standard ecological summary describing how the number of distinct observed species~evolves~with~sample~size~\citep{colwell2004}. Since the order of appearance is unavailable in the recorded data,~the~curve~is~obtained by averaging over repeated random permutations of the sample, with the shaded gray~bands representing the corresponding $95\%$ bootstrap intervals. Notice how the curve saturates rapidly. After observing $60$ criminals, the mean number of discovered \textit{locali} is already close to $14$, out of the $15$ identified. This suggests that the investigations were effective in uncovering the backbone of the organization, with few, if any, \textit{locali} remaining undiscovered. What remains unclear, however, is whether any possibly unseen \textit{locale} is highly populated and hence operationally relevant.

\begin{figure}[t!] 
\vspace{-19pt}
    \centering 
    \includegraphics[width=0.475\linewidth]{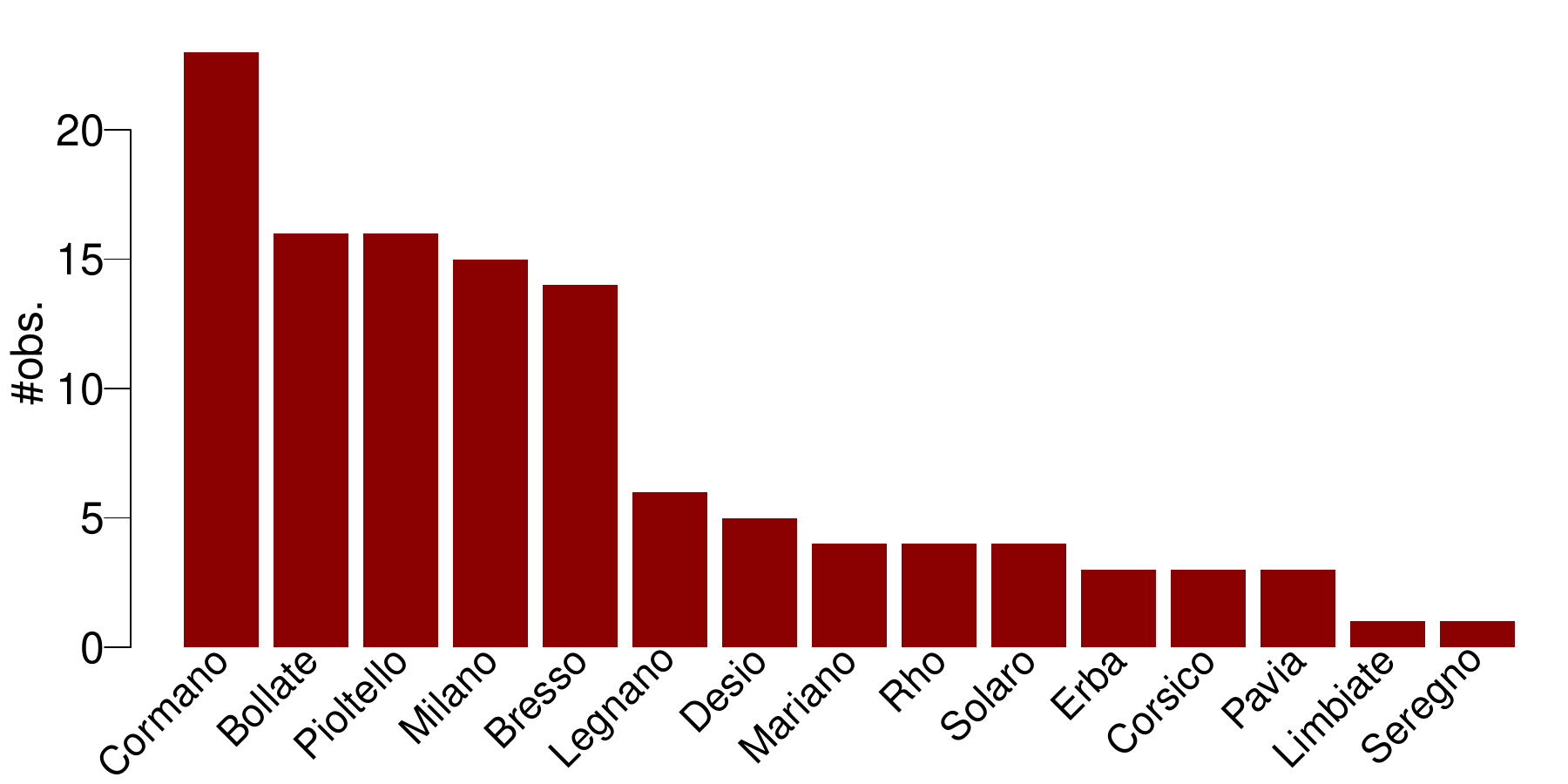} 
    \hfill 
    \includegraphics[width=0.475\linewidth]{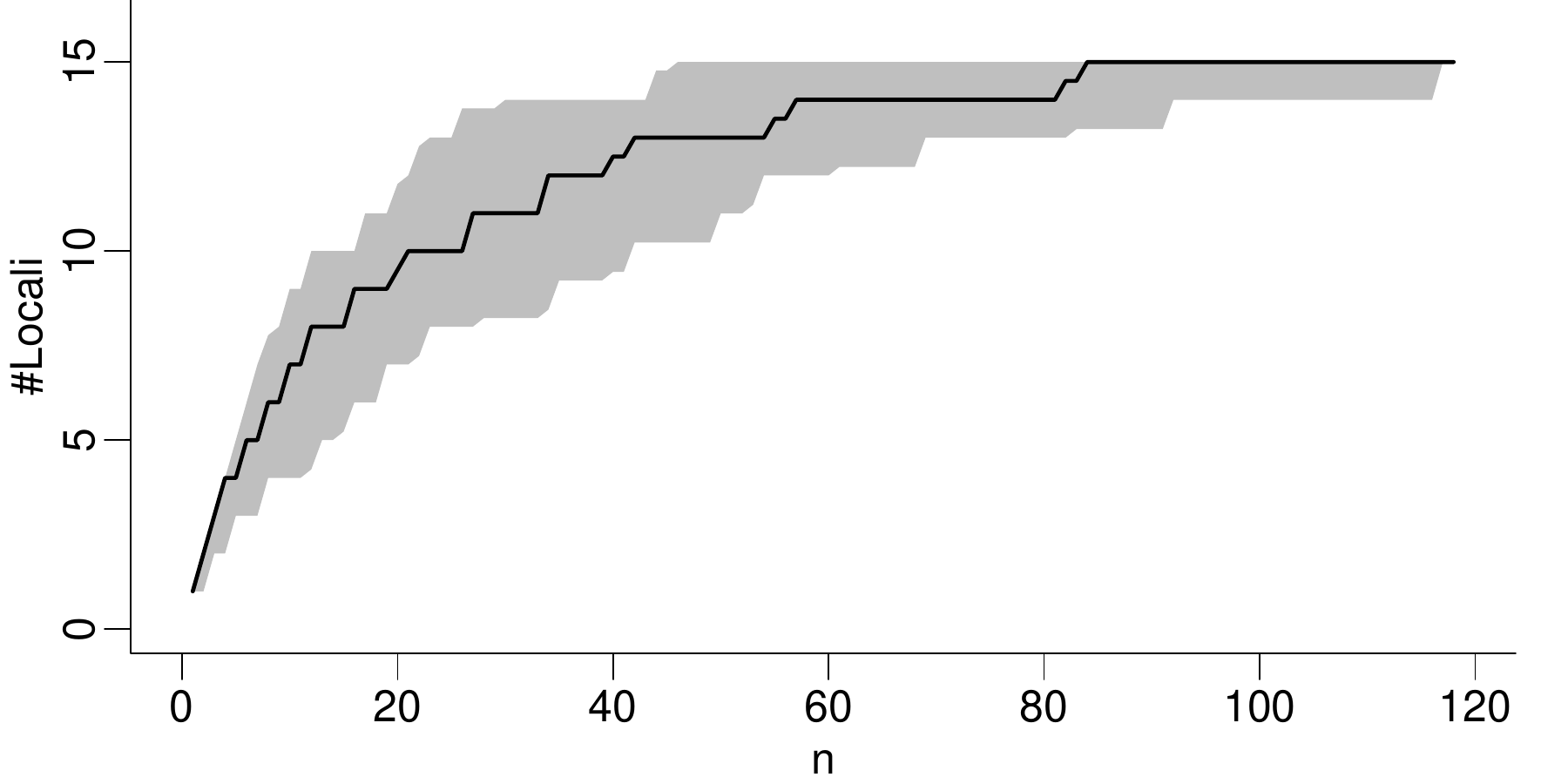} 
                \vspace{-5pt}
    \caption{Left: number of detected criminals for each observed \emph{locale}. Right: sample accumulation curve for the number of distinct observed \emph{locali}. The shaded bands represent $95\%$ bootstrap intervals.}
    \label{fig:Mod1_Acc_Nj} 
    \vspace{-10pt}
\end{figure}

We answer the above question by deriving credible upper bounds for $\Mmax$ under the proposed model-based perspective with $\pityor$, $\operatorname{FDP}$ and FD priors (see Section~\ref{sec:species} and Theorem~\ref{thm:Species_UB_FDP_PD}). The corresponding parameters are estimated via the empirical Bayes strategy described in Section~\ref{sec:comp_and_infer}, setting $M=20$ under the FD case,  in accordance with the~value mentioned in the judicial~documents. Before discussing the resulting credible upper bounds for $\Mmax$, it is important to assess whether the three Bayesian models employed provide an adequate representation of the observed data and hence yield reliable inference on $\Mmax$. To this end, under each of these models, we generate $1000$ samples from posterior of $\mu$ in Theorem \ref{thm:species_post}, and compare in Figure~\ref{fig:Crim4species_paramfit} the sampled \emph{locali} probabilities with the observed frequencies, sorting both in decreasing order. The posterior means and $95\%$ credible bands shown in Figure~\ref{fig:Crim4species_paramfit} are generally aligned with the observed frequencies, suggesting an adequate fit for all three Bayesian models, with a slightly reduced performance for  $\operatorname{FD}$. Hence, in this application, fixing $M=20$ is less suited than learning the alphabet size from the data. Consistent with this finding, under $\operatorname{FDP}$,~the~posterior~expected number of unseen~{\em locali} is $\E(M^\star \mid X)=2.66$, which yields a posterior on the total number $K_n+M^\star$ of \textit{locali} concentrated near, but below, the value of $20$ appeared in the judicial documents. 

\begin{figure}[t]
\vspace{-5pt}
    \centering
    \includegraphics[width=0.495\linewidth]{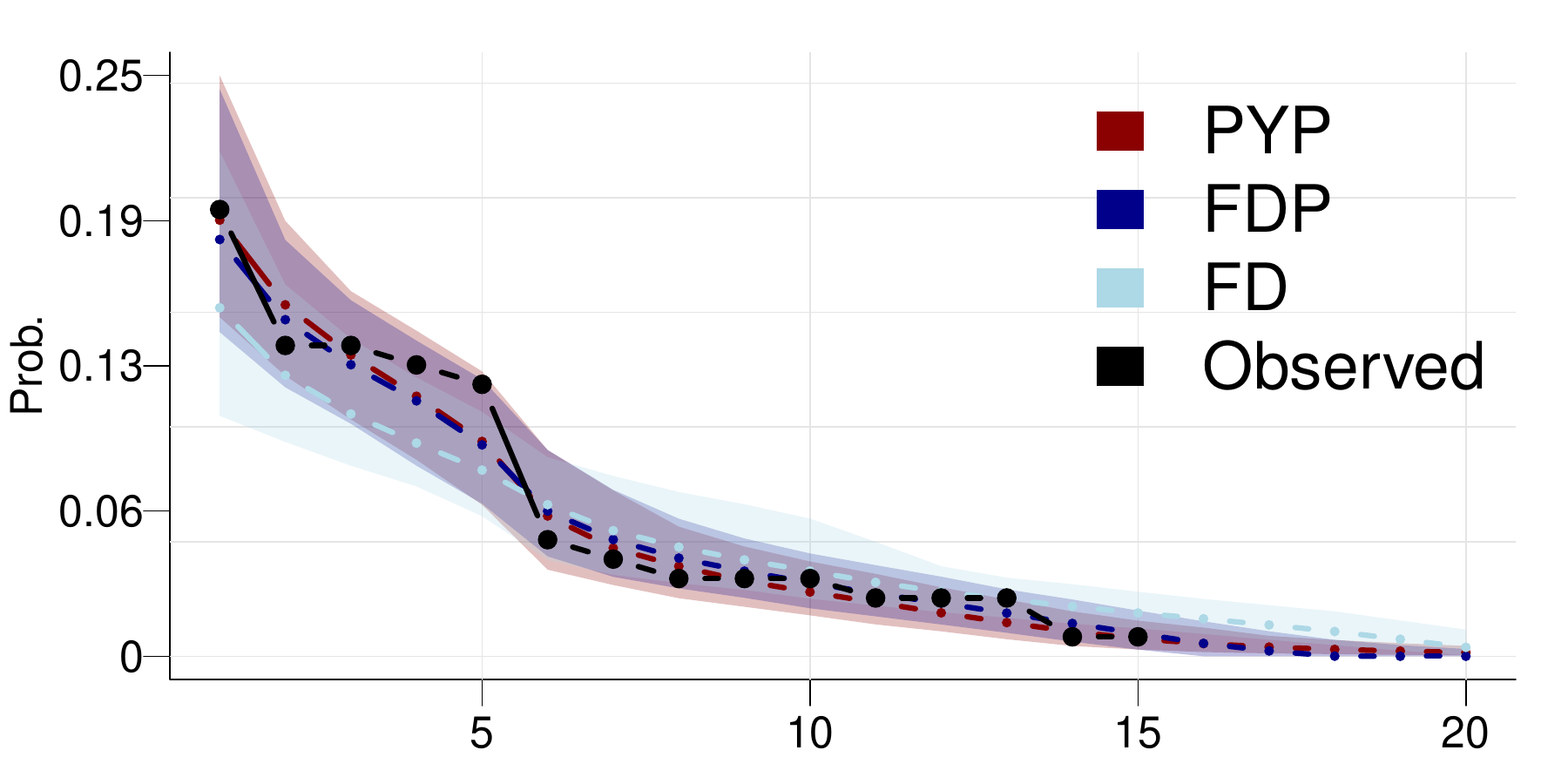}
    \hfill
    \includegraphics[width=0.475\linewidth]{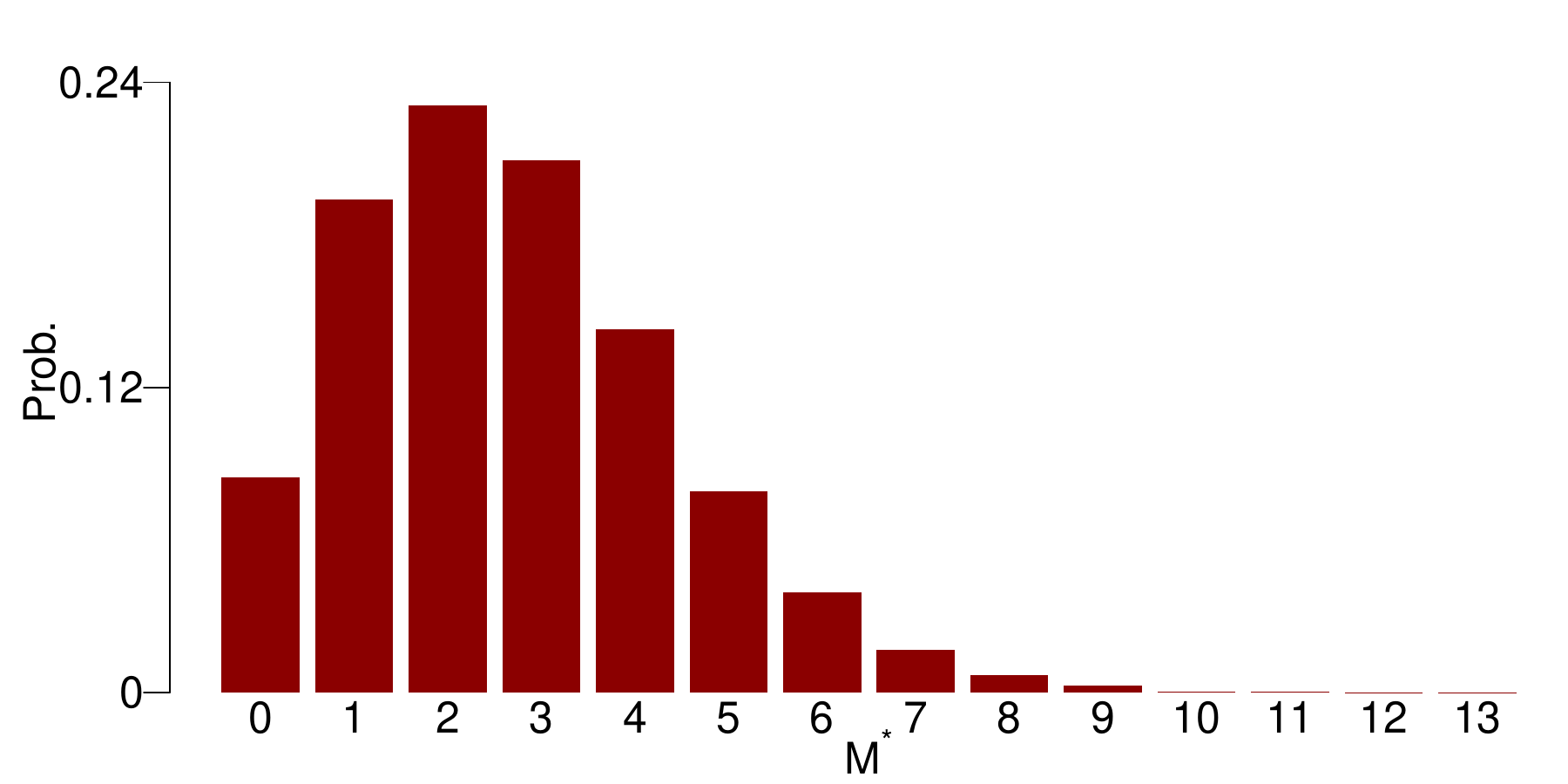}
            \vspace{-7pt}
    \caption{Left: Posterior means and $95\%$ credible bands of {\em locali} probabilities versus observed frequencies~(sorted in decreasing order).
    Right: Posterior distribution of the number of unseen \emph{locali} $M^\star$ under the FDP.}
    \label{fig:Crim4species_paramfit}
        \vspace{-2pt}
\end{figure}

The  above findings provide reassurance on the reliability of the $95\%$ Bayesian credible~upper bounds for $\Mmax$ inferred under three employed priors. These are
\[
( U^{\operatorname{FD}},\,U^{\operatorname{FDP}},\,U^{\pityor})
\,=\,
(0.0401,\,0.0398,\,0.0370)\,.
\]
Such values yield sharper, and hence more informative, one-sided intervals than the one obtained under the worst-case analysis of \cite{Painsky24}, which yields $U^{\operatorname{Freq}}=0.0495$.~In~this~criminology application, $U^{\operatorname{FD}}$, $U^{\operatorname{FDP}}$ and $U^{\pityor}$  can be interpreted as the upper bounds on the population frequency~of~the~largest unobserved \textit{locale}, which would account for at most $\approx 4\%$ of the {\em ‘Ndrangheta} members operating in the Milan area. To place this value into context, the left panel of Figure~\ref{fig:Crim4species_AlInf_SR} displays the $95\%$ confidence and credible one-sided intervals for the population frequency of both observed and unseen \textit{locali}. The unseen \textit{locali} share the same interval by construction and hence are reported only once. For the observed \textit{locali} we leverage instead standard inferential results for Dirichlet-distributed random variables in the Bayesian case, and Binomial confidence intervals within the frequentist setting. According to Figure~\ref{fig:Crim4species_AlInf_SR}, an unseen \textit{locale} with population frequency around $4\%$ is comparable in size with many of the observed \textit{locali}, and hence, not negligible. At the same time, the comparison among the intervals rules out the possibility that the investigation missed the most relevant \textit{locali}, since all the five most prevalent observed ones have population frequencies well above the modal missing probability. These findings provide further evidence of the effectiveness of the investigations underlying \textit{Operazione Infinito}.

We conclude the analysis under the species setting by illustrating the impact of the stopping rule discussed in Section \ref{sec:comp_and_infer} and formalized in \eqref{eqn:Nstop_definition}.  Here, $\varepsilon$ represents the population frequency~of the largest unseen \textit{locale} that law-enforcement considers negligible enough to be safely missed.~The right panel of Figure~\ref{fig:Crim4species_AlInf_SR} displays the optimal stopping sample size $n^*$, as a function of $\varepsilon$, resulting from our proposed upper bounds and the worst-case one of \cite{Painsky24}. The latter systematically requires a larger sample size to fall below $\varepsilon$ than any of our three model-based alternatives. For instance, taking $\varepsilon=0.05$ as a reference value, the stopping rules based on $U^{\operatorname{FD}}$, $U^{\operatorname{FDP}}$, and $U^{\pityor}$ indicate that the investigation could have been stopped after profiling about \smash{$90$ criminals}. By contrast, the stopping rule based on $U^{\operatorname{Freq}}$ would have required monitoring roughly $30$ additional affiliates. This provides a concrete illustration of the practical impact of the sharper bounds we derive for the modal missing probability.

\begin{figure}[t]
\vspace{-15pt}
    \centering
    \includegraphics[width=0.495\linewidth]{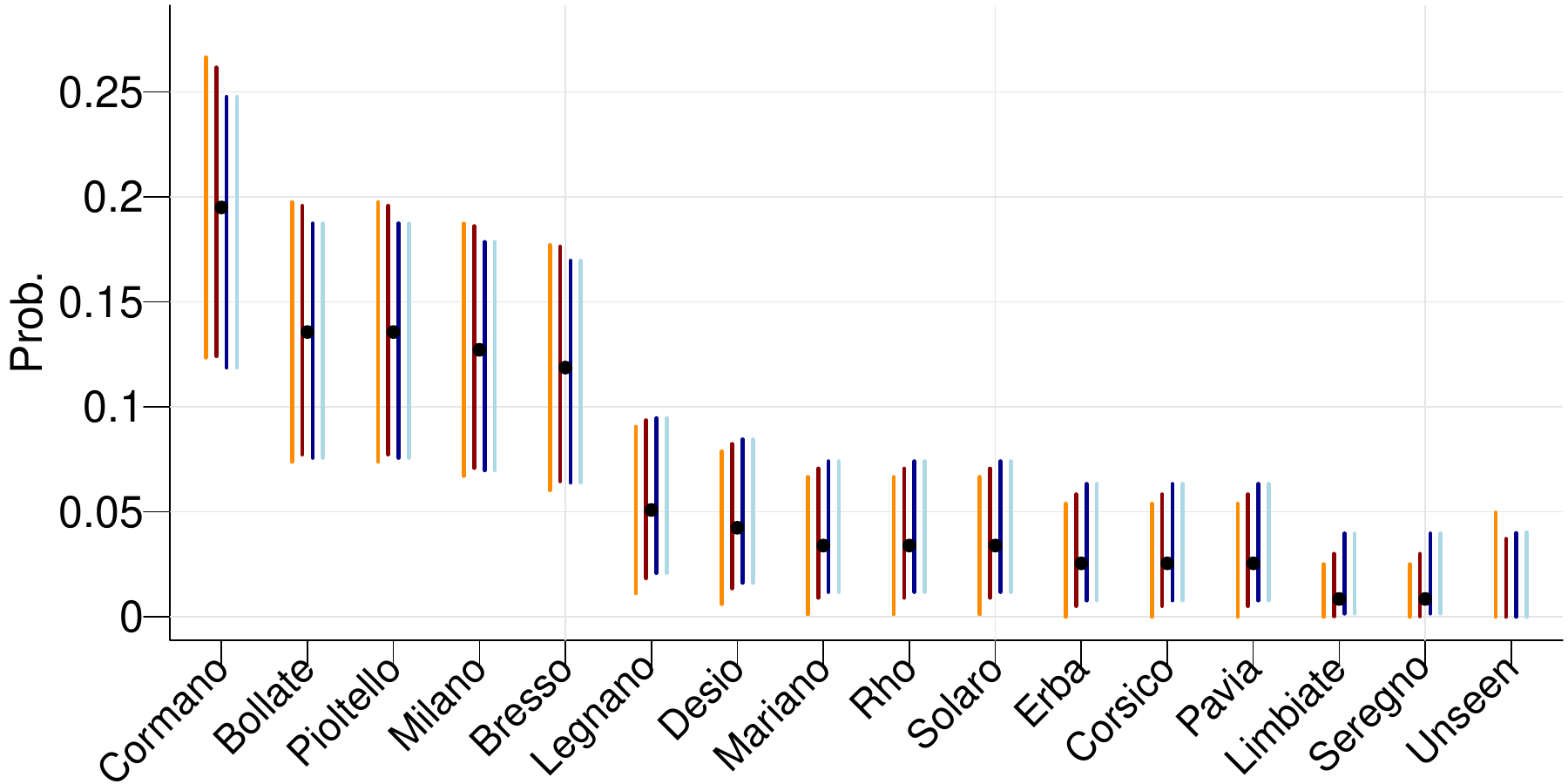}
    \hfill
    \includegraphics[width=0.495\linewidth]{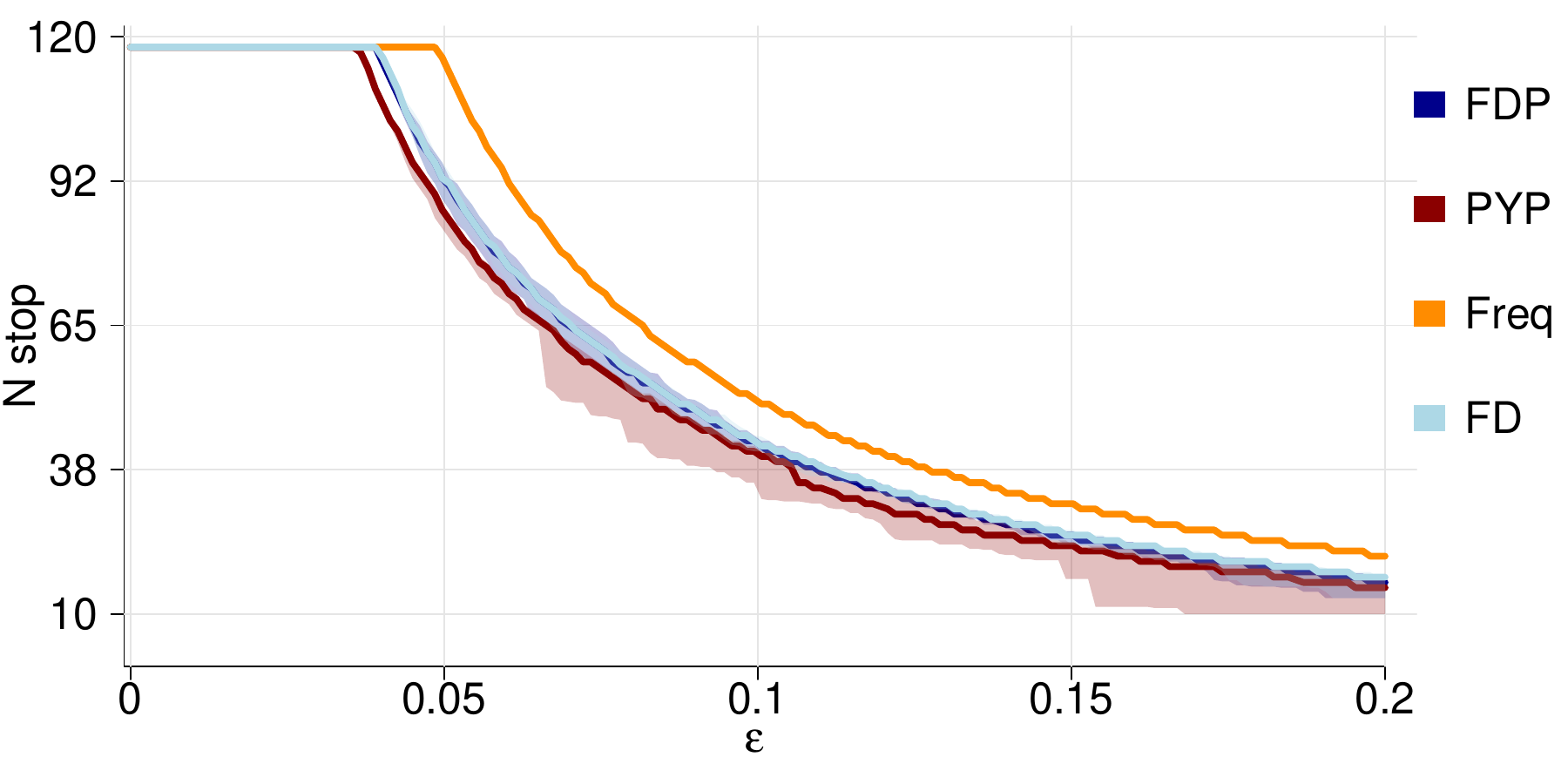}
            \vspace{-15pt}
    \caption{Left: Credible and confidence intervals for the population frequency of observed and unseen {\em locali}. Right: Estimated stopping sample size $n^*$ as a function of the threshold $\varepsilon$.}
    \label{fig:Crim4species_AlInf_SR}
            \vspace{-12pt}
\end{figure}

\vspace{15pt}
\subsection{\large 7.2 Features setting}\label{sec:application_features}
Besides studying unseen modular subgroups, a central challenge in law-enforcement investigations is assessing the presence and relevance of yet-unobserved criminals. Addressing this more refined {\em boundary specification problem} \citep[e.g.,][]{campana2022studying,diviak2022key} is fundamental for understanding the extent to which the criminal organization propagates beyond what is observed during the investigations. Within {\em Operazione Infinito}, this objective can be naturally addressed under a feature sampling framework by leveraging information on the participation of the $K_n=118$ observed criminals to the $n=47$ monitored summits of the organization. Here, the summits constitute the observational units, while criminals correspond to the features, each of which~may or may not be observed at a given summit. Therefore, within this setting, inference~on~the~modal missing probability  allows us to assess whether monitoring additional summits would reveal yet-unobserved active criminals with a high probability of appearing in {\em 'Ndrangheta} meetings.

Consistent with the above objective, the left panel of Figure~\ref{fig:Crim4features_Acc_Nj} reports, for each of the~$K_n$~observed criminals, the number of monitored meetings in which that criminal was found. Notice that nearly $40\%$ of the observed members of the organization appear in only one summits, a pattern suggesting that monitoring additional meetings may reveal previously unobserved criminals. This is consistent with the sample accumulation curve in the right panel of Figure~\ref{fig:Crim4features_Acc_Nj}, which has not yet stabilized and hence indicates that further observed meetings could still reveal~unseen~affiliates.

\begin{figure}[t!]
\vspace{-15pt}
    \centering
    \includegraphics[width=0.475\linewidth]{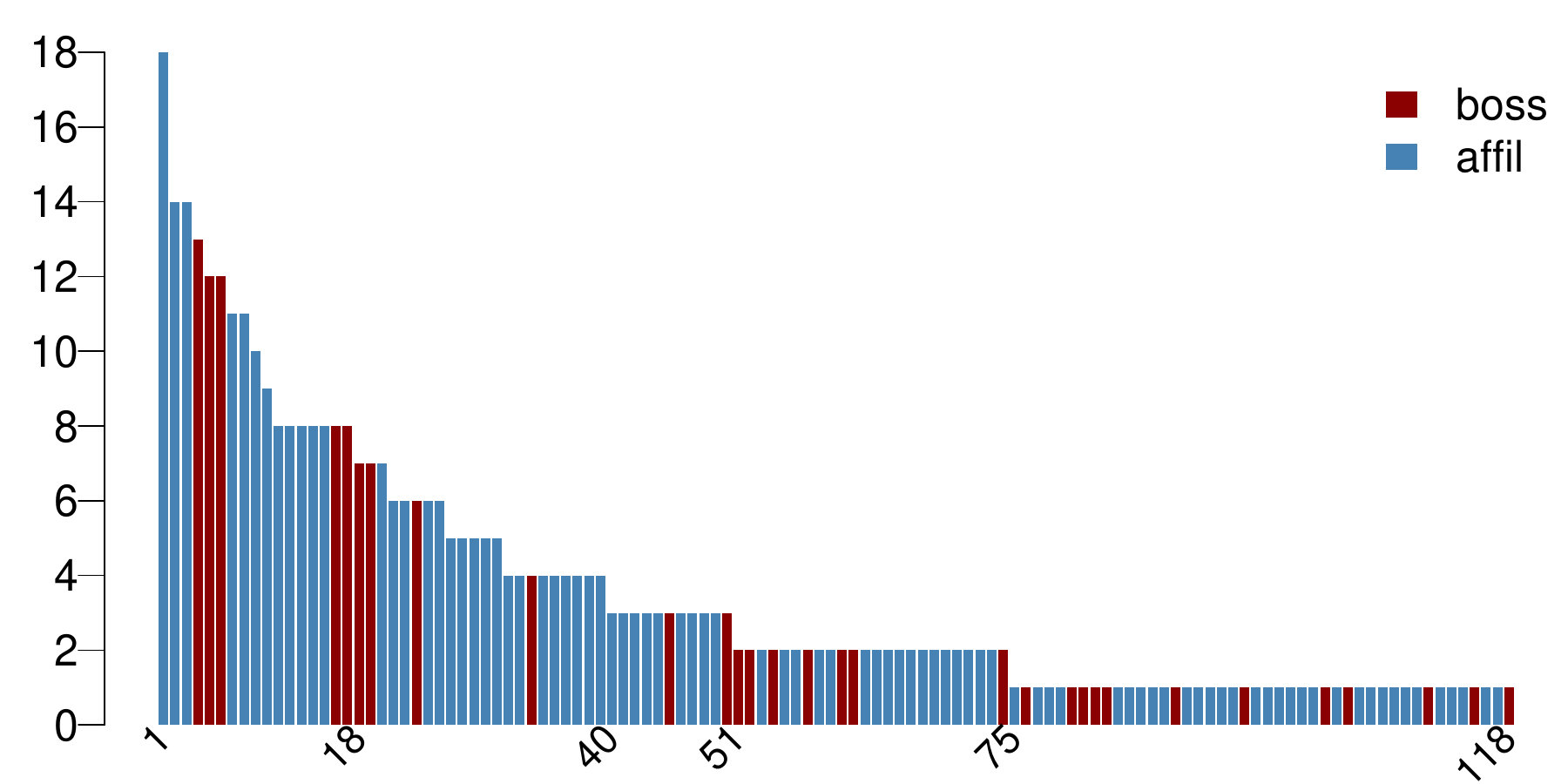}
    \hfill
    \includegraphics[width=0.475\linewidth]{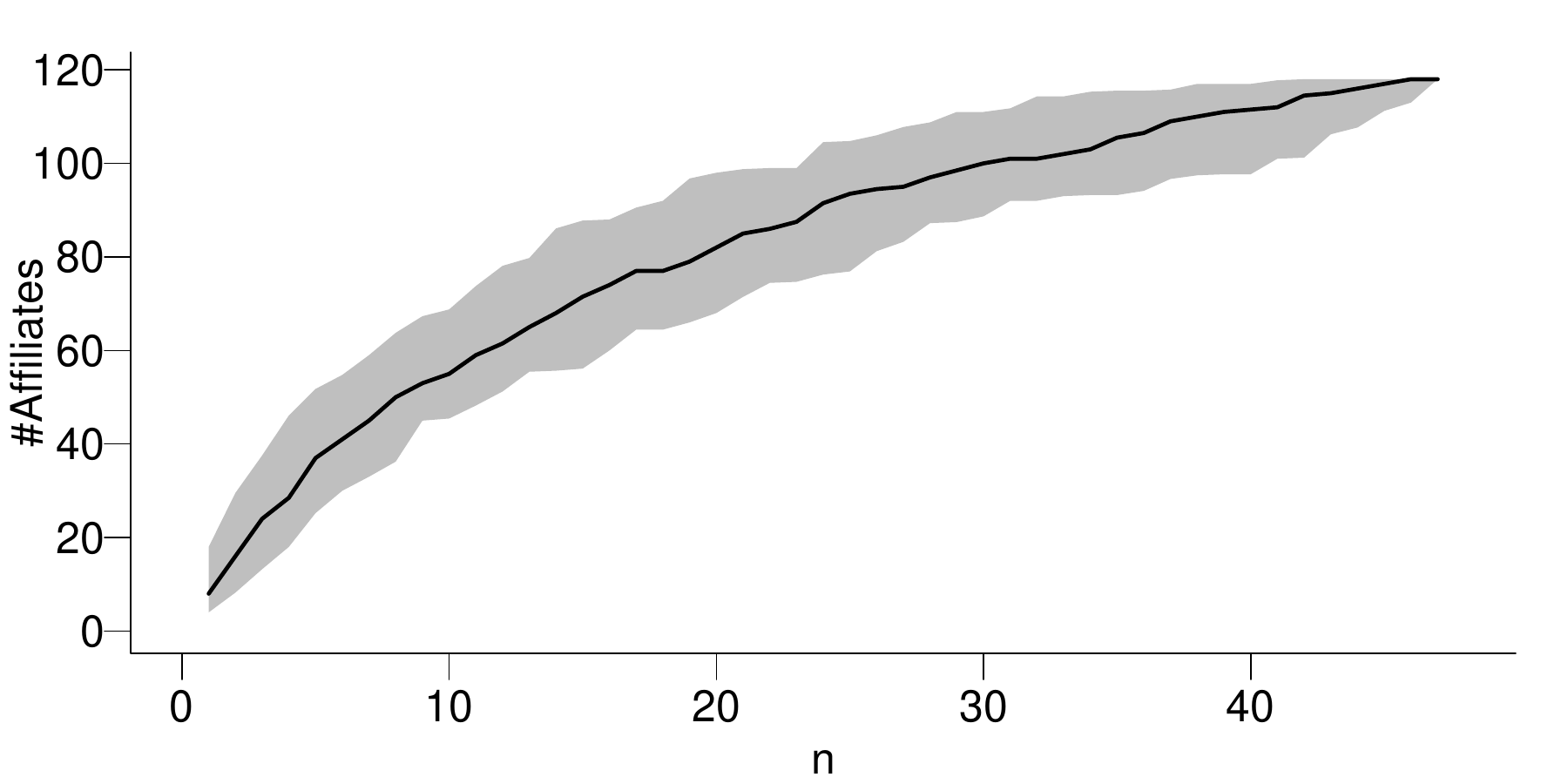}
    \vspace{-7pt}
    \caption{Left: number of summits attended by each observed criminal. Right: sample accumulation curve for the number of distinct observed criminals. The shaded bands represent $95\%$ bootstrap intervals.}
    \label{fig:Crim4features_Acc_Nj}
\end{figure}

The above results motivate inference on unseen criminals under the models we introduced~in Section \ref{sec:features}, with focus on FB, N-MBP and IBP priors  (see also Theorem~\ref{thm:Features_upper_bounds}). As for the species setting in Section~\ref{sec:application_specie}, also in this case the parameters of the different priors are estimated via the empirical Bayes strategy outlined in Section \ref{sec:comp_and_infer}, setting $M=200$ for the total number of affiliates in the FB case. Similarly~to Figure~\ref{fig:Crim4species_paramfit} in the species case, also Figure~\ref{fig:Crim4features_paramfit} confirms~that~these~Bayesian models achieve an accurate characterization of the distribution underlying the observed criminal appearance frequencies across the monitored summits, and therefore, are reliable in terms~of~inference. Notice how, under~the $\operatorname{N-MBP}$, the posterior expected number of unseen criminals is $\E(M^\star \mid X)=27.76$.~This~indicates that investigations might have missed $\approx 30$ affiliates. However, it is unclear whether these unseen criminals cover a peripheral role in the organization or are instead active members with~high~probability of participating in several summits. The answer to this question is encoded in the $95\%$ credible upper bounds for $\Mmax$ derived under the Bayesian models we consider. These are
\[
(
    U^{\operatorname{FB}},
    U^{\operatorname{N-MBP}},
    U^{\operatorname{IBP}}
)
=
(0.1255,\,0.1444,\,0.1226)\,,
\]
yielding sharper, and hence more informative, one-side intervals for $\Mmax$ than~the~frequentist bounds $(U^{\operatorname{Bdd}},U^{\operatorname{Ubd}})=(0.1873,0.1494)$ of \cite{ColombiFreq}. 

\begin{figure}[t]
\vspace{-15pt}
    \centering
    \includegraphics[width=0.495\linewidth]{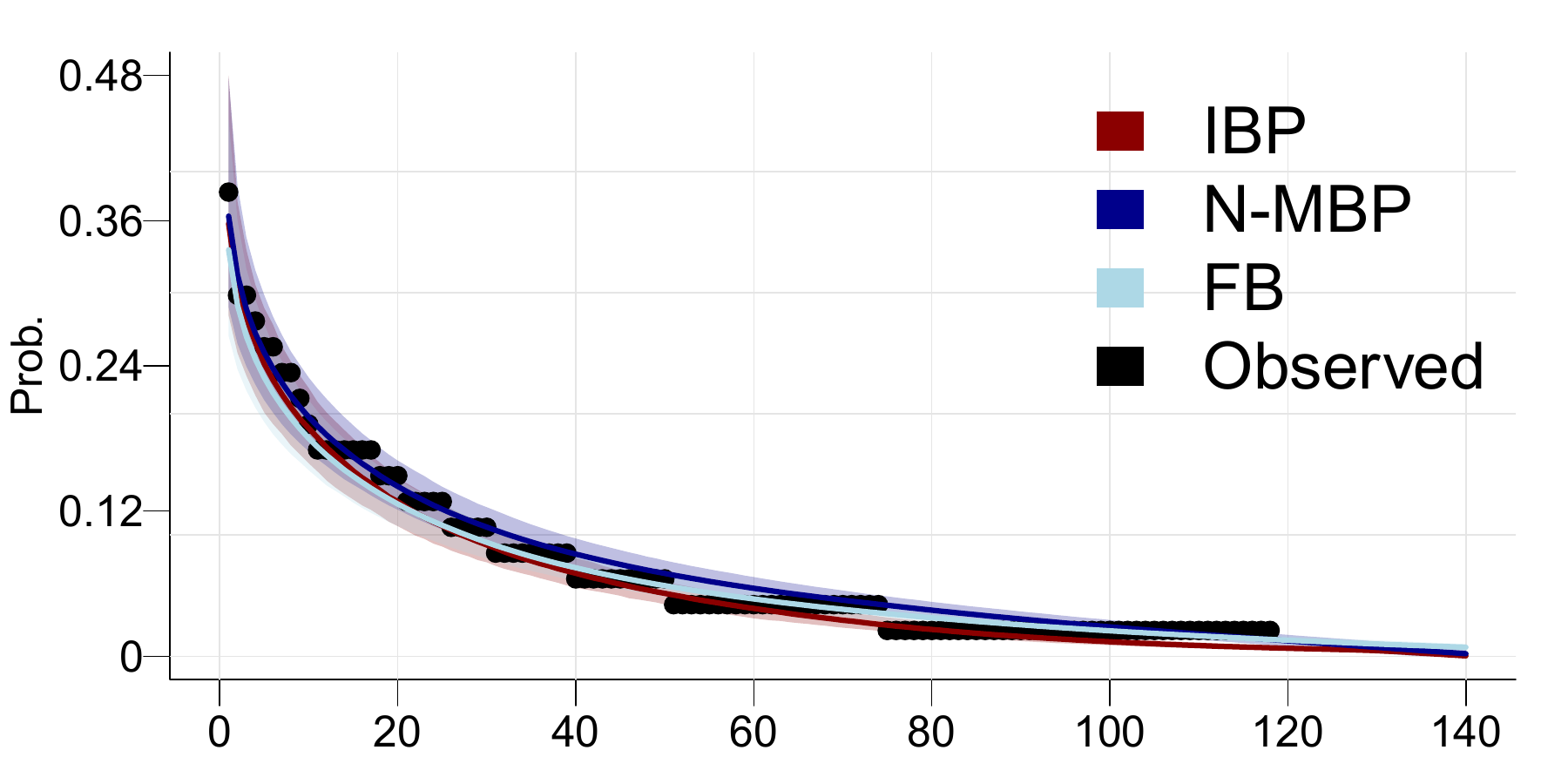}
    \hfill
    \includegraphics[width=0.495\linewidth]{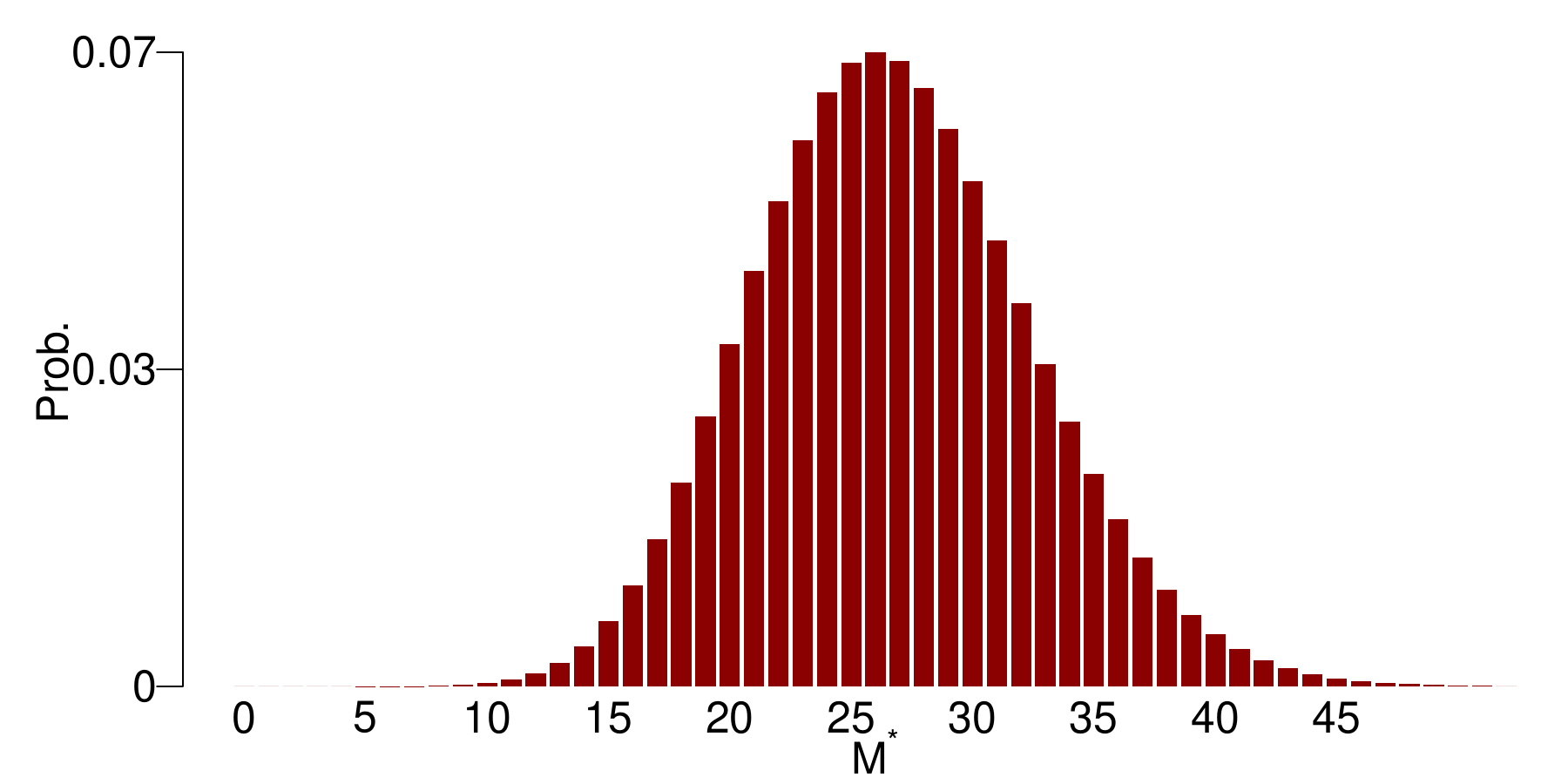}
        \vspace{-15pt}
    \caption{
 Left: Posterior means and $95\%$ credible bands of criminal appearance probabilities versus observed frequencies (sorted in decreasing order).
    Right: Posterior distribution of the number of unseen \emph{criminals} $M^\star$ under the N-MBP.}
    \label{fig:Crim4features_paramfit}
    \vspace{-5pt}
\end{figure}

\begin{figure}[b!]
    \centering
    \includegraphics[width=0.495\linewidth]{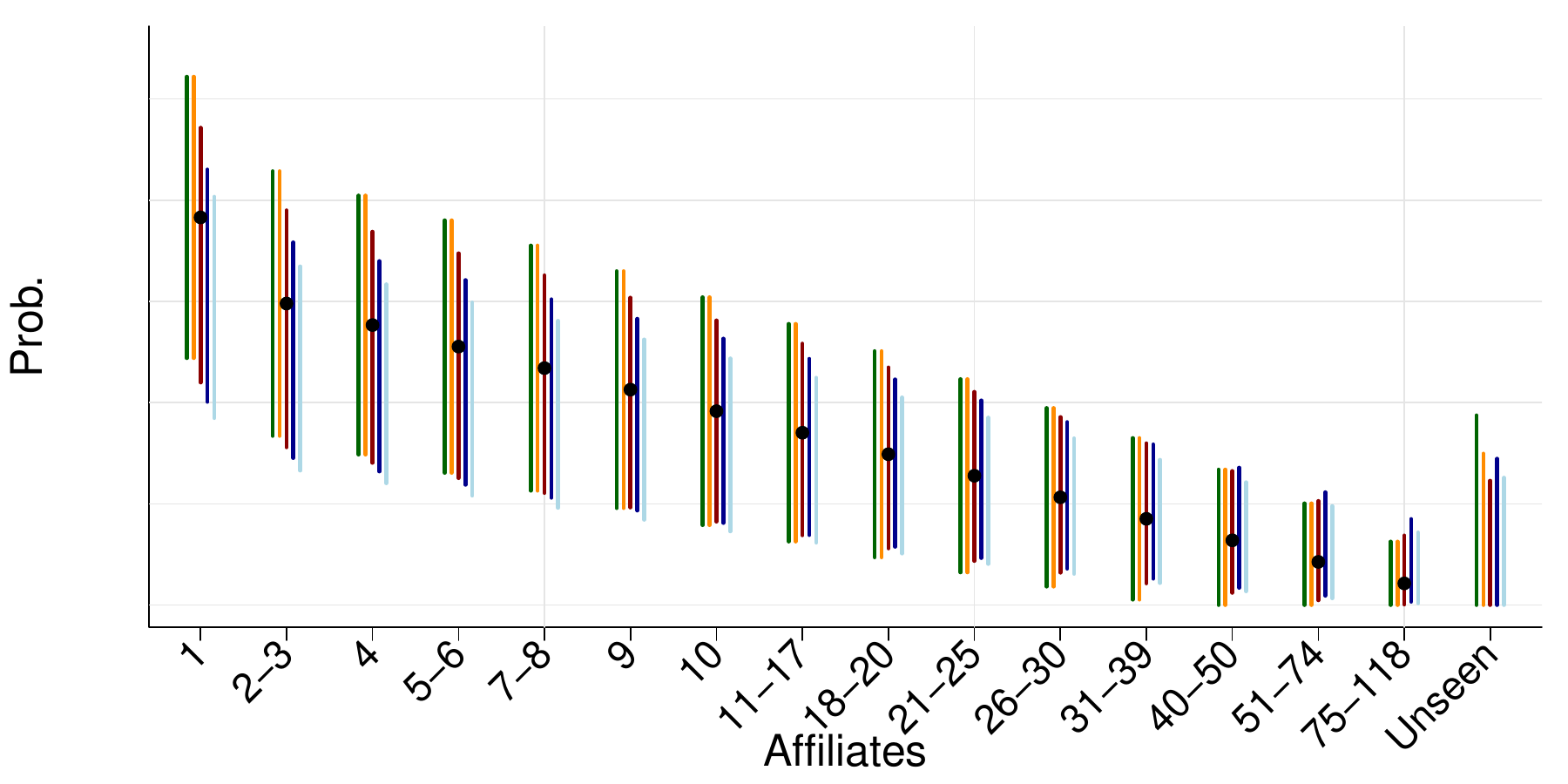}
    \hfill
    \includegraphics[width=0.495\linewidth]{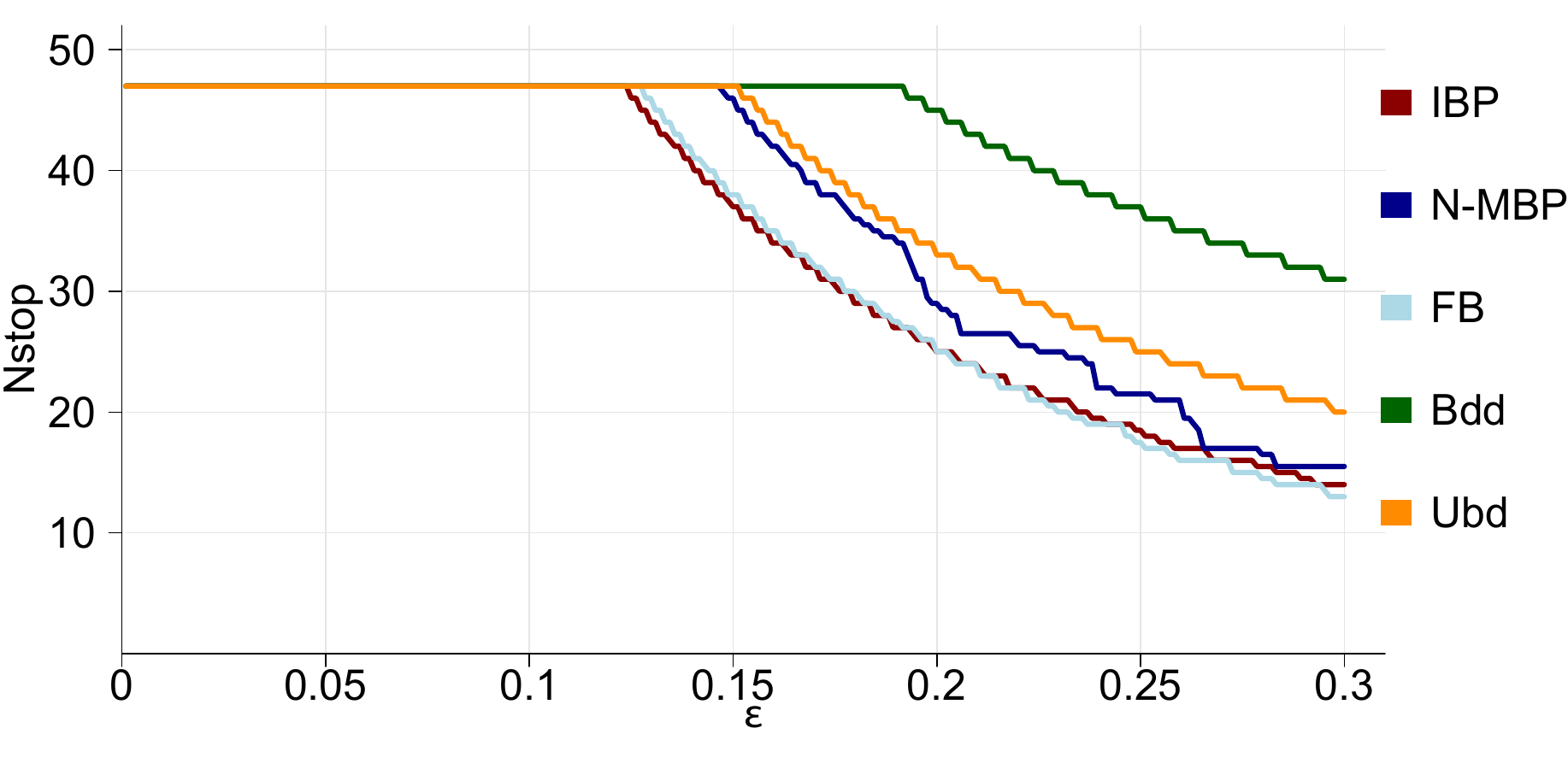}
    \vspace{-18pt}
    \caption{Left: Credible and confidence intervals for the attendance probabilities of both observed~and~unseen criminals (aggregated according to common probabilities). Right: Estimated stopping sample size $n^*$ as a function of the threshold $\varepsilon$.}
    \label{fig:Crim4features_AlInf_SR}
    \vspace{-5pt}
\end{figure}

To contextualize the above bounds on the modal appearance probability of unseen criminals, Figure~\ref{fig:Crim4features_AlInf_SR} compares the resulting one-sided intervals with the corresponding counterparts obtained for the observed criminals. The latter are derived leveraging similar methods and results yielding the intervals in Figure~\ref{fig:Crim4species_AlInf_SR} under the species setting. Given the large number of criminals, in Figure~\ref{fig:Crim4features_AlInf_SR} the members of the organization are grouped according to common appearance probabilities. As such, the last intervals correspond to those induced by the estimated upper bounds for the unseen criminals. In this specific context, such bounds provide an upper high credibility (or confidence) limit on the largest relative proportion of all population summits, both observed and yet-unobserved, attended by any unseen criminal. In  Figure~\ref{fig:Crim4features_AlInf_SR}~the~sharpest~of~these~bounds~is~the one provided by our model-based perspective under the $\operatorname{IBP}$ prior, which also achieves a sensible characterization of the observed attendance frequencies in Figure~\ref{fig:Crim4features_paramfit}. The~resulting~value~indicates that any unseen criminal (including the most active one), is estimated to have attended at most about $12\%$ of all population meetings, where the total number of meetings is likely larger than the $n=47$ observed ones.  According to Figure~\ref{fig:Crim4features_AlInf_SR}, the associated interval is similar~to~the one obtained for the observed criminals who participated in three meetings, a value~close~to~empirical mean of $3.5$ attended summits. This finding suggests that the set of unseen criminals may include individuals with a level of activity \smash{comparable to the one of the average observed criminal}. Consequently, extending the investigations and observing additional summits could have provided~valuable additional information by identifying unseen members playing a non-negligible participation role~in the organization.

As in Section~\ref{sec:application_specie}, we conclude by assessing in Figure~\ref{fig:Crim4features_AlInf_SR} the impact of the above upper bounds in terms of the induced stopping rule on the number of monitored summits. Similarly~to~the~species case, also in this features setting, the improved sharpness achieved by the proposed model-based bounds yields systematic gains also in terms of stopping rules compared to the frequentist counterparts. Focusing on the $\operatorname{IBP}$ bound and taking, for instance, $\varepsilon=0.15$, the induced stopping rule saves up to ten additional meetings from being inspected relative to the frequentist alternatives under the same threshold. As such, the practical gain is again substantial.

\vspace{5pt}
\section{\large 8. Discussion}
\label{section:discussion}
Inference in species and feature sampling problems has traditionally focused~on~the~total~probability mass associated with unobserved labels and its predictive implications.~More~refined~functionals of the distribution over unseen labels, such as the largest unobserved label probability, have received attention only recently, with existing methods largely relying on worst-case distribution-free analyses that yield overly-conservative intervals and substantial information~loss.~In~this~article we have shown both theoretically and empirically that these intervals can be substantially sharpened without sacrificing the robustness and frequentist coverage guarantees of the existing worst-case solutions. We achieve this through a model-based Bayesian nonparametric perspective that flexibly combines prior probabilistic structure with the information contained in the observed sample to localize uncertainty around data-generating mechanisms compatible with the data. This perspective yields  closed-form and sharper  upper bounds for the modal missing probability, leading tractable one-sided credible intervals  that are more informative than state-of-the-art ones while retaining frequentist coverage guarantees and robustness. The latter property stems from the flexibility of the Bayesian nonparametric framework, which can adapt to a broad class of data-generating mechanisms. These gains are supported empirically through simulation studies and a real-data application to organized crime, where our improved credible bounds enable a more refined quantification of the obscure boundaries of the monitored illicit organization and provide a basis for more informed law-enforcement decision-making during investigations.

The above advancements and promising results motivate several \smash{directions for future research}. An important one concerns relaxing our independence assumptions both across sampling~units~and labels. For example, in the organized crime application, summit attendance patterns may exhibit dependence, as the presence of a given criminal at a meeting may increase the probability~of~observing other members connected to that criminal, for instance through family ties. Although~the inclusion of these more complex dependencies lies beyond the scope of the present contribution, this example highlights the potential value of generalizing the proposed inferential framework~to dependent sampling models that incorporate exogenous information. Besides these extensions, another interesting line of research lies in inference for rare but already observed labels. Our~contribution focuses on unseen labels, yet those observed only once or a few times are also subject~to substantial uncertainty, and the associated empirical frequencies may provide unreliable estimates of the corresponding probabilities.  Recent frequentist contributions have already considered simultaneous inference for observed and unseen probabilities in large-alphabet settings \citep[e.g.,][]{Painsky2023}. Therefore, a natural extension of our approach would be to develop Bayesian nonparametric upper bounds that jointly cover unseen and low-frequency labels.

\section{\bf Acknowledgments}
Daniele Durante and Alessandro Colombi
are funded by the European Union (ERC, NEMESIS, project number: 101116718). Views and
opinions expressed are however those of the author(s)~only and do not necessarily reflect those of
the European Union or the European Research Council~Executive Agency. Neither the European
Union nor the granting authority can be held responsible~for them.
\newtheorem{lemma}[theorem]{Lemma}
\fontsize{9.9}{13.5}\selectfont

\vspace{40pt}
{\noindent \bf \Large Supplementary Material}
\vspace{-5pt}

\setcounter{section}{0}
\renewcommand{\thesection}{\Alph{section}}
\numberwithin{table}{section}
\numberwithin{figure}{section}
\numberwithin{equation}{section}
\numberwithin{theorem}{section}

\section{\large A. Proofs}
Below we prove the theoretical results stated in the main article. The proofs of Theorems~\ref{thm:species_post}~and~\ref{thm:features_post} can be found in  \cite{argiento2022annals}, \cite{pitman1996}, \cite{beraha2025bayesian} and \cite{Ghilotti2026}, and hence, are omitted here.

\vspace{5pt}
\subsection{\large A.1 Proof of Theorem \ref{thm:Species_UB_FDP_PD}}
From \eqref{eqn:species_UB_general}, it follows that it is enough to compute the posterior expected value of the $r$-th order missing mass, i.e., $\mean{M_r\mid X^n}$, under both the FDP and the PYP prior.

If $\Pcr = \operatorname{FDP}(q_M,\gamma) \times P_0$,  the posterior distribution in Theorem~\ref{thm:species_post}  can be equivalently~expressed as
\begin{equation}
    \label{eqn:FDP_post_proof}
    (\mu\mid X^n,M^\star) \,\myequal{d} \,
    \sum\nolimits_{j=1}^{K_n}\,p^*_j\delta_{\xi^\star_j} \,+\, 
    \sum\nolimits_{j=1}^{M^\star}\,p^\prime_j\delta_{\xi_j} \,,
\end{equation}
where $\xi_j \iid P_0$ and $(p^*_1,\ldots,p^*_{K_n},p^\prime_1,\ldots,p^\prime_{M^\star}\mid M^\star)\sim\operatorname{FD}_{K_n+M^\star}\left(\gamma+n_1,\ldots,\gamma+n_{K_n},\gamma,\ldots,\gamma\right)$. Consequently, $\mean{M_r\mid X^n}$ is equal to
\begin{equation*}
    \label{eqn:missing_FDP} 
    \E_{M^\star}[ \E(\sum\nolimits_{j=1}^{M^\star}\,(p^\prime_j)^r\mid M^\star)]
    \, = \,
    \E_{M^\star}
   ( M^\star \E[(p^\prime_1)^r\mid M^\star]),
\end{equation*}
where the first expected value is with respect to $M^\star \sim q_{M^\star}$; see Theorem~\ref{thm:species_post}. 
Moreover,~the~marginal posterior of any $p^\prime_j$ is
$(p^\prime_j \mid M^\star,X^n) \sim\operatorname{Beta}\left(\gamma,\, n + \gamma(K_n+M^\star-1)\right)$, for $j=1,\ldots,M^\star$.~In particular, the marginal distribution does not depend on $j$. Therefore, the statement follows~by~recalling that the $r$-th moment of the generic random variable $Z\sim\operatorname{Beta}(a,b)$~is 
\begin{equation}
    \label{eqn:rth_moment_beta}
        \E\left(Z^r\right) \, = (a)_r/(a+b)_r\,.
\end{equation}
To prove the result for $U^{\operatorname{FD}}$ it suffices to replace $M^*$ with $M-K_n$ in the above derivations.~Recall that, under FD, $q_{M*}=\delta_{M-K_n}$.

Moving now to the PYP prior, notice that, if $\Pcr = \pityor(\sigma,\theta) \times P_0$, then from~the~posterior~distribution in Theorem~\ref{thm:species_post} with $\mu^\prime = \sum_{j\ge 1} \tilde{p}^\prime_j\delta_{\xi_j}$, the $r$-th order missing mass is
\begin{equation*}
    \label{eqn:proof_gmassmiss_PD_1}
    \begin{split}
        &\E(\sum\nolimits_{j\geq 1} p_j^r\, \indic(n_j = 0) 
        \mid X^n )  
        \,=\,
        \E(\sum\nolimits_{j\geq 1} \left(p^* \tilde{p}^\prime_j\right)^r \mid X^n )
        \,=\,
        \E_{p^*}[\E(\sum\nolimits_{j\geq 1} (p^* \tilde{p}^\prime_j)^r \mid p^*, X^n )] \, .
    \end{split}
\end{equation*}
We recall that the marginal posterior of $p^*$ is $(p^* \mid X^n) \sim \operatorname{Beta}(\theta + K_n\sigma, \, n - \sigma K_n)$.
Moreover,~let $\breve{\mu}^{\prime} = \sum_{j\geq 1} \breve{p}^{\prime}_j \delta_{\xi_j}$ be the size-biased representation of $\mu^\prime$ \citep[see, e.g.,][]{pitman92}. Using~the size-biased sampling identity \citep[Equation (11)]{Pitman2003PoissonKingman}, we have
\begin{equation*}
    \label{eqn:missing_PD_proof2}
    \begin{split}
        &\E(\sum\nolimits_{j\geq 1} p_j^r\, \indic(n_j = 0) 
        \mid X^n )  
        \, = \,
        \E_{p^*}\left[ \int_0^1 \frac{\left(p^* \breve{p}^{\prime}_1\right)^r}{\breve{p}^{\prime}_1}  \breve{\nu}(\D \breve{p}^{\prime}_1) \right] \, 
    \end{split}
\end{equation*}
where $\breve{\nu}(\D \breve{p}^{\prime}_1)$,  is the distribution of the first weight in the size-biased representation, namely,~$\breve{p}^{\prime}_1 \sim \breve{\nu}(\D \breve{p}^{\prime}_1)$. 
In the  PYP case \citep[see, e.g.,][]{pitman92},  we have
\begin{equation*}
\label{eqn:missing_PD_proof3}
    \begin{split}
        \breve{p}^{ \prime}_1 \sim 
        \operatorname{Beta}(1-\sigma,\, \theta + (K_n+1)\sigma)\,.
    \end{split}
\end{equation*}
Hence, \eqref{eqn:PD_UB_species} follows directly by combining the above results with the independence 
between $p^*$ and $\breve p^{\prime}_1$, using \eqref{eqn:rth_moment_beta} to compute the $(r-1)$-th moment of $\breve p^{\prime}_1$, and the $r$-th moment of $p^*$. 

\vspace{9pt}
\subsection{\large A.2 Proof of Theorem \ref{thm:Features_upper_bounds}}\label{proof:UB_features}
From \eqref{eqn:species_UB_general}, it follows that it is enough to compute the posterior expected value of the $r$-th order missing mass, i.e., $\mean{M_r\mid X^n}$, under both the MBP and the IBP prior.

If $\Fcr=\operatorname{MBP}(q_M,a,b) \times P_0$,  then the posterior distribution in Theorem~\ref{thm:features_post} can be equivalently represented as
\begin{equation*}
    \label{eqn:MBP_post_proof}
    (\mu\mid X^n,M^\star) \,\myequal{d} \,
    \sum\nolimits_{j=1}^{K_n}\,p^*_j\delta_{\xi^\star_j} \,+\, 
    \sum\nolimits_{j=1}^{M^\star}\,p^\prime_j\delta_{\xi_j} \,,
\end{equation*}
where $\xi_j \iid P_0$, while $p^*_j$ and $q_{M^\star}$ are distributed as in \eqref{eqn:MBP_post}, and 
$(p^\prime_j\mid M^\star)\iid\operatorname{Beta}\left(a,b+n\right)$, for each $j=1,\ldots,M^\star$. 
As a result, $\mean{M_r\mid X^n}$ coincides with
\begin{equation*}
    \label{eqn:missing_MBP} 
    \E_{M^\star}[ \E(\sum\nolimits_{j=1}^{M^\star}\,(p^\prime_j)^r\mid M^\star)]
    \, = \,
    \E_{M^\star}
   ( M^\star \E[(p^\prime_1)^r\mid M^\star]),
\end{equation*}
where the first expected value is with respect to $M^\star \sim q_{M^\star}$. The statement follows by computing the $r$-th moment of the above Beta-distributed variable, leveraging \eqref{eqn:rth_moment_beta}. Similarly to the proof of Theorem \ref{thm:Species_UB_FDP_PD}, also in this case, the result for $U^{\operatorname{FB}}$ can be readily obtained by replacing $M^\star$ with $M-K_n$ in the above derivations.~Recall that, under FB, $q_{M^\star}=\delta_{M-K_n}$.

To prove the result for $\Fcr = \operatorname{IBP}(\gamma,\sigma,c) \times P_0$, notice that from the posterior~distribution~in~Theorem~\ref{thm:features_post}, with $\mu^\prime = \sum_{j\ge 1} p^\prime_j\delta_{\xi_j}$, the $r$-th order missing mass is
\begin{equation*}
    \label{eqn:proof_gmassmiss_IBP_1}
    \begin{split}
        &\E(\sum\nolimits_{j\geq 1} p_j^r\, \indic(n_j = 0) 
        \mid X^n )  
        \,=\,
        \E(\sum\nolimits_{j\geq 1} ( p^\prime_j)^r \mid X^n )
        \,=\,
        \E\left[ \int_0^1 s^r \Phi^\prime(\D s) \mid X^n \right]\,,
    \end{split}
\end{equation*}
where $\Phi^\prime$ is a Poisson process on $[0,1]$, whose Lévy intensity is $\rho_n(s)=(1-s)^n\rho(s)=\gamma s^{-\sigma-1}(1-s)^{c+n+\sigma-1}$. By Campbell's theorem \citep{kallenberg2021}, we have that 
\begin{equation*}
    \label{eqn:proof_gmassmiss_IBP_2}
    \begin{split}
        &\E(\sum\nolimits_{j\geq 1} p_j^r\, \indic(n_j = 0) 
        \mid X^n )  
        \,=\,
        \int_0^1 \, s^r \rho_n(s)\D s 
        \, = \, 
        \gamma B(r-\sigma,c+n+\sigma),
    \end{split}
\end{equation*}
which concludes the proof. Notice that the above Beta function is well-defined since $r\ge1$ ensures $r>\sigma$ and $c+n+\sigma > 0$.

\vspace{10pt}

\subsection{\large A.3 Proof of Theorem \ref{thm:freq-validity-species}}

We prove the two statements separately. First consider the finite-support case. Let $\mathcal J_0 := \{ j \ge 1 : p_{0,j} > 0 \}$.
By assumption, $\mathcal J_0$ is finite and has cardinality $m_0$. Since every $p_{0,j}$ is strictly positive on $\mathcal J_0$, the minimum
$p_{\min} := \min_{j \in \mathcal J_0} p_{0,j}$
is well defined and strictly positive.

If all the atoms of $\mu_0$ with positive mass have been observed in the sample, then every unseen atom has probability zero and therefore $M_{\max}(X_n;\mu_0) = 0$.
Consequently,
\[
\mathbb{P}_{\mu_0}( M_{\max}(X_n;\mu_0) > U^{\operatorname{FDP}}(X_n) )
\le
\mathbb{P}_{\mu_0}(K_n < M_0).
\]
The event $\{ K_n < M_0 \}$ means that at least one index $j \in \mathcal J_0$ satisfies $n_j = 0$. By the union bound,
\[
\mathbb{P}_{\mu_0}(K_n < M_0)
\le
\sum\nolimits_{j \in \mathcal J_0} \mathbb{P}_{\mu_0}(n_j = 0) \leq \sum\nolimits_{j \in \mathcal J_0} e^{-np_{0, j}} \le m_0 e^{-n p_{\min}}.
\]
Therefore
\[
\mathbb{P}_{\mu_0} ( M_{\max}(X_n;\mu_0) > U^{\operatorname{FDP}}(X_n) ) \to 0,
\]
which proves the claim for $U_n^{\operatorname{FDP}}$. The same arguments and derivations apply to $U_n^{\pityor}$.

Focusing now on the second statement of Theorem \ref{thm:freq-validity-species}, let us write for simplicity  $U^{\pityor}_n := U^{\pityor}(X_n)$.
Moreover,~for~every real $r \ge 1$, define
\[
u_n(r)
:=
\left[
\frac{1}{\alpha}
\frac{(\theta + \sigma K_n) \Gamma(r-\sigma) \Gamma(n+\theta)}
{\Gamma(1-\sigma) \Gamma(n+\theta+r)}
\right]^{1/r}, \qquad U^{\pityor}_n = \min_{r \ge 1} u_n(r).
\]
This coincides with \eqref{eqn:PD_UB_species} after writing the rising factorials through the Gamma function.
Instead of working directly with $u_n(r)$, it is convenient to define
\[
h_n(r) := \log(n u_n(r)) 
=
\log r - 1 + \frac{A_n - \left( \sigma + 0.5 \right) \log r + O_{\mathbb{P}_{\mu_0}}(1)}{r},
\]
where $A_n = \sigma_0 \log n + \log L(n) + O_{\mathbb{P}_{\mu_0}}(1)$; see the proof of Lemma \ref{lem:r-minimizer}.
Observe that the minimizers of $h_n(r)$ and $u_n(r)$ must coincide.
By Lemma \ref{lem:r-minimizer}, $u_n(r)$ is minimized at
\[
\hat r_n
=
\sigma_0 \log n
-
\left( \sigma + 0.5 \right) \log \log n
+
\log L(n)
+
O_{\mathbb{P}_{\mu_0}}(1).
\]
We next convert this expansion into an asymptotic one for $U^{\pityor}_n$. Evaluating $h_n(r)$ at $\hat r_n$ yields
\[
h_n(\hat r_n)
=
\log \hat r_n - 1 + \frac{A_n - \left( \sigma + 0.5 \right) \log \hat r_n + O_{\mathbb{P}_{\mu_0}}(1)}{\hat r_n},
\]
and substituting $A_n$ we have
\[
h_n(\hat r_n)
=
\log \hat r_n + O_{\mathbb{P}_{\mu_0}} \left( 1/\log n \right).
\]
Since $
n U^{\pityor}_n = n u_n(\hat r_n) = \exp(h_n(\hat r_n))$
we conclude that
\begin{multline}\label{eq:Un-asymp}
   n U^{\pityor}= \hat r_n \exp \left( O_{\mathbb{P}_{\mu_0}} \left( 1/\log n \right) \right) = \hat r_n + O_{\mathbb{P}_{\mu_0}}(1) = \\  \sigma_0 \log n - \left( \sigma + 0.5 \right) \log \log n + \log L(n) + O_{\mathbb{P}_{\mu_0}}(1). 
\end{multline}

We now compare the asymptotic expansion in \eqref{eq:Un-asymp} with an upper bound for the true unseen modal probability.
To this end, observe that
\begin{align*}
   \mathbb{P}_{\mu_0}(M_{\max}(X_n;\mu_0) > x)
\le
\sum\nolimits_{j : p_{0,j} > x} \mathbb{P}_{\mu_0}(n_j = 0) \le \sum\nolimits_{j : p_{0,j} > x} e^{-n p_{0,j}} =: \Psi_n(x),
\end{align*}
and by Lemma \ref{lem:psi-expansion} below, if $x_n \downarrow 0$ with $z_n=n x_n \to \infty$,
\[
    \Psi_n(x_n) = O \left( \frac{e^{-z_n} \bar\nu_0(x_n)}{z_n} \right).
\]
We now choose the specific sequence $x_n$ that will be used in the coverage argument.
Let $c$ be any constant satisfying $\sigma + 0.5 < c < \sigma_0 + 1$,  which is possible because by assumption $\sigma < \sigma_0 + 0.5$.
Define
\begin{equation}
\label{eq:xn-threshold-revised}
x_n
:=
\frac{\sigma_0 \log n - c \log \log n + \log L(n)}{n} \asymp (\log n)/n.
\end{equation}
Since $L$ is slowly varying, $\log L(n) = o(\log n)$, so $x_n \downarrow 0$ and $n x_n \to \infty$.
By assumption, $\bar\nu_0(x_n) \sim c_0 x_n^{-\sigma_0} L(1/x_n)$, so that,
\[
\Psi_n(x_n)
=
O \left(
\frac{e^{-n x_n}}{n x_n}
x_n^{-\sigma_0} L(1/x_n)
\right).
\]
Moreover, from \eqref{eq:xn-threshold-revised}, we compute
\[
e^{-n x_n}
=
\exp \left( -\sigma_0 \log n + c \log \log n - \log L(n) \right)
=
n^{-\sigma_0} (\log n)^c L(n)^{-1}.
\]
Also, from $x_n \asymp (\log n)/n$, it follows that
\[
\frac{x_n^{-\sigma_0}}{n x_n}
=
\frac{1}{n} x_n^{-(\sigma_0+1)}
=
O( n^{\sigma_0} (\log n)^{-(\sigma_0+1)} ).
\]
Combining the last two displays, we get
\begin{equation}
\label{eq:Psi-before-Potter-revised}
\Psi_n(x_n)
=
O (
(\log n)^{c-(\sigma_0+1)}
L(1/x_n)/L(n)
).
\end{equation}
Since $x_n \asymp (\log n)/n$, we have $1/x_n \asymp n/\log n$. 

By Potter's bounds for slowly varying functions \citep[see, e.g., Theorem 1.5.6 in][]{bingham1989regular}, for every $\varepsilon > 0$, we have
\[
L(1/x_n)/L(n) = O \left( (\log n)^\varepsilon \right).
\]
Choose $\varepsilon > 0$ so small that $c + \varepsilon < \sigma_0 + 1$,
then \eqref{eq:Psi-before-Potter-revised} implies $
\Psi_n(x_n) \to 0$.
Combining the above results,~we obtain
\begin{equation}
\label{eq:Mmax-threshold-goes-to-zero-revised}
\mathbb{P}_{\mu_0}(M_{\max}(X_n;\mu_0) > x_n) \to 0.
\end{equation}
The proof concludes by comparing $x_n$ and the credible upper bound $U^{\pityor}_n$. Observe that
$
n U^{\pityor}_n - n x_n
=
\left( c - \sigma - 0.5 \right) \log \log n
+
O_{\mathbb{P}_{\mu_0}}(1).
$
Since $c > \sigma + 0.5$, the deterministic coefficient of $\log \log n$ is positive, and therefore
\begin{equation}
\label{eq:Un-above-threshold-revised}
\mathbb{P}_{\mu_0}(U^{\pityor}_n \ge x_n) \to 1.
\end{equation}
Finally,
\[
\mathbb{P}_{\mu_0}(M_{\max}(X_n;\mu_0) > U^{\pityor}_n)
\le
\mathbb{P}_{\mu_0}(M_{\max}(X_n;\mu_0) > x_n)
+
\mathbb{P}_{\mu_0}(U^{\pityor}_n < x_n)
\to 0.
\]
which concludes the proof.

\vspace{5pt}

\begin{lemma}\label{lem:r-minimizer}
Under the same assumptions of Theorem \ref{thm:freq-validity-species} (item 2.), $U_n^{\pityor}$ is minimized for
\begin{equation}
\label{eq:rhat-second-order-revised}
\hat r_n
=
\sigma_0 \log n
-
( \sigma + 0.5) \log \log n
+
\log L(n)
+
O_{\mathbb{P}_{\mu_0}}(1).
\end{equation}
\end{lemma}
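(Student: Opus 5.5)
The plan is to work with $h_n(r)=\log(n\,u_n(r))$, where $u_n$ is the Gamma-function form of \eqref{eqn:PD_UB_species} introduced in the proof of Theorem~\ref{thm:freq-validity-species}. I will first expand $h_n$ uniformly over the range of $r$ that matters, and then locate its minimizer using the convexity and unimodality from Proposition~\ref{prop:uniquness_minimizer}. Writing it out,
\[
r\,\log u_n(r)=\log\frac{\theta+\sigma K_n}{\alpha\,\Gamma(1-\sigma)}+\log\Gamma(r-\sigma)-\bigl[\log\Gamma(n+\theta+r)-\log\Gamma(n+\theta)\bigr].
\]
The random ingredient is $K_n$. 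Under the regular variation assumption \eqref{eq:reg_vary}, Karlin's classical occupancy theorem gives $K_n\sim \Gamma(1-\sigma_0)c_0 n^{\sigma_0}L(n)$ almost surely (with $\sigma_0\in(0,1)$). Hence $A_n:=\log(\theta+\sigma K_n)=\sigma_0\log n+\log L(n)+O_{\mathbb P_{\mu_0}}(1)$, and the constants $\log\alpha$ and $\log\Gamma(1-\sigma)$ are absorbed into the $O(1)$.

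For the deterministic parts I will use Stirling's formula, $\log\Gamma(r-\sigma)=(r-\sigma-\tfrac12)\log r-r+O(1)$, uniformly for $r\ge1$. For the ratio I will use $\log\Gamma(n+\theta+r)-\log\Gamma(n+\theta)=r\log n+O(r^2/n)$, valid uniformly for $r\le C\log n$; there $O(r^2/n)=o(1)$. Dividing by $r$ and adding $\log n$ yields
\[
h_n(r)=\log r-1+\frac{A_n-(\sigma+\tfrac12)\log r+O_{\mathbb P_{\mu_0}}(1)}{r},
\]
uniformly on $r\in[1,C\log n]$. This is exactly the expansion quoted in the proof of Theorem~\ref{thm:freq-validity-species}.

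Next I will localize the minimizer, since only then can the expansion be trusted. For $r$ of order larger than $\log n$, I will bound the Gamma ratio directly by $(n+\theta)^r$ from below. This gives $u_n(r)\gtrsim \Gamma(r-\sigma)^{1/r}/n\gtrsim r/n$, which exceeds the value of $u_n$ at $r=\sigma_0\log n$, where $u_n=O(\log n/n)$. For bounded or small $r$, i.e.\ $r\le\varepsilon\log n$, the term $A_n/r\ge \sigma_0/\varepsilon+o(1)$ makes $h_n$ larger than $\log\log n+O(1)$. By unimodality, the minimizer $\hat r_n$ therefore lies in $[\varepsilon\log n,C\log n]$ with probability tending to one.

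The core step is solving the first-order condition. For the smooth proxy $g(r)=\log r+(A_n-b\log r)/r$ with $b=\sigma+\tfrac12$, we have $g'(r)=0$ exactly when $r=A_n-b\log r+b$. Plugging in $\log r=\log\log n+O(1)$ on the localized range gives $r^\ast=A_n-b\log\log n+O(1)$, which after substituting $A_n$ is \eqref{eq:rhat-second-order-revised}.

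The main obstacle is that the $O_{\mathbb P_{\mu_0}}(1)/r$ remainder in $h_n$ is not differentiable in any controlled way, so one cannot just differentiate. I would instead argue by comparison. For a large constant $D$, evaluate $h_n$ at $r^\ast\pm D$ and at $r^\ast$. A second-order expansion gives $g(r^\ast\pm D)-g(r^\ast)\asymp D^2/(\log n)^2$, because $g''(r^\ast)\asymp 1/r^{\ast 2}$. I need this gap to dominate the perturbation, which is only $O(1)/r=O(1/\log n)$, and that fails as stated. So I will refine the remainder: the $O(1)$ from Stirling and from $K_n$ does not depend on $r$ up to $o(1)$, so it only shifts $A_n$ by a constant. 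Folding it into $A_n$, the remaining $r$-dependent error is $o(1)/r$ together with the smooth term $O(r/n)$, whose derivative is negligible. With that refinement, convexity (Proposition~\ref{prop:uniquness_minimizer}) and the sign of the difference $h_n(r+1)-h_n(r)$ on either side of $r^\ast\pm D$ pin $\hat r_n$ to $r^\ast+O_{\mathbb P_{\mu_0}}(1)$.
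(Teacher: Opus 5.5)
Your expansion of $h_n$ coincides with the paper's, and localizing $\hat r_n$ to a window of order $\log n$ is in the same spirit. Your last step, however, is a detour the paper avoids, and as written it does not close.

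The paper never differentiates the asymptotic expansion. Since $h_n$ is an explicit combination of $\log\Gamma$ terms, it differentiates $h_n$ exactly:
$h_n'(r)=-r^{-2}\bigl[A_n+\log\Gamma(r-\sigma)+\log\Gamma(n+\theta)-\log\Gamma(n+\theta+r)\bigr]+r^{-1}\bigl[\psi(r-\sigma)-\psi(n+\theta+r)\bigr]$.
Only then does it expand, using $\psi(r-\sigma)=\log r+O(1/r)$ and $\psi(n+\theta+r)=\log n+O(r/n)$, to get $r^2h_n'(r)=r-A_n+(\sigma+\tfrac12)\log r+O_{\mathbb P_{\mu_0}}(1)$ on $[C^{-1}\log n,C\log n]$. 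Because $\frac{d}{dr}\bigl[r^2h_n'(r)\bigr]=r\bigl[\psi_1(r-\sigma)-\psi_1(n+\theta+r)\bigr]>0$, $h_n'$ has at most one zero. The sign change between $(\sigma_0-\varepsilon)\log n$ and $(\sigma_0+\varepsilon)\log n$ identifies this zero as $\hat r_n$. The condition $h_n'(\hat r_n)=0$ then reads $\hat r_n=A_n-(\sigma+\tfrac12)\log\hat r_n+O_{\mathbb P_{\mu_0}}(1)$, which gives the claim. The non-differentiable remainder you worry about is an artefact of working with the expansion instead of with $h_n$ itself.

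If you keep the comparison route, your refinement has to be quantitative. An $r$-dependent error of size $o(1)/r$ is $o(1/\log n)$ near $r^\ast$. That is not dominated by the curvature gap $g(r^\ast\pm D)-g(r^\ast)\asymp D^2/(\log n)^2$ for any fixed $D$, unless the $o(1)$ is in fact $O(1/\log n)$.

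What rescues the argument is the explicit Stirling remainder, $\log\Gamma(r-\sigma)=(r-\sigma-\tfrac12)\log r-r+\tfrac12\log(2\pi)+O(1/r)$. Absorb $\tfrac12\log(2\pi)$ into $A_n$, which is exactly free of $r$ since all randomness comes from $K_n$. The residual is then $O(r^{-2})+O(r/n)$, with a constant not depending on $D$, so a large fixed $D$ suffices.

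The unimodality you then invoke is the same fact as the paper's monotonicity. The proof of Proposition~\ref{prop:uniquness_minimizer} shows $F(r)=r\log u_n(r)$ is strictly convex, so $r^2h_n'(r)=rF'(r)-F(r)$ is increasing.

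Finally, in your large-$r$ localization the inequality runs the wrong way. The bound $(n+\theta)_r\ge(n+\theta)^r$ bounds $u_n$ from above, whereas you need $(n+\theta)_r\le(n+\theta+r)^r$. More simply, compare $h_n(C\log n)$ with $h_n(\sigma_0\log n)$ inside the expansion range and use unimodality.
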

\begin{proof}
Define $h_n(r) := \log(n u_n(r))$.
The function $h_n(r)$ is continuous on $[1,\infty)$. We first show that $h_n(r)$ admits a global minimizer. To this end, set $
C_n := (\theta + \sigma K_n)/(\alpha \Gamma(1-\sigma))$,~so~that~$u_n(r)^r = C_n (\Gamma(r-\sigma) \Gamma(n+\theta))/\Gamma(n+\theta+r)$.
As $r \to \infty$, the Gamma ratio satisfies
\[
(\Gamma(r-\sigma) \Gamma(n+\theta))/\Gamma(n+\theta+r)
=
\Gamma(n+\theta) r^{-(n+\theta+\sigma)} (1+o(1)),
\]
and hence
\[
u_n(r)^r = C_n \Gamma(n+\theta) r^{-(n+\theta+\sigma)} (1+o(1)).
\]
Taking $r$-th roots, $u_n(r) \to 1$ as  $r \to \infty$ so that $h_n(r) \to \log n$ as $r \to \infty$.
Since $h_n(r)$~is~continuous on $[1,\infty)$ and has finite limit at infinity, it attains its minimum~on~$[1,\infty)$.~Let~$\hat r_n$~be~any~minimizer, so that
$U^{\pityor}_n = u_n(\hat r_n)$.

We next derive the asymptotic expansion of $\hat r_n$. By the regular variation assumption~(see~e.g., \cite{gnedin07}), it holds $K_n = c_K n^{\sigma_0} L(n) \left( 1 + o_{\mathbb{P}_{\mu_0}}(1) \right)$.
Setting $A_n := \log C_n$ we obtain~the asymptotic expansion
\begin{equation}
\label{eq:An-expansion-revised}
A_n = \sigma_0 \log n + \log L(n) + O_{\mathbb{P}_{\mu_0}}(1).
\end{equation}
By definition of $h_n(r)$, it follows that
\[
h_n(r)
=
\log n
+
\frac{1}{r}
\left[
A_n + \log \Gamma(r-\sigma) + \log \Gamma(n+\theta) - \log \Gamma(n+\theta+r)
\right].
\]
Differentiating with respect to $r$, and writing $\psi$ for the digamma function, gives
\[
h_n'(r)
=
-
\frac{1}{r^2}
\left[
A_n + \log \Gamma(r-\sigma) + \log \Gamma(n+\theta) - \log \Gamma(n+\theta+r)
\right]
+
\frac{1}{r}
\left[
\psi(r-\sigma) - \psi(n+\theta+r)
\right].
\]
Fix now a constant $C > 1$ and consider the interval $I_n(C) := \left[ C^{-1}\log n, C \log n \right]$. Then, uniformly for $r \in I_n(C)$, Stirling's formula gives $\log \Gamma(r-\sigma)
=
\left( r-\sigma-0.5 \right) \log r - r + O(1)$ and  $\psi(r-\sigma) = \log r + O \left( 1/r\right)$.
Also, since $r = O(\log n)$, we have
\[
\log \Gamma(n+\theta+r) - \log \Gamma(n+\theta)
=
\int_0^r \psi(n+\theta+s) ds
=
r \log n + O ( r^2/n ),
\]
and $\psi(n+\theta+r) = \log n + O \left( r/n \right)$.
Substituting these expansions into the formulas for $h_n(r)$~and $h_n'(r)$ yields, uniformly for $r \in I_n(C)$,
\begin{equation}
\label{eq:hn-expansion-revised}
h_n(r)
=
\log r - 1 + \frac{A_n - \left( \sigma + 0.5 \right) \log r + O_{\mathbb{P}_{\mu_0}}(1)}{r},
\end{equation}
and
\begin{equation}
\label{eq:hnprime-expansion-revised}
h_n'(r)
=
\frac{r - A_n + \left( \sigma + 0.5 \right) \log r + O_{\mathbb{P}_{\mu_0}}(1)}{r^2}.
\end{equation}

We now show that the minimizer $\hat r_n$ lies in $I_n(C)$ for suitable $C$ and is uniquely determined~by the equation $h_n'(\hat r_n)=0$.
Take $C$ so large such that $1/C < \sigma_0 < C$. Then the point $r_n^\star := \sigma_0 \log n$ belongs to $I_n(C)$ for all sufficiently large $n$. Evaluating \eqref{eq:hn-expansion-revised} at $r = r_n^\star$ and using \eqref{eq:An-expansion-revised}, we have $h_n(r_n^\star) = \log \log n + O_{\mathbb{P}_{\mu_0}}(1)$.
On the other hand,
\[
h_n(1) = \log \left( \frac{n(\theta+\sigma K_n)}{\alpha(n+\theta)} \right)
= \sigma_0 \log n + O_{\mathbb{P}_{\mu_0}}(1),
\]
which is much larger than $\log \log n$. Therefore, $\hat r_n > 1$ with probability tending to one. Since $\hat r_n$ is then an interior minimizer of the differentiable function $h_n$, it must satisfy $h_n'(\hat r_n) = 0$.

We next prove that $h_n'(r)$ has a unique zero and that this zero is of order $\log n$. Define~$G_n(r) := r^2 h_n'(r)$.
A direct differentiation gives
\[
G_n'(r)
=
r \left[ \psi_1(r-\sigma) - \psi_1(n+\theta+r) \right],
\]
where $\psi_1$ is the trigamma function. Since $\psi_1$ is strictly positive and strictly decreasing on $(0,\infty)$, and since $r-\sigma < n+\theta+r$,
we have $G_n'(r) > 0$ for every $r \ge 1$.
Thus, $G_n(r)$~is~strictly~increasing~on $[1,\infty)$, and therefore $h_n'(r)$ can vanish at most once.
Fix $\varepsilon \in (0,\sigma_0)$, and define~$r_n^- := (\sigma_0-\varepsilon)\log n$ and $r_n^+ := (\sigma_0+\varepsilon)\log n$.
For $n$ large, both points belong to $I_n(C)$. Using \eqref{eq:hnprime-expansion-revised} and \eqref{eq:An-expansion-revised},~we~get
\[
r_n^- - A_n + \left( \sigma + 0.5 \right) \log r_n^-
=
-\varepsilon \log n + O(\log \log n) + O_{\mathbb{P}_{\mu_0}}(1),
\]
hence $h_n'(r_n^-) < 0$ with probability tending to one. Similarly,
\[
r_n^+ - A_n + \left( \sigma + 0.5 \right) \log r_n^+
=
\varepsilon \log n + O(\log \log n) + O_{\mathbb{P}_{\mu_0}}(1),
\]
so $h_n'(r_n^+) > 0$ with probability tending to one.

Since $h_n'(r)$ has at most one zero and changes sign between $r_n^-$ and $r_n^+$, it has exactly one zero inside $(1,\infty)$, and this zero is the unique global minimizer of $h_n(r)$. Therefore $\hat r_n \in [r_n^-, r_n^+]$
with probability tending to one, and, in particular
\begin{equation}
\label{eq:rhat-first-order-revised}
\hat r_n = \sigma_0 \log n + O_{\mathbb{P}_{\mu_0}}(\log \log n).
\end{equation}
We now use the identity $h_n'(\hat r_n)=0$ together with equation \eqref{eq:hnprime-expansion-revised}. Since $\hat r_n \in I_n(C)$~with~probability tending to one, it follows that
\[
0
=
\frac{\hat r_n - A_n + \left( \sigma + 0.5 \right) \log \hat r_n + O_{\mathbb{P}_{\mu_0}}(1)}{\hat r_n^2},
\]
and multiplying by $\hat r_n^2$, yields
\[
\hat r_n
=
A_n - \left( \sigma + 0.5 \right) \log \hat r_n + O_{\mathbb{P}_{\mu_0}}(1).
\]
Using \eqref{eq:rhat-first-order-revised}, we have
\[
\log \hat r_n = \log \log n + O_{\mathbb{P}_{\mu_0}}(1),
\]
and substituting this together with \eqref{eq:An-expansion-revised} into the previous display concludes the proof.
\end{proof}

\vspace{5pt}
\begin{lemma}\label{lem:psi-expansion}
Let $ \Psi_n(x) = \sum_{j : p_{0,j} > x} e^{-n p_{0,j}}$ and consider a deterministic sequence $x_n \downarrow 0$ such that $z_n=n x_n \to \infty$. Then, under the same assumptions of Theorem \ref{thm:freq-validity-species} (item 2.),
\begin{equation}
\label{eq:Psi-general-bound-revised}
\Psi_n(x_n)
=
O \left( \frac{e^{-z_n} \bar\nu_0(x_n)}{z_n} \right).
\end{equation}
\end{lemma}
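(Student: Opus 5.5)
Write $\Psi_n(x_n)$ as a Stieltjes integral against the counting function $\bar\nu_0$ and decompose the range $p>x_n$ into geometric shells where the smoothness condition \eqref{eq:smoothness-rv} applies. Concretely, since $\Psi_n(x)=\int_{(x,1]} e^{-np}\,(-\D\bar\nu_0(p))$, I would split $(x_n,\infty)$ into the blocks
\[
B_k:=\bigl(x_n(1+t_n)^{k},\,x_n(1+t_n)^{k+1}\bigr],\qquad k\ge 0,
\]
with mesh $t_n:=1/z_n\to0$. On $B_k$ the exponential is at most $e^{-z_n(1+t_n)^k}$. The number of atoms in $B_k$ is $\bar\nu_0(x_n(1+t_n)^k)-\bar\nu_0(x_n(1+t_n)^{k+1})$. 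By \eqref{eq:smoothness-rv}, and for $n$ large enough that $x_n(1+t_n)^k<x_0$, this count is at most $C_I t_n\,\bar\nu_0(x_n(1+t_n)^k)\le C_I t_n\,\bar\nu_0(x_n)$, using monotonicity of $\bar\nu_0$.

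Summing over $k$ gives
\[
\Psi_n(x_n)\le C_I t_n\bar\nu_0(x_n)\sum_{k\ge0}e^{-z_n(1+t_n)^k}+R_n,
\]
where $R_n$ collects the blocks beyond $x_0$. I would bound the sum using $(1+t_n)^k\ge 1+kt_n$:
\[
\sum_{k\ge0}e^{-z_n(1+kt_n)}=e^{-z_n}\bigl(1-e^{-z_nt_n}\bigr)^{-1}=e^{-z_n}(1-e^{-1})^{-1}.
\]
This makes the main term $O\bigl(t_n e^{-z_n}\bar\nu_0(x_n)\bigr)=O\bigl(e^{-z_n}\bar\nu_0(x_n)/z_n\bigr)$, which is exactly the claimed order. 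The choice $t_n=1/z_n$ is the one that balances the mesh factor against the geometric sum.

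The main obstacle is the region $p\ge x_0$, where \eqref{eq:smoothness-rv} is not assumed. For this region I would use a crude bound. Since $\sum_j p_{0,j}\le 1$ in the species setting, there are at most $1/x_0$ atoms with $p_{0,j}\ge x_0$. Their total contribution is therefore at most $x_0^{-1}e^{-nx_0}$.

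It remains to show that this tail term is negligible relative to $e^{-z_n}\bar\nu_0(x_n)/z_n$. Because $x_n\downarrow0$, we have $nx_0-z_n\ge nx_0/2$ eventually. Since $\bar\nu_0(x_n)\ge1$ and $z_n\le nx_0$, the ratio is at most $x_0^{-1}\,nx_0\,e^{-nx_0/2}\to0$. The only remaining care is to make the blocks stop cleanly at $x_0$, i.e. to truncate the last block, which is routine.
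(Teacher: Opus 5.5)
Your proof is correct, but it is organized differently from the paper's.

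The paper first rewrites $\Psi_n$ exactly, by Abel summation and Tonelli, as $\Psi_n(x)=e^{-nx}\int_0^\infty e^{-y}[\bar\nu_0(x)-\bar\nu_0(x+y/n)]\,\D y$. It then splits at $y=z_n$ and applies \eqref{eq:smoothness-rv} only once, at the single base point $x_n$, with the variable increment $t=y/z_n\in[0,1]$. The factor $1/z_n$ comes from $\int_0^\infty y e^{-y}\,\D y=1$. Everything above $2x_n$ is bounded crudely by $e^{-2z_n}\bar\nu_0(x_n)$.

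You instead discretize. You apply \eqref{eq:smoothness-rv} at many base points $x_n(1+t_n)^k$ with the fixed small increment $t_n=1/z_n$. Your factor $1/z_n$ comes from the mesh, and Bernoulli's inequality turns the exponential weights into a geometric series with ratio $e^{-1}$.

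What each route buys:
\begin{itemize}
\item Your route is more elementary, since it needs no integral identity, and it uses the smoothness condition only for vanishing increments.
\item Because your shells run all the way up to $x_0$, you need a separate bound on the atoms above $x_0$. Your bound via $\sum_j p_{0,j}\le 1$ is fine, and it would also work in the features setting via $\sum_j p_{0,j}<\infty$.
\item The paper's version needs the condition only at $x_n$ and no separate control of the large atoms.
\end{itemize}

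You could remove your extra step by stopping the shells at $2x_n$ and bounding the remaining atoms by $\bar\nu_0(x_n)e^{-2z_n}$, as the paper does. Also, the truncation of the last block that you flag is not actually needed. Condition \eqref{eq:smoothness-rv} only requires the base point to satisfy $x<x_0$ (with $t\le 1$), so a shell whose right endpoint crosses $x_0$ is still covered.
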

\begin{proof}
Notice that, for each $j$ such that $p_{0,j} > x$, it holds $e^{-n p_{0,j}} = e^{-n x} - \int_x^{p_{0,j}} n e^{-n u} du.$~Summing over all such $j$ and using Tonelli's theorem, we obtain
\[
\Psi_n(x)
=
e^{-n x} \bar\nu_0(x) - n \int_x^1 e^{-n u} \bar\nu_0(u) du.
\]
Let us now change variable $u = x + y/n$. Since $\bar\nu_0(t) = 0$ for $t > 1$, the upper limit~can~be~extended to $+\infty$, obtaining
\begin{equation}
\label{eq:Psi-exact-revised}
\Psi_n(x)
=
e^{-n x} \int_0^\infty e^{-y} \left[ \bar\nu_0(x) - \bar\nu_0(x+y/n) \right] dy.
\end{equation}
Fix now a deterministic sequence $x_n \downarrow 0$ such that $z_n := n x_n \to \infty$.
Moreover, split the integral in \eqref{eq:Psi-exact-revised} at $y = z_n$, so that $\Psi_n(x_n) = I_{n,1} + I_{n,2}$,
where
\[
I_{n,1}
:=
e^{-z_n} \int_0^{z_n} e^{-y} \left[ \bar\nu_0(x_n) - \bar\nu_0(x_n+y/n) \right] dy
\]
and
\[
I_{n,2}
:=
e^{-z_n} \int_{z_n}^{\infty} e^{-y} \left[ \bar\nu_0(x_n) - \bar\nu_0(x_n+y/n) \right] dy.
\]

Let us now bound $I_{n,1}$. Notice that, for $0 \le y \le z_n$, we have
$
x_n + y/n
=
x_n ( 1 + y/z_n ),
$
and the multiplier $y/z_n$ belongs to $[0,1]$. For $n$ large enough we also have $x_n < x_0$. Therefore~under~the assumptions in item (2) of Theorem \ref{thm:freq-validity-species} it follows that
\[
0 \le \bar\nu_0(x_n) - \bar\nu_0 \left( x_n + y/n \right)
=
\bar\nu_0(x_n) - \bar\nu_0 \left( x_n \left( 1 + y/z_n \right) \right)
\le
C_I (y/z_n) \bar\nu_0(x_n).
\]
Substituting this bound into the definition of $I_{n,1}$ yields
$
I_{n,1}
\le
C_I e^{-z_n} (\bar\nu_0(x_n)/z_n) \int_0^{z_n} y e^{-y} dy.
$
Since $\int_0^{z_n} y e^{-y} dy \le \int_0^\infty y e^{-y} dy = 1,$
we obtain
\begin{equation}
\label{eq:I1-bound-revised}
I_{n,1}
=
O \left( \frac{e^{-z_n} \bar\nu_0(x_n)}{z_n} \right).
\end{equation}

We next bound $I_{n,2}$. By monotonicity of $\bar\nu_0$, it holds $0 \le \bar\nu_0(x_n) - \bar\nu_0(x_n+y/n) \le \bar\nu_0(x_n),$~and
hence
$
I_{n,2}
\le
e^{-z_n} \bar\nu_0(x_n) \int_{z_n}^\infty e^{-y} dy
=
e^{-2 z_n} \bar\nu_0(x_n).
$
Since $z_n \to \infty$, we have $e^{-z_n} = o(z_n^{-1})$, and therefore
\begin{equation}
\label{eq:I2-bound-revised}
I_{n,2}
=
o \left( \frac{e^{-z_n} \bar\nu_0(x_n)}{z_n} \right).
\end{equation}
Combining \eqref{eq:I1-bound-revised} and \eqref{eq:I2-bound-revised}, proves \eqref{eq:Psi-general-bound-revised}.
\end{proof}

\vspace{10pt}
\subsection{\large A.4 Proof of Theorem \ref{thm:freq-validity-features}}

Before proceeding with the proof, we observe that, 
for fixed $(\sigma,c)\in\Theta$, maximizing the IBP marginal likelihood analytically with respect to $\gamma>0$ gives $\hat\gamma_n(\sigma,c)=K_n/\varphi_n(\sigma,c)$~on~the~event~$K_n>0$, where
\begin{equation}\label{eq:maxprofile-gamma}
    \varphi_n(\sigma,c):=\frac{\Gamma(c+1)}{\Gamma(c+\sigma)}\sum_{i=1}^n \frac{\Gamma(c+\sigma+i-1)}{\Gamma(c+i)}.
\end{equation}
Then, the empirical Bayes estimator is obtained by maximizing the profiled likelihood over $(\sigma,c)\in\Theta$ and setting $\hat\gamma_n=K_n/\varphi_n(\hat\sigma_n,\hat c_n)$.

In the proof of Theorem \ref{thm:freq-validity-features} we also use the following lemma.
\begin{lemma}
\label{lem:ibp-asymptotics-theorem6}
Assume that $\mu_0$ satisfies \eqref{eq:reg_vary} and \eqref{eq:smoothness-rv}. Let $\Theta \subset \{(\sigma,c):0<\sigma<1,\ c>-\sigma\}$ be compact and such that, for every $(\sigma,c)\in\Theta$, it holds $0<\underline\sigma\le \sigma\le \overline\sigma<1$ and $c+\sigma\ge \eta$.

\begin{enumerate}
\item If $x_n\downarrow0$ and $n x_n\to\infty$, then
\[
\Psi_n(x_n):=\sum\nolimits_{j:p_{0,j}>x_n} e^{-n p_{0,j}}=O(e^{-nx_n}\bar\nu_0(x_n)/(nx_n)).
\]

\item Let $(\sigma,\gamma,c)$ be fixed, with $0<\sigma<1$, $\gamma>0$, and $c>-\sigma$. Then
\[
nU_n^{\operatorname{IBP}}=\sigma\log n-\left(\sigma+0.5\right)\log\log n+O(1).
\]
\item Let $(\hat\sigma_n,\hat c_n)$ be selected by maximum marginal likelihood over $\Theta$ after maximizing analytically over $\gamma$. Set $\hat\gamma_n=K_n/\varphi_n(\hat\sigma_n,\hat c_n)$, and let $\widehat U_n^{\operatorname{IBP}}:=\min\nolimits_{r\ge1}\left[(\hat\gamma_n/\alpha)B(r-\hat\sigma_n,\hat c_n+\hat\sigma_n+n)\right]^{1/r}.$
Then
\[
n\widehat U_n^{\operatorname{IBP}}=\sigma_0\log n+\log L(n)-\left(\hat\sigma_n+0.5\right)\log\log n+O_{\mathbb P_{\mu_0}}(1).
\]
\item If $(\hat\sigma_n,\hat c_n)\in\Theta$, then
\[
\log\varphi_n(\hat\sigma_n,\hat c_n)=\hat\sigma_n\log n+O_{\mathbb P_{\mu_0}}(1),
\]
where $\varphi_n$ is as in \eqref{eq:maxprofile-gamma}.
\end{enumerate}
\end{lemma}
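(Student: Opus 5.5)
The four items are handled separately, mirroring the species case. Item 1 needs no new work. $\Psi_n$ and the hypotheses \eqref{eq:reg_vary}--\eqref{eq:smoothness-rv} involve only $\mu_0$, so the argument of Lemma~\ref{lem:psi-expansion} carries over verbatim to the features setting. That argument consists of the exact representation \eqref{eq:Psi-exact-revised}, the split at $y=nx_n$, and the local smoothness bound on the first piece. The only change is that $\bar\nu_0(t)=0$ for $t>1$ still holds, because $p_{0,j}\le 1$ are inclusion probabilities.

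For item 2, I would follow the template of Lemma~\ref{lem:r-minimizer} with
\[
h_n(r):=\log\bigl(nu_n(r)\bigr),\qquad u_n(r)^r=\frac{\gamma}{\alpha}\,\frac{\Gamma(r-\sigma)\Gamma(c+\sigma+n)}{\Gamma(c+n+r)} .
\]
Set $A:=\log\bigl(\gamma\,\Gamma(c+\sigma+n)\,n^{-\sigma-c}\ldots\bigr)$. More precisely, Stirling on $\log\Gamma(c+\sigma+n)-\log\Gamma(c+n+r)=-(r-\sigma)\log n+O(r^2/n)$ gives, uniformly on $I_n(C)$,
\[
\log u_n(r)^r=\log(\gamma/\alpha)+(r-\sigma-0.5)\log r-r+\sigma\log n-r\log n+O(1).
\]
Hence
\[
h_n(r)=\log r-1+\frac{\sigma\log n-(\sigma+0.5)\log r+O(1)}{r}.
\]
This is exactly \eqref{eq:hn-expansion-revised} with $A_n$ replaced by the deterministic quantity $\sigma\log n+O(1)$. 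The rest of the argument repeats unchanged. First, $G_n(r)=r^2h_n'(r)$ is strictly increasing, since $G_n'(r)=r[\psi_1(r-\sigma)-\psi_1(c+n+r)]>0$. Second, the sign change occurs at $(\sigma\pm\varepsilon)\log n$, and $h_n(1)\asymp\sigma\log n$ dominates. These give $\hat r_n=\sigma\log n-(\sigma+0.5)\log\log n+O(1)$, and then $nU_n=\hat r_n e^{O(1/\log n)}=\hat r_n+O(1)$, as in \eqref{eq:Un-asymp}.

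For item 4, I would bound $\varphi_n$ directly. Wendel/Gautschi gives $\Gamma(c+\sigma+i-1)/\Gamma(c+i)=i^{\sigma-1}(1+O(1/i))$, uniformly over the compact $\Theta$, where $c+\sigma\ge\eta$ keeps the early terms bounded above and below. Summing,
\[
\sum_{i\le n}\frac{\Gamma(c+\sigma+i-1)}{\Gamma(c+i)}=\frac{n^\sigma}{\sigma}\bigl(1+o(1)\bigr)+O(1),
\]
uniformly, and the prefactor $\Gamma(c+1)/\Gamma(c+\sigma)$ is bounded away from $0$ and $\infty$ on $\Theta$. Since $\sigma\ge\underline\sigma>0$, taking logs yields $\hat\sigma_n\log n+O(1)$ uniformly in $\Theta$. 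This bound therefore holds at the random point $(\hat\sigma_n,\hat c_n)$ without needing consistency.

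Item 3 then combines items 2 and 4. Here $\log\hat\gamma_n=\log K_n-\log\varphi_n=\sigma_0\log n+\log L(n)-\hat\sigma_n\log n+O_{\mathbb P_{\mu_0}}(1)$, using $K_n\sim c_Kn^{\sigma_0}L(n)$ for features under regular variation (Gnedin et al.). In the expansion of item 2, the constant term $\sigma\log n+\log\gamma$ is therefore replaced by
\[
\hat\sigma_n\log n+\log\hat\gamma_n=\sigma_0\log n+\log L(n)+O_{\mathbb P}(1),
\]
so the $\hat\sigma_n\log n$ pieces cancel. The same minimization argument gives $\hat r_n=\sigma_0\log n+\log L(n)-(\hat\sigma_n+0.5)\log\log n+O_{\mathbb P}(1)$, and the stated expansion of $n\widehat U_n$ follows.

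I expect the main obstacle to be uniformity in item 3. The Stirling and trigamma estimates from item 2 must hold uniformly in $(\sigma,c)\in\Theta$, because they are evaluated at random parameters. I would handle this with compactness, $\sigma\le\overline\sigma<1$, and $c+\sigma\ge\eta$. These bound all $O(1)$ constants in the Gamma-ratio expansions and guarantee $r-\sigma\ge 1-\overline\sigma>0$.
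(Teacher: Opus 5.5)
Your proposal is correct and follows essentially the paper's route. Items 1--3 transplant the arguments of Lemmas~\ref{lem:psi-expansion} and~\ref{lem:r-minimizer}, with the IBP Beta-function bound replacing the PYP one and $\log\hat\gamma_n+\hat\sigma_n\log n=\sigma_0\log n+\log L(n)+O_{\mathbb P_{\mu_0}}(1)$ playing the role of $A_n$. Item 4 uses the same uniform Gamma-ratio expansion $i^{\sigma-1}(1+O(1/i))$ over the compact $\Theta$, and your explicit treatment of uniformity of the Stirling and digamma remainders at the random parameters spells out a step the paper leaves implicit.
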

\begin{proof}
The proofs of the first, second, and third statements follow closely the arguments~and~the~derivations considered in the proof of Lemmas \ref{lem:r-minimizer} and \ref{lem:psi-expansion}, with \eqref{eqn:UB_IBP} replacing \eqref{eqn:PD_UB_species}. We thus prove only the fourth statement.

By the uniform Gamma-ratio expansion on compact subsets of the admissible parameter space, we have
$
\Gamma(c+\sigma+i-1)/\Gamma(c+i)=i^{\sigma-1}\left(1+O\left(1/i\right)\right),
$
uniformly over $(\sigma,c)\in\Theta$. Therefore,
\[
\sum_{i=1}^n \frac{\Gamma(c+\sigma+i-1)}{\Gamma(c+i)}=\sum_{i=1}^n i^{\sigma-1}+O\left(\sum_{i=1}^n i^{\sigma-2}\right)
\]
uniformly over $\Theta$. Since $\sigma\le \overline\sigma<1$, the second sum is bounded uniformly in $n$ and $(\sigma,c)\in\Theta$. Moreover, since $\sigma\in[\underline\sigma,\overline\sigma]$ with $\underline\sigma>0$, it follows that
$
\sum_{i=1}^n i^{\sigma-1}=(n^\sigma/\sigma)(1+o(1))
$
uniformly~over $\Theta$. Hence
\[
\sum_{i=1}^n \frac{\Gamma(c+\sigma+i-1)}{\Gamma(c+i)}=\frac{n^\sigma}{\sigma}(1+o(1))
\]
uniformly over $\Theta$, and therefore
\[
\varphi_n(\sigma,c)=\frac{\Gamma(c+1)}{\sigma\Gamma(c+\sigma)}n^\sigma(1+o(1))
\]
uniformly over $\Theta$. Since $\Theta$ is compact and bounded away from $\sigma=0$ and $c+\sigma=0$, the function $(\sigma,c)\mapsto \Gamma(c+1)/(\sigma\Gamma(c+\sigma))$ is bounded above and below away from zero on $\Theta$. Taking logarithms gives $\log\varphi_n(\sigma,c)=\sigma\log n+O(1)$
uniformly over $\Theta$,  which entails the fourth statement.
\end{proof}

\vspace{5pt}
Leveraging the above results, we can now proceed with the proof of Theorem \ref{thm:freq-validity-features}. To this end let us first notice that the proof for $U_n^{\operatorname{MBP}}$ and $U_n^{\operatorname{IBP}}$ in the finite support case~for~$\mu_0$,~follows~directly from the same arguments and derivations as those considered for the proof of the first statement of Theorem \ref{thm:freq-validity-species}. Therefore, it suffices to cover the regularly-varying case. Starting from the fixed-hyperparameter claim under the IBP, recall that, by Lemma \ref{lem:ibp-asymptotics-theorem6},
\[
nU^{\operatorname{IBP}}_n=\sigma\log n-\left(\sigma+0.5\right)\log\log n+O(1).
\]
Since the theorem assumes $\sigma>\sigma_0$, choose any constant $a<\sigma_0+1$ and define $x_n:=(\sigma_0\log n+\log L(n)-a\log\log n)/n$. Since $L$ is slowly varying, $\log L(n)=o(\log n)$, so $x_n\downarrow0$ and $n x_n\to\infty$. Arguing as in the proof of Theorem \ref{thm:freq-validity-species}, it follows $\mathbb P_{\mu_0}(M_{\max}(X_n;\mu_0)>x_n)\to0$. Hence, observing that
\[
nU^{\operatorname{IBP}}_n-nx_n=(\sigma-\sigma_0)\log n-\log L(n)+\left(a-\sigma-0.5\right)\log\log n+O(1),
\]
as $\sigma>\sigma_0$ and $\log L(n)=o(\log n)$, we can conclude 
\[
\mathbb P_{\mu_0}(M_{\max}(X_n;\mu_0)>U^{\operatorname{IBP}}_n)\le \mathbb P_{\mu_0}(M_{\max}(X_n;\mu_0)>x_n)+\mathbb P_{\mu_0}(U^{\operatorname{IBP}}_n<x_n)\to0.
\]
Therefore $\mathbb P_{\mu_0}(M_{\max}(X_n;\mu_0)\le U^{\operatorname{IBP}}_n)\to1$, proving the fixed-hyperparameter claim.

We now turn to the empirical Bayes claim in Theorem \ref{thm:freq-validity-features}. By the EFPF in Section~\ref{app:hy_elicitation_features},~for~fixed $(\sigma,c)$ the dependence of the marginal likelihood on $\gamma$ is through the factor $\gamma^{K_n}e^{-\gamma\varphi_n(\sigma,c)}$.~Therefore, maximizing analytically over $\gamma$, yields
$
\hat\gamma_n=K_n/\varphi_n(\hat\sigma_n,\hat c_n),
$
where the regular~variation~assumption implies $K_n=c_K n^{\sigma_0}L(n)(1+o_{\mathbb P_{\mu_0}}(1))$ for some constant $c_K>0$. Combining this result with the fourth statement of Lemma \ref{lem:ibp-asymptotics-theorem6} gives
\[
\log\hat\gamma_n=(\sigma_0-\hat\sigma_n)\log n+\log L(n)+O_{\mathbb P_{\mu_0}}(1).
\]
This is the only difference with the fixed-parameter case, namely, the estimator of $\gamma$ absorbs the slowly varying factor $L(n)$. Recalling now  the third statement of Lemma \ref{lem:ibp-asymptotics-theorem6}, we have that
\[
n\widehat U^{\operatorname{IBP}}_n=\sigma_0\log n+\log L(n)-\left(\hat\sigma_n+0.5\right)\log\log n+O_{\mathbb P_{\mu_0}}(1).
\]
Moreover, by assumption, $\limsup_{n\to\infty}\hat\sigma_n<\sigma_0+1/2$ in $\mathbb P_{\mu_0}$-probability, which implies the existence of a deterministic constant $a$ such that $\mathbb P_{\mu_0}(\hat\sigma_n+0.5<a<\sigma_0+1)\to1$. Fix now such a constant $a$ and define $x_n:=(\sigma_0\log n+\log L(n)-a\log\log n)/n$. As in the fixed-hyperparameter case, $\mathbb P_{\mu_0}(M_{\max}(X_n;\mu_0)>x_n)\to0$. On the other hand,
\[
n\widehat U^{\operatorname{IBP}}_n-nx_n=\left(a-\hat\sigma_n-0.5\right)\log\log n+O_{\mathbb P_{\mu_0}}(1).
\]
By the choice of $a$, we have that $\mathbb P_{\mu_0}(\widehat U^{\operatorname{IBP}}_n\ge x_n)\to1$. Therefore
\[
\mathbb P_{\mu_0}(M_{\max}(X_n;\mu_0)>\widehat U^{\operatorname{IBP}}_n)\le \mathbb P_{\mu_0}(M_{\max}(X_n;\mu_0)>x_n)+\mathbb P_{\mu_0}(\widehat U^{\operatorname{IBP}}_n<x_n)\to0.
\]
Thus $\mathbb P_{\mu_0}(M_{\max}(X_n;\mu_0)\le \widehat U^{\operatorname{IBP}}_n)\to1$, \smash{proving the empirical Bayes claim and completing~the~proof.}

\vspace{20pt}
\subsection{\large A.5 Proof of Proposition \ref{prop:uniquness_minimizer}}

We first focus on the credible bounds induced by the FD, PYP, MBP and IBP priors (see Theorems~\ref{thm:Species_UB_FDP_PD}--\ref{thm:Features_upper_bounds}), whose underlying minimization problem can be treated under a common perspective.  In particular, as clarified in Table~\ref{tab:common_forms}, the functions of $r \geq 1$ minimized in these bounds admit the common form
\begin{equation}
\label{eqn:U_common_form}
    U_r(X_n)=\left[\widetilde C\,\frac{\Gamma(\widetilde a+r)}{\Gamma(\widetilde b+r)}\right]^{1/r},
\end{equation}
for suitable choices of \((\widetilde C,\widetilde a,\widetilde b)\) needed to recover each of the cases of interests. Consequently, it suffices to study $U_r(X_n)$  in \eqref{eqn:U_common_form}, with  $\widetilde C>0$, $\widetilde a>-1$, and $\widetilde b>\widetilde a\,$,  which hold for all priors under analysis (see Table \ref{tab:common_forms}). Consistent with the statement in Proposition \ref{prop:uniquness_minimizer}~and~its~subsequent discussion in the main article, we first study $U_r(X_n)$ with $r\ge1$ being real-valued,~and~then~transfer the results to the integer case. To this end, define
\[
G(r):=\log U_r(X_n)=F(r)/r,\, \qquad
F(r):=\log \widetilde C+\log\Gamma(\widetilde a+r)-\log\Gamma(\widetilde b+r)\,.
\]
Since \(\widetilde a>-1\) and \(r\ge 1\), the arguments \(\widetilde a+r\) and
\(\widetilde b+r\) are strictly positive, so \(F(r)\) and \(G(r)\)~are well defined and continuous. As first step, let us show that \(F(r)\) is strictly convex. Differentiating  \(F(r)\) twice, yields
\[
F''(r)=\psi_1(\widetilde a+r)-\psi_1(\widetilde b+r),
\]
where \(\psi_1\) denotes the trigamma function. Since \(\psi_1\) is strictly decreasing on \((0,\infty)\) and \(\widetilde b>\widetilde a\), it follows that $F''(r)>0$, for all $r\ge1$, and therefore  \(F(r)\) is strictly convex on \([1,\infty)\). 

Next, we study the behavior at infinity. By the Stirling's formula, we have, for $r\to\infty$, that
$
\log \Gamma(r+x)=\left(r+x-0.5\right)\log r-r+O(1),
$
for each fixed \(x\). Therefore
\[
F(r)=\log \widetilde C + (\widetilde a-\widetilde b)\log r + O(1)
=
-(\widetilde b-\widetilde a)\log r + O(1),
\]
and dividing by \(r\), yields
\[
G(r)= -(\widetilde b-\widetilde a)\frac{\log r}{r}+O(r^{-1})\rightarrow 0
\qquad\text{as }r\to\infty.
\]
Since \(\widetilde b-\widetilde a>0\), we in fact have \(G(r)<0\) for all sufficiently large \(r\).

Let us now prove existence of the minimizer.  To this end, choose \(r_0\) large enough that \(G(r_0)<0\).
Since \(G(r)\to 0\), there exists an \(R>r_0\) such that
\[
G(r)>G(r_0)/2>G(r_0)
\qquad\text{for all }r\ge R.
\]
Hence no global minimizer can lie in \([R,\infty)\), and any global minimizer must belong to the compact interval \([1,R]\). Thus, by continuity of \(G(r)\), a global minimizer exists.

\begin{table}[t]
    \centering
    \begin{tabular}{c c c c}
    \hline
    \rule{0pt}{4ex}
    Model & $\widetilde C$ & $\widetilde a$ & $\widetilde b$ \\
    \hline
        \rule{0pt}{8ex}
    FD
    &
    $\dfrac{M-K_n}{\alpha}\,
    \dfrac{\Gamma\!\bigl(n+\gamma(M-K_n)\bigr)}{\Gamma(\gamma)}$
    &
    $\gamma$
    &
    $n+\gamma(M-K_n)$
    \\
    [2em]
    PYP
    &
    $\dfrac{\theta+K_n\sigma}{\alpha}\,
    \dfrac{\Gamma(n+\theta)}{\Gamma(1-\sigma)}$
    &
    $-\sigma$
    &
    $n+\theta$
    \\
    [2em]
    MBP
    &
    $\dfrac{\mathbb E[M^\star]}{\alpha}\,
    \dfrac{\Gamma(a+b+n)}{\Gamma(a)}$
    &
    $a$
    &
    $a+b+n$
    \\
    [2em]
    IBP
    &
    $\dfrac{\gamma}{\alpha}\,\Gamma(c+n+\sigma)$
    &
    $-\sigma$
    &
    $c+n $\\
    [2em]
    \hline
    \end{tabular}
    \vspace{5pt}
    \caption{Parameter choices yielding the common representation in \eqref{eqn:U_common_form}.}
    \label{tab:common_forms}
    \vspace{5pt}
\end{table}

It remains to prove uniqueness. Assume by contradiction that \(x<y\) are two distinct global minimizers of \(G(r)\), and let
\[
m:=G(x)=G(y)=\min\nolimits_{r\ge 1}G(r).
\]
Then
\[
F(x)=mx,\qquad F(y)=my.
\]
Fix \(\lambda\in(0,1)\) and set \(z=\lambda x+(1-\lambda)y\). By the strict convexity of \(F(r)\), we have
\[
F(z)<\lambda F(x)+(1-\lambda)F(y)
= \lambda mx + (1-\lambda)my
= mz.
\]
which implies that
$
G(z)=F(z)/z<m,
$
thereby contradicting the definition of $m$~as~the~minimum~value of the function. Hence the global minimizer is unique.

\begin{remark}[On the conditions \(\widetilde a\) and \(\widetilde b\)]
The condition \(\widetilde a>-1\) ensures that \(\Gamma(\widetilde a+r)\) is well defined for every \(r\ge 1\), while \(\widetilde b>\widetilde a\) is exactly the sign condition making \(F(r)\) strictly convex~and~underlies the uniqueness of the minimizer. If \(\widetilde b=\widetilde a\), then
\[
U_r(X^n)=\widetilde C^{1/r},
\]
and the problem becomes degenerate. If \(\widetilde b<\widetilde a\), then \(F(r)\) is strictly concave and the previous proof no longer yields a unique minimum. Finally, if $\widetilde C = 0$, then $U_r(X^n)=0$ for every $r \ge 1$. Hence every $r \ge 1$ is a minimizer. The latter case may arise either when $M=K_n$ or $\mean{M^\star}=0$.
\end{remark}

Let us now conclude the proof by covering the FDP case, which cannot be framed and studied under the common form in \eqref{eqn:U_common_form}. Writing explicitly the expectation in~\eqref{eqn:UB_FDP}, we have 
\[
U_r^{\mathrm{FDP}}(X_n)
=
\left[
\frac{1}{\alpha}\,
(\gamma)_r
\sum_{m=1}^\infty \frac{m\,q_{M^\star}(m)}{(n+\gamma(K_n+m))_r}
\right]^{1/r}\,,
\]
for $r\in[1,\infty)$. Define, for brevity,
$
A_m:=n+\gamma(K_n+m)$, $
S(r):=\sum_{m=1}^\infty m\,q_{M^\star}(m)/(A_m)_r,$ and
$U_r^{\mathrm{FDP}}(X_n)=[(1/\alpha)(\gamma)_r\,S(r)]^{1/r},
$
and notice that $A_m>\gamma$ for each $m\ge 1$. Moreover, let
\[
H(r):=\bigl(U_r^{\mathrm{FDP}}(X_n)\bigr)^r=\frac{1}{\alpha}(\gamma)_r\,S(r)
=
\sum_{m=1}^\infty c_m e^{f_m(r)}\,, \quad 
\mbox{and} \quad
F(r):=\log H(r)= r \log U_r^{\mathrm{FDP}}(X_n)
\]
where
$
c_m:=m\,q_{M^\star}(m)\Gamma(A_m)/(\alpha\,\Gamma(\gamma)),
$
and
$
f_m(r):=\log\Gamma(\gamma+r)-\log\Gamma(A_m+r).
$
The latter, $f_m(r)$, is a strictly convex function on $[1,\infty)$ since
\[
f_m''(r)=\psi_1(\gamma+r)-\psi_1(A_m+r)>0\,.
\]
Leveraging the previous result, we next show the strict convexity on \([1,\infty)\) of $F(r)$. To this end, 
let \(z=\lambda x+(1-\lambda)y\) be a convex combination, for \(x\neq y\), \(\lambda\in(0,1)\), 
it follows that
\[
c_m e^{f_m(z)}
<
\bigl(c_m e^{f_m(x)}\bigr)^\lambda
\bigl(c_m e^{f_m(y)}\bigr)^{1-\lambda}.
\]
Summing over \(m\) and applying Hölder's inequality with exponents $1/\lambda$ and $1/(1-\lambda)$, we get
\[
H(z)
<
\sum\nolimits_{m=1}^\infty
\bigl(c_m e^{f_m(x)}\bigr)^\lambda
\bigl(c_m e^{f_m(y)}\bigr)^{1-\lambda}
\le
H(x)^\lambda H(y)^{1-\lambda}.
\]
Taking the logarithms yields $F(z)<\lambda F(x)+(1-\lambda)F(y)$, and hence, \(F(r)\) is strictly convex.

By defining
$$
G(r):=\log U_r^{\mathrm{FDP}}(X_n) = F(r)/r\,,
$$
the proof is completed using the same arguments as those employed for the previous bounds. In particular, we first prove the existence of a minimizer by studying the asymptotic behaviour of $G(r)$ and then leverage the strict convexity of $F(r)$ to prove its uniqueness. More precisly, we are only left to show that $G(r)$ is strictly negative for sufficiently large values of $r$. To this end, let $ m_0:=\min\{m\ge 1{:}\ q_{M^\star}(m)>0\}$ and $A_{m_0}=n+\gamma(K_n+m_0)$.~This~is~well~defined~since~\(q\)~is~a~probability mass function on the positive integers and its support is nonempty. Without any loss of generality, we assume that $q_{M^\star}(1)>0$ so that $m_0 = 1$. 
We first show that the summation $S(r)$~is well approximated by its first term, when $r\to\infty$. Namely,
\[
S(r)\sim \frac{q_{M^\star}(1)}{(A_1)_r},
\qquad\text{as }r\to\infty.
\]
Indeed,
\[
\frac{S(r)}{q_{M^\star}(1)/(A_1)_r}
=
1+
\sum_{m>1}
\frac{m\,q_{M^\star}(m)}{ q_{M^\star}(1)}
\rho_m(r)\,,
\]
where, for \(m>1\), we defined $\rho_m(r):=(A_1)_r/(A_m)_r$. The latter is a strictly decreasing function~on \([1,\infty)\) since
\[
\frac{d}{dr}\log \rho_m(r)=\psi(A_1+r)-\psi(A_m+r)<0,
\]
where the  inequality holds because \(A_m>A_1\). Therefore
\[
0\le \rho_m(r)\le \rho_m(1)=A_1/A_m.
\]
Also, by the standard Gamma-ratio asymptotics, we have that
\[
\rho_m(r)\sim C_m\,r^{A_1-A_m}\to 0
\qquad\text{as }r\to\infty,
\]
for some constant \(C_m>0\). Hence, for \(m>1\),
\[
0\le
\frac{m\,q_{M^\star}(m)}{q_{M^\star}(1)}\rho_m(r)
\le
\frac{m\,q_{M^\star}(m)}{q_{M^\star}(1)}\frac{A_1}{A_m}
\le
\frac{A_1}{\gamma\,q_{M^\star}(1)}\,q_{M^\star}(m),
\]
where the final inequality follows by noticing that $m/A_m < 1/\gamma$.
Then, the dominating sequence on the right hand side is summable because \(\sum_m q_{M^\star}(m)=1\). By dominated convergence,
\[
\sum_{m>1}
\frac{m\,q_{M^\star}(m)}{q_{M^\star}(1)}\rho_m(r)\to 0,
\]
which proves the claimed asymptotic equivalence. Consequently,
\[
H(r)
=
\frac{1}{\alpha}(\gamma)_r S(r)
\sim
\frac{q_{M^\star}(1)}{\alpha}\,
\frac{(\gamma)_r}{(A_1)_r}.
\]
Taking the logarithms, we have that
\[
F(r)=\log H(r)
=
\log\!\left(q_{M^\star}(1)/\alpha\right)
+\log(\gamma)_r
-\log(A_1)_r
+o(1),
\]
and using again the Gamma-ratio asymptotics,
\[
F(r)=-(A_1-\gamma)\log r+O(1)\,,
\]
with $A_1-\gamma = n+\gamma K_n>0$. Hence, we obtain
\[
\log U_r^{\mathrm{FDP}}(X^n)
=
\frac{F(r)}{r}
=
-(A_1-\gamma)\frac{\log r}{r}+O(r^{-1}).
\]
The proof is completed as we showed that $\log U_r^{\mathrm{FDP}}(X_n)$ approaches zero from below. Existence and uniqueness can now be proved as for the previous bounds.

\begin{remark}[Restriction to integer \(r\)]
The previous proof concerns the case in which $r$ is continuous and takes values in $[1,\infty)$. In contrast,  the credibile bounds within Theorems~\ref{thm:Species_UB_FDP_PD}--\ref{thm:Features_upper_bounds},~are~defined~over positive integer orders $r$. Under this restriction, the minimum is still attained in all cases since, as shown above, \(U_r(X^n)\to 1\) and \(U_r(X^n)<1\) for all sufficiently large \(r\). Hence only finitely many integer values of $r \geq 1$ need to be considered in the search for the minimum. Uniqueness, however, requires additional care, since two consecutive integers may attain the same minimum value even when the continuous minimizer is unique. Nonetheless, this aspect poses no difficulty~within~our setting, as the credible bounds we derive depend only on the minimum value of $U_r(X^n)$ and not~on the particular minimizing integer $r^*$.
\end{remark}

\vspace{10pt}
\section{\large B. Limiting Behavior of the Credible Upper Bounds }
\label{app:UB_limits}
In this Section, we fix $r>1$ and compute the limits of the Bayesian credible upper bounds, as reported in Section \ref{section:species_results}.

\vspace{5pt}
\subsection{\large B.1 Species setting}
\label{app:UB_limits_species}

Let us first focus on $U^{\operatorname{FD}}$ in Theorem~\ref{thm:Species_UB_FDP_PD} and compute the limits when $\gamma\to 0$ and $\gamma\to\infty$.~For~the $\gamma\to 0$ case, the result follows directly from the continuity of the rising factorial. Specifically
\begin{equation*}
    \begin{split}
    \lim_{\gamma \to 0}  \left[ \,
    \frac{1}{\alpha}\,
    \frac{(M-K_n) \, (\gamma)_r}{(n + \gamma M)_r}
    \right]^{1/r} \, = \, 
    \lim_{\gamma \to 0}  \left[ \,
    \frac{1}{\alpha}\,
    \frac{(M-K_n) \, (\gamma)_r}{(n)_r}
    \right]^{1/r} \, = \, 0.
    \end{split}
\end{equation*}
For $\gamma\to\infty$ we exploit, instead, the following approximation for any fixed $r$: $(x)_r=\prod_{j=0}^{r-1}(x+j)\sim x^r$. This yields $(\gamma)_r \sim \gamma^r$ and $(n+\gamma M)_r \sim (\gamma M)^r$, which implies 
\begin{equation*}
    \begin{split}
    \lim_{\gamma \to \infty}  \left[ \,
    \frac{1}{\alpha}\,
    \frac{(M-K_n) \, (\gamma)_r}{(n + \gamma M)_r}
    \right]^{1/r} \, = \, 
    \lim_{\gamma \to \infty}  
    \left[ \,
    \frac{1}{\alpha}\,
    \frac{(M-K_n) \, \gamma^r}{(\gamma M)^r}
    \right]^{1/r} \, = \, \frac{1}{M}\left[\frac{M-K_n}{\alpha}\right]^{1/r}.
    \end{split}
\end{equation*}

We now turn our attention to $U^{\pityor}$ in \eqref{eqn:PD_UB_species}. Once again, the limiting behavior when $\sigma\to 1$ follows from the continuity of the rising factorial operator. Instead, fixing $\sigma\in(0,1)$ and letting $\theta\to0$, we have that, by continuity, $\theta+K_n\sigma \to K_n\sigma$, and also
$ (n+\theta)_r = \prod_{j=0}^{r-1}(n+\theta+j) \to \prod_{j=0}^{r-1}(n+j) = (n)_r$.
Therefore
\begin{equation*}
    \label{eqn:species_UB_PD_lim0}
    \lim_{\theta \to 0} 
    \left[ \frac{1}{\alpha}\, \frac{ (\theta + K_n\sigma)\, (1-\sigma)_{r-1}} {(n + \theta)_{r}} \right]^{1/r}
    \, = \,
    \left[\frac{K_n}{\alpha (n)_r} \, \sigma \, (1-\sigma)_{r-1}\right]^{1/r}\,.
\end{equation*}

On the other hand, for $\theta\to\infty$, we use the same approximation as the one considered for the FD case when $\gamma \to \infty$  to get
\begin{equation*}
    \label{eqn:species_UB_PD_liminf}
    \lim_{\theta \to \infty} 
    \left[ \frac{1}{\alpha}\, \frac{ (\theta + K_n\sigma)\, (1-\sigma)_{r-1}} {(n + \theta)_{r}} \right]^{1/r}
    \, = \,
    \lim_{\theta \to \infty} \left[
    \left(\frac{(1-\sigma)_{r-1}}{\alpha}\right)^{1/r} \, \theta^{-(1-\frac{1}{r})}
    \right] = 0
    \,,
\end{equation*}
thereby concluding the derivations for the species setting.
\vspace{5pt}

\subsection{\large B.2 Features setting}
\label{app:UB_limits_fea}
Consider  first $U^{\operatorname{MBP}}$ in \eqref{eqn:UB_GeneralMBP} parametrized through the expected value of the 
Beta~prior,~as~clarified in  Section \ref{section:species_results}. For any order $r>1$, 
its limiting behavior for $\omega\to0$ and $\omega\to1$, keeping fixed~$\eta>1$, follows from the continuity of the rising factorial operator. This implies
\begin{equation*}
    \begin{split}
        &\lim_{\omega \to 0} \left[\frac{\mean{M^\star}}{\alpha}\,\frac{\left(\omega(\eta-1)\right)_r}{\left(\eta-1+n\right)_r}\right]^{1/r} = 0\,,\\
        &\lim_{\omega \to 1} \left[\frac{\mean{M^\star}}{\alpha}\,\frac{\left(\omega(\eta-1)\right)_r}{\left(\eta-1+n\right)_r}\right]^{1/r} = 
        \left[\frac{\mean{M^\star}}{\alpha}\,\frac{\left(\eta-1\right)_r}{\left(\eta-1+n\right)_r}\right]^{1/r}
        \,.
    \end{split}
\end{equation*}
Similarly, its monotonicity with respect to $\omega$ is due to the one of the rising factorial.
More~interestingly, we show that it is  also strictly increasing with respect to $\eta$ for at least any fixed value~of $r$ smaller than $1+n\,\omega/(1-\omega)$.

Since the \(1/r\)-power and the positive constant \(\mean{M^\star}/\alpha\) preserve monotonicity, it is enough to study
$h(x)=(\omega x)_r/(x+n)_r$, for $x=\eta-1>0$.
Using the product representation of the rising factorial and applying the $\log$ transformation, we obtain
$$
\log h(x) =\sum\nolimits_{j=0}^{r-1}\log(\omega x+j) - \log(x+n+j)\,,
\quad
\frac{d}{dx}\log h(x) = \sum\nolimits_{j=0}^{r-1}\left( \frac{\omega}{\omega x+j}  - \frac{1}{x+n+j} \right).
$$
For \(j=0\), the first term is $1/x-1/(x+n)$, which is always positive. 
Instead, for \(j\ge 1\),
\[
\frac{\omega}{\omega x+j} \ge \frac{1}{x+n+j} \quad 
\iff 
\omega n\ge (1-\omega)j.
\]
Hence every summand is nonnegative whenever
\[
r\le 1+\omega n/(1-\omega).
\]
Under this condition, $\log h(x)$ is strictly increasing, which proves our result.
Its behavior remains unclear for larger $r$, but such a regime is not expected to occur in practice as $r$ is usually much smaller than $n$, typically even smaller than $\log(n)$.

Turning our attention to $U^{\operatorname{IBP}}$ in \eqref{eqn:UB_IBP}, we prove that it is monotonically decreasing~with~respect to $c_n=c+n$ and increasing with respect to $\sigma$ for a vast range of $r$ values. 
As before, since the \(1/r\)-power preserves monotonicity, it is enough to study
$B(r-\sigma,c_n+\sigma)$. Then, taking the derivative of $\log B(r-\sigma,c_n+\sigma)$ with respect to $c_n$ we have that
\[
\frac{\partial}{\partial c_n}
\log B(r-\sigma,c_n+\sigma)
=
\psi(c_n+\sigma)-\psi(r+c_n) < 0\,,
\]
where \(\psi(\cdot)\) is the digamma function, which is increasing on \((0,\infty)\), yielding to a negative derivative since
\(r>\sigma\). Thus, the upper bound is decreasing in \(c_n\). By taking~the~partial~derivative~with~respect to $\sigma$, we instead get that 
\[
\frac{\partial}{\partial \sigma}
\log B(r-\sigma,c_n+\sigma)
=
-\psi(r-\sigma)+\psi(c_n+\sigma),
\]
which is positive whenever
$
c_n+\sigma>r-\sigma
$, or, alternatively, $\sigma>(r-c_n)/2$.

\vspace{10pt}

\section{\large C. Further Details on Hyperparameter Elicitation}
\label{app:hy_elicitation}
Recalling Section~\ref{sec:comp_and_infer}, the hyperparameters $\bm{\phi}$ are estimated via an empirical Bayes strategy solving the optimization problem in \eqref{eqn:species_EB}. This is equivalent to computing the MAP estimator under~the Bayesian model having likelihood  $\Pi(n_1,\ldots,n_{K_n}\mid \bm{\phi})$ and prior $\Lcr(\bm{\phi};v)$. Sections \ref{app:hy_elicitation_species} and \ref{app:hy_elicitation_features} provide details on both quantities in the species and features settings, respectively.

\vspace{5pt}
\subsection{\large C.1 The species setting}
\label{app:hy_elicitation_species}
Focusing first on the likelihood, we note that the random probability measure $\mu$ can be integrated out from the hierarchical model in \eqref{eqn:Definetti_model}, yielding the marginal distribution of the observations. By \cite{pitman1996}, such a distribution factorizes into two independent components. One involving~the base probability measure $P_0$ evaluated at the observed species labels, and the other referring to the observed species frequencies. As clarified within Section~\ref{sec:species}, the former is not of interest~here since the precise species labels are irrelevant  in our analysis, serving merely as identifiers.  Consequently, the main object of interest is the distribution of the species abundances, commonly referred to as the Exchangeable Partition Probability Function (EPPF). Such a function must be symmetric in its arguments \citep{pitman1996}, and is commonly denoted by $\Pi(n_1,\ldots,n_{K_n}\mid \bm{\phi})$. Below we report the specific forms of the EPPFs for the prior choices of interest.

Under the finite Dirichlet prior, the marginal distribution for the complete vector~of~species prevalences is the one of a Dirichlet--Multinomial. Since the alphabet size $M$ is known in this~setting, the distribution  is evaluated on the full frequency vector $(n_1, \ldots, n_M)$, including zero counts, unlike a proper EPPF, which depends only on the counts of the observed species. Specifically
\begin{equation*}
    \label{eqn:species_Dir_eppf}
    \Pi^{(\operatorname{FD})}(n_1,\ldots,n_M\mid \bm{\phi}) \, = \, 
    \frac{n!}{\prod_{j=1}^M n_j!}\,\frac{1}{(\gamma M)_n}\,\prod_{j=1}^M \, (\gamma)_{n_j}\,,    
\end{equation*}
where $\sum_{j=1}^M n_j=n$, whereas $(x)_n=x(x+1)\ldots(x+n-1)$ is the rising factorial of order $n$.

As shown in \citet{deblasi2015} and \citet{argiento2022annals}, under the FDP prior the corresponding EPPF is instead
\begin{equation*}
    \label{eqn:species_FDP_eppf}
    \Pi^{(\operatorname{FDP})}(n_1,\ldots,n_{K_n}\mid \bm{\phi}) \, = \, 
    V_{n}^{K_n}\,\prod_{j=1}^{K_n} \, (\gamma)_{n_j}\,,   
    \quad\text{where}\quad
    V_{n}^{K_n} = \sum_{m=K_n}^\infty \frac{m!}{(m-K_n)!}\,q_M(m)\,\frac{1}{(\gamma m)_n}.
\end{equation*}

Finally, by the classical results of \citet{PitmanYor97Annals}, the EPPF induced by the Pitman--Yor (PYP) process is
\begin{equation*}
    \label{eqn:species_PD_eppf}
    \Pi^{(\pityor)}(n_1,\ldots,n_{K_n}\mid \bm{\phi}) \,=\,
    \begin{cases}
        \displaystyle
        \frac{\sigma^{K_n-1}}{(\theta+1)_{n-1}}\,
        \frac{\Gamma(\theta/\sigma + K_n)}{\Gamma(\theta/\sigma + 1)}\,
        \prod_{j=1}^{K_n} (1-\sigma)_{n_j-1}\,,
        & \text{if } \sigma > 0, \\[1em]
        \displaystyle
        \frac{\theta^{K_n}}{(\theta)_{n}}\,
        \prod_{j=1}^{K_n} (n_j-1)!\,,
        & \text{if } \sigma = 0\,,
    \end{cases}
\end{equation*}
For $\sigma=0$, the $\pityor$ degenerates to the Dirichlet process \citep{ferguson73} and the corresponding EPPF coincides with the Ewens sampling formula \citep{Ewens1972}.

We now turn our attention to the specific choices of $\Lcr(\bm{\phi};v)$, which depend on the prior regime under consideration. Under the finite Dirichlet prior, we have that $\phi=\gamma$, which motivates the standard hyperprior
\[
\gamma \sim \operatorname{Gamma}(1/v,\,1/v).
\]
This choice implies $E[\gamma]=1$ and $\operatorname{Var}(\gamma)=v$. Consequently, larger values of $v$ correspond~to~more diffuse priors and therefore to a weaker penalization in the MAP optimization.

Under the FDP prior, where $q_M$ denotes the probability mass function of a one-shifted Poisson with rate $\Lambda$, the hyperparameters are $\bm{\phi}=(\Lambda,\gamma)$. For $\gamma$ we consider the same  hyperprior as~the~one employed for the finite Dirichlet case, whereas for $\Lambda$ we leverage default prior elicitations for the rate of a Poisson and assume 
\[
\Lambda \sim \operatorname{Gamma}(K_n^2/v,\,K_n/v).
\]
The above choice centers $\Lambda$ at $K_n$ and let $v$ regulate its variability since $\operatorname{Var}(\Lambda)=v$. 

Finally, as for the FDP, also under the $\pityor$ prior $\bm{\phi}$ is a bivariate random vector $\bm{\phi}=(\sigma,\theta)$,~for which we again assume independent components. Specifically, leveraging classical hyperpriors for the $\pityor$ parameters, we consider
\[
\sigma \sim \operatorname{Beta}(a_{\sigma},b_{\sigma}),
\qquad
\theta \sim \operatorname{Gamma}(a_v,b_v),
\]
where $a_{\sigma}$ and $b_{\sigma}$ are set equal to one, yielding a uniform distribution on $(0,1)$, while $a_v$~and~$b_v$~are chosen so that $\operatorname{Var}(\theta)=v$ and the prior mean of $K_n$, evaluated at $\mathbb E[\sigma]$~and~$\mathbb E[\theta]$,~matches~the~observed number of distinct species.

\vspace{10pt}
\subsection{\large C.2 The feature setting}
\label{app:hy_elicitation_features}
As discussed in Section~\ref{sec:comp_and_infer}, unlike the species setting, we did not find it necessary to introduce any penalization term in the features case. Therefore, in this context, hyperparameter estimation~proceeds through the maximization of the log-marginal likelihood only. As shown by \citet{Broderick13}, such an optimization closely parallels the one we described in the previous section for the species setting. Specifically, the marginal likelihood factorizes again into the product of a term depending only on the labels of the alphabet symbols --- which we discard here as well --- and~a function of the observed feature frequencies, also known as the Exchangeable~Feature~Probability Function (EFPF). The EFPF is the analogue, for feature allocation models, of the Exchangeable Partition Probability Function (EPPF) from the species setting, and it is symmetric~in~its~arguments. The prior constructions we consider belong to the class of product feature allocation models introduced by \citet{Ghilotti2026}. Therefore, the corresponding EFPFs can be derived inheriting the results from this general framework.

Under the finite Beta prior the alphabet size $M$ is known. Therefore, as for the FD case in the species setting, also in this context  the predictive distribution for the complete vector~of~features frequencies is evaluated on the entire configuration $(n_1, \ldots, n_M)$, including zero observed  counts. This departs from a proper EFPF, which is a function of the counts of observed features only,~and yields
\begin{equation*}
    \label{eqn:fea_FB_efpf}
    \Pi^{(\operatorname{FB})}(n_1,\ldots,n_M) \, = \, 
    \frac{1}{B(a,b)^M}
    \,\prod_{j=1}^M \, \binom{n}{n_j}\,B(n_j+a,\,n-n_j+b)\,,    
\end{equation*}
where $B(a,b)$ is the Beta function.

Under a mixed Beta processes with a Negative Binomial NegBin($m,q$)  prior for the number~of labels, the corresponding EFPF is instead
\begin{equation*}
    \label{eqn:fea_MBP_efpf}
    \begin{split}
        &\Pi^{(\operatorname{N-MBP})}(n_1,\ldots,n_{K_n}) \\
        &\quad = \, 
        \frac{\Gamma(K_n+m)}{\Gamma(K_n+1)\Gamma(m)}\,
        \frac{(1-q)^{m}}{\left( 1 - q\kappa_n\right)^{K_n+m}}
        \left[ \frac{q}{(a+b)_n\Gamma(a)\Gamma(b)}\right]^{K_n}\,
        \prod_{j=1}^{K_n}\left\{ \Gamma(n_j+a)\,\Gamma(n-n_j+b) \right\}\,,
    \end{split}
\end{equation*}
where we recall that $\kappa_n = (b)_n/(a+b)_n$. We reduce the optimization space by fixing $q = 1 - 1/v$ for some prespecified positive value of $v$, usually $v = 500$ as in \cite{Ghilotti2026}.

Finally, standard results for the three-parameter Indian Buffet Process \citep[see, e.g.,][]{Ghilotti2026} yield the EFPF
\begin{equation*}
    \label{eqn:fea_IBP_efpf}
    \begin{split}
        \Pi^{(\operatorname{IBP})}(n_1,\ldots,n_{K_n}) \, = \,
        \frac{\gamma^{K_n}\,e^{ - \gamma \varphi_n}}{K_n!}\,
        \prod_{j=1}^{K_n}\frac{B(n_j-\sigma, \, n-n_j+c+\sigma)}{B(1-\sigma,c-\sigma)}\,,
    \end{split}
\end{equation*}
where 
$
\varphi_n = (\Gamma(c+1)/\Gamma(c+\sigma))\sum_{i=1}^n (\Gamma(c+\sigma+i-1)/\Gamma(c+i)).
$

\bibliographystyle{abbrvnat}
\bibliography{references.bib}

\end{document}